\renewcommand{\epsilon}{\varepsilon}
\newcommand\N{\mathbb N}
\newcommand\R{\mathbb R}
\newtheorem{theorem}{Theorem}[section]
\newtheorem{definition}{Definition}[section]
\newtheorem{lemma}[theorem]{Lemma}
\newtheorem{remark}[theorem]{Remark}
\newtheorem{proposition}[theorem]{Proposition}
\newtheorem{corollary}[theorem]{Corollary}
\newtheorem{example}[theorem]{Example}
\newenvironment{proof}{\begin{trivlist} \item {\bf Proof:~~}}
  {\qed\end{trivlist}}
\def\FullBox{\hbox{\vrule width 6pt height 6pt depth 0pt}}
\def\qed{\ifmmode\qquad\FullBox\else{\unskip\nobreak\hfil
\penalty50\hskip1em\null\nobreak\hfil\FullBox
\parfillskip=0pt\finalhyphendemerits=0\endgraf}\fi}
\newcommand{\C}{\mathbb{C}}
\renewcommand{\R}{\mathbb{R}}
\renewcommand{\N}{\mathbb{N}}
\DeclareMathOperator{\Tr}{Tr}
\DeclareMathOperator{\diag}{diag}
\DeclareMathOperator{\Stab}{Stab}
\DeclareMathOperator{\Ad}{Ad}
\DeclareMathOperator{\ad}{ad}
\DeclareMathOperator{\hull}{hull}
\DeclareMathOperator{\eig}{eig}
\def\SU{{\rm SU}}\def\U{{\rm U}}\def\SO{{\rm SO}}\def\USp{{\rm USp}}
\def\GL{{\rm GL}}\def\SL{{\rm SL}}\def\Sp{{\rm Sp}}
\title{Optimization and Sampling Under Continuous Symmetry: \\ Examples and Lie Theory  }
\author{Jonathan Leake and Nisheeth K. Vishnoi }
\begin{document}

\maketitle

\begin{abstract}
In the last few years, the notion of symmetry has provided a powerful and essential lens to view several optimization or sampling problems that arise in areas such as theoretical computer science, statistics, machine learning, quantum inference, and privacy.
Here, we 
present two examples of nonconvex problems in  optimization and sampling where continuous symmetries play -- implicitly or explicitly -- a key role in
the development of
efficient algorithms. 
These examples rely on deep and hidden connections between nonconvex symmetric manifolds and convex polytopes, and are heavily generalizable.
To formulate and understand these generalizations, we then present an introduction to Lie theory -- an indispensable mathematical toolkit for capturing and working with continuous symmetries. 
We first present the basics of Lie groups, Lie algebras, and the adjoint actions associated with them, and we also mention the classification theorem for Lie algebras.
Subsequently, we present Kostant's convexity theorem and show how it allows us to reduce linear optimization problems over orbits of Lie groups to linear optimization problems over polytopes.
Finally, we present the Harish-Chandra and the  Harish-Chandra--Itzykson--Zuber (HCIZ) formulas, which convert partition functions (integrals) over Lie groups into sums over the corresponding (discrete) Weyl groups, enabling efficient sampling algorithms.
\end{abstract}

\newpage

\tableofcontents

\newpage

\section{Introduction}

In the words of Hermann Weyl: 
``{\em A thing is symmetrical if there is something you can do to it so that after you have finished doing it, it looks the same as before.}''

\smallskip
Symmetries can be discrete or continuous.
For instance, the polynomial $f(x_1,\ldots, x_n):=x_1^2 + \cdots + x_n^2$ is symmetric under any permutation of its variables.
The associated set of symmetries is the symmetric group $S_n$.
The Boolean cube $\{0,1\}^n$ also remains invariant under the permutation of coordinates by $S_n$.
A square remains unchanged if we rotate it by multiples of $90$ degree around its center.
The integer lattice $\mathbb{Z}^n$ remains invariant under translation by any vector with integer entries.
These are examples where the set of actions that preserve the object in question is discrete -- finite or infinite.
An infinite line, e.g., $\mathbb{R}$  can be translated by any real number and it remains the same.
A sphere  can be rotated arbitrarily and it remains the same.
An $n \times n$ complex matrix $A$ can be conjugated by a unitary $U$ as $UAU^*$ and its eigenvalues remain the same.
In all of these cases, the set of actions that preserves the object is continuous and hence, infinite.

Symmetries, both discrete and continuous, play a central role in physics and mathematics.
In physics, symmetries have long served as a guiding principle to search for the laws of nature \cite{Gross14256}.
Physical scenarios are encoded using numbers, vectors, and matrices that depend on the choice of a ``reference frame,'' and we expect the laws of nature  to be invariant to changes in the reference frame. 
For instance, via the addition of a new structure -- the curvature of spacetime -- the equations of general relativity makes it possible to (approximately) describe the universe  from any reference frame or point of view using the same equations.
Interestingly, each global continuous symmetry of a physical system imposes the conservation of a quantity: symmetry under translation implies conservation of momentum, symmetry under rotation implies conservation of angular momentum, and symmetry under time implies conservation of energy. 
This is the content of Noether's theorem \cite{Noether1918}.

In mathematics, symmetries are important in almost all branches. 
In particular, Lie groups that encode continuous symmetries are central to 
analysis, topology,
algebraic geometry,
differential geometry,
number theory, and 
Riemannian geometry.
Moreover, as we explain in Sections \ref{sec:Weyl} and \ref{sec:classification}, there is deep connection between several discrete groups and Lie groups, leading to their applications in discrete mathematics and combinatorics.

Symmetries -- both discrete and continuous -- also arise in optimization and sampling problems and have been used in the design of efficient algorithms.
While there is a growing list of problems and works where the lens of symmetry has been helpful (see \cite{Barvinok05convexgeometry,sanyal2011,Waterhouse:1983:DSP,Edelman1999,Parrilo,GargGOW16,GargGOW17,SraVY18,Allen-ZhuGLOW18,BurgisserFGOWW19,LeakeV20,LeakeV20b,LMV2021} and the references therein), unlike mathematics and physics where understanding symmetries is an essential part of the basic toolkit of a student, a systematic treatment of symmetries from the point of view of optimization is lacking.   
The goal of this article is not to fill this void, but 1) to entice the reader enough to appreciate continuous symmetries and 2) to present an  introduction to Lie theory from scratch.

We start with the question of why continuous symmetries might arise in optimization and sampling problems?
The answer is often similar to the same question in physics.
Sometimes, the function that we would like to optimize  corresponds to  ``physical'' or geometric quantities such as distances, inner products, volumes, or curvature associated with points in a domain involving vectors, matrices, or tensors.
These geometric quantities often do not depend on the reference frame used to represent the underlying structures such as vectors, matrices, and tensors and, hence, continuous symmetries arise.
For instance, inner products, determinants, and trace, all have well-known symmetries.
Moreover, this set of symmetries ``compose'' with each other in a natural manner.

\newpage
Consider the function 
$ f(x_1,\ldots,x_k):= \sum_{i,j=1}^k \|x_i-x_j\|^2$, that
 is the sum of  squared Euclidean distances between $n$ points with positions $x_1,\ldots,x_k \in \mathbb{R}^n$.
This function is invariant to translating all the points by the same vector $c \in \mathbb{R}^n$: 
$$f(x_1+c,\ldots, x_n+c)=f(x_1,\ldots,x_n).$$
This set of translations corresponds to $\mathbb{R}^n$ and the translations compose as ``addition:'' Translating by $c_1$ and then by $c_2$ has the same effect as translating by $c_1+c_2$, endowing the Euclidean space with an abelian group structure.
The function $f$ is also invariant under rotations. 
Let $Q$ be an $n \times n$ orthogonal matrix, meaning that $QQ^\top = Q^\top Q = I$.
Then, 
$$f(Qx_1,\ldots,Qx_n)=f(x_1,\ldots,x_n).$$
This is because the orthogonality of $Q$ implies that
$$ \|Qx_i-Qx_j\|^2 = \langle Q(x_i-x_j), Q(x_i-x_j) \rangle = (x_i-x_j)^\top Q^\top Q (x_i-x_j)=(x_i-x_j)^\top(x_i-x_j) = \|x_i-x_j\|^2.$$
Note that the set of $n \times n$ orthogonal matrices forms a group under matrix multiplication; however, this group is nonabelian.

Moreover, sometimes, the domain itself might be symmetric.
Consider the complex unit sphere
$S_\C^n:= \left\{ (x_1,\ldots, x_n)\in \mathbb{C}^n: \sum_{i=1}^n |x_i|^2 = 1\right\}.$
For an $n \times n$ unitary matrix $U$, we have that $$U \circ S_\C^n := \{Ux: x 
\in S_\C^n\}= S_\C^n$$
because $\|Ux\|^2=\|x\|^2$.
A matrix $U \in \C^{n \times n}$ is said to be unitary if $UU^* = I$.
Hence, the action of such a unitary matrix does not change the domain.
Moreover, just like orthogonal matrices, the set of unitary matrices also forms a group under matrix multiplication.
What is special about the three groups we have seen so far? 
It is that one can do calculus on them (and hence optimize functions) and they come with natural measures associated with them and hence, we can define sampling problems on them.

In the first part of this article, we focus on two concrete examples -- one involving an optimization problem and one involving a sampling problem.
Both these results have multiple elementary proofs.
We also present elementary and self-contained proofs of these results.
The proofs of both results rely on first establishing a connection between a Lie group and a convex polytope.
And then, going from this convex polytope to a discrete group.
These proofs have been carefully chosen to set the stage for the introduction of Lie theory and, finally, presentation of deep generalizations of both these results.
\paragraph{Example 1: Minimum eigenvalue of a matrix.} The first problem we consider is the following well-known optimization characterization of the smallest eigenvalue of a Hermitian matrix.\footnote{An $n \times n$ matrix $A$ with entries in $\mathbb{C}$ is Hermitian if $A=A^*$.}
\begin{theorem}\label{thm:eigenvalue}
Let $A$ be an $n \times n$ Hermitian matrix with smallest eigenvalue $\lambda_1$, then 
\begin{equation}\label{eq:eigenvalue}
 \min _{v \in \mathbb{C}^n: \|v\|=1} v^* A v = \lambda_1.
 \end{equation}
\end{theorem}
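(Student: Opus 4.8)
The plan is to use the spectral theorem for Hermitian matrices, which diagonalizes $A$ over an orthonormal basis of eigenvectors, and then to reduce the optimization over the complex unit sphere to an optimization over the probability simplex.

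First I would invoke the spectral theorem: since $A = A^*$, there exists a unitary matrix $U$ and real eigenvalues $\lambda_1 \le \lambda_2 \le \cdots \le \lambda_n$ such that $A = U \diag(\lambda_1,\ldots,\lambda_n) U^*$. Writing any unit vector $v \in \mathbb{C}^n$ as $v = Uw$, the change of variables is a bijection of the unit sphere onto itself (because $\|Uw\| = \|w\|$, exactly the unitary invariance of $S_\C^n$ highlighted in the introduction), so the minimization is unchanged: $\min_{\|v\|=1} v^* A v = \min_{\|w\|=1} w^* \diag(\lambda_1,\ldots,\lambda_n) w = \min_{\|w\|=1} \sum_{i=1}^n \lambda_i |w_i|^2$.

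Next I would observe that the coefficients $p_i := |w_i|^2$ are nonnegative and satisfy $\sum_{i=1}^n p_i = \|w\|^2 = 1$, so $(p_1,\ldots,p_n)$ ranges over the standard simplex $\Delta_{n-1}$; conversely every point of the simplex is realized by some unit $w$ (take $w_i = \sqrt{p_i}$). Hence the quantity in question equals $\min_{p \in \Delta_{n-1}} \sum_{i=1}^n \lambda_i p_i$. This is a linear function on a polytope, so its minimum is attained at a vertex $e_k$ of $\Delta_{n-1}$, giving $\min_k \lambda_k = \lambda_1$. For the two inequalities separately: choosing $p = e_1$ (equivalently $v$ the eigenvector for $\lambda_1$) shows the left side is $\le \lambda_1$, and $\sum_i \lambda_i p_i \ge \lambda_1 \sum_i p_i = \lambda_1$ for every $p \in \Delta_{n-1}$ shows it is $\ge \lambda_1$.

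There is no real obstacle here — the statement is elementary — but the conceptual point worth flagging (and presumably the reason the authors include it) is the two-step pattern: the continuous symmetry of the unit sphere under the unitary group reduces the problem to a linear optimization over a convex polytope (the simplex), and then the polytope's vertex structure, governed here by the symmetric group $S_n$ permuting the $\lambda_i$, solves it. This is exactly the template that Kostant's convexity theorem generalizes, so in writing the proof I would phrase the simplex step so that the role of the permutation action on $(\lambda_1,\ldots,\lambda_n)$ is visible.
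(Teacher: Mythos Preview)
Your proof is correct and is essentially the standard textbook argument. It differs from the paper's proof in its choice of polytope and in how much Lie-theoretic scaffolding it builds along the way.

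You diagonalize $A$, set $p_i = |w_i|^2$, and minimize the linear functional $\sum_i \lambda_i p_i$ over the simplex $\Delta_{n-1}$; the minimum is at a vertex $e_k$, giving $\lambda_1$. The paper instead rewrites the problem as $\min_{U \in \U(n)} \langle VDV^*, e_1 e_1^\top\rangle_F$, computes the diagonal of $VDV^*$ as $(V\odot V^*)\lambda$, observes that $V\odot V^*$ is doubly stochastic, and invokes the Birkhoff--von Neumann theorem to conclude that this diagonal lies in the permutation polytope $P_\lambda = \hull\{\sigma\cdot\lambda : \sigma\in S_n\}$. The lower bound then follows because the first coordinate of any point of $P_\lambda$ is at least $\lambda_1$. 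Your approach is shorter and entirely self-contained; the paper's approach is chosen precisely because it produces, as a byproduct, the Schur direction of the Schur--Horn theorem and makes the permutation polytope $P_\lambda$ (not just the simplex) and the Weyl group $S_n$ appear explicitly, which is the template it then generalizes to Kostant convexity. Your closing remark shows you see this, but if you are writing this for the paper's purposes you would want to actually route the argument through $P_\lambda$ and Birkhoff--von Neumann rather than the simplex, since that is the structure being showcased.
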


\noindent
Neither the objective function nor the domain in Equation \eqref{eq:eigenvalue} is convex.
We start by rewriting the above problem as an optimization problem over the following (nonconvex) manifold:
\[
    \mathcal{P}_1:= \{ X: X \in \C^{n \times n}, \Tr(X)=1, X=X^*, X^2=X\}.
\]
The reason is that $\mathcal{P}_1$ consists of rank one projection matrices and the only rank one projections are of the type $vv^*$ for a vector $v$ such that $\|v\|=1$. 
Over this manifold, the objective function of Equation \eqref{eq:eigenvalue} becomes linear:
$$ \langle A,X\rangle_F.$$
Here $\langle \cdot, \cdot \rangle_F$ is the Frobenius inner product in the space of matrices.
The first observation is that while $\mathcal{P}_1$ is nonconvex, it has symmetries: 
Any $n \times n$ unitary matrix $U$  acting on $\mathcal{P}_1$ by conjugation leaves $\mathcal{P}_1$ invariant:
$$ U \mathcal{P}_1 U^* = \{ UXU^*: X \in \mathcal{P}_1 \} = \mathcal{P}_1.$$
The second observation is that the set of all $n \times n$ unitary matrices is a group under matrix multiplication: 
the {unitary group} $\U(n)$, which  is also a manifold.
Thus,  
$$\mathcal{P}_1 = \{ Ue_1e_1^*U^*: U \in \mathrm{U}(n)\}.\footnotemark{}$$
Hence, the problem in Equation \eqref{eq:eigenvalue} is not just any nonconvex problem, it can be paramterized as a linear optimization problem over a continuous group that is also a manifold; in fact, a Lie group.
This structure is implicitly or explicitly used in any proof of Theorem \ref{thm:eigenvalue}. 
In particular, this structure implies a ``convexity'' which, in turn, implies  Theorem \ref{thm:eigenvalue}.
\footnotetext{$e_1$ is the vector that is one in the first coordinate and zero elsewhere.}

In Section \ref{sec:eigenvalue}, we present a proof of this theorem from this point of view.
Where we show how this optimization problem can be reduced to a linear optimization problem over a polytope and, consequently, over its vertices. 

The reader might wonder why should we care about this point of view and such a proof.
Especially since Theorem 
\ref{thm:eigenvalue} can be proved without a reference to a group or manifold.
In short, the answer is that this viewpoint leads to a far-reaching generalization: Any linear optimization problem over such a symmetric object reduces to a linear optimization problem over an associated convex polytope; see Theorem \ref{thm:Kostant} and Corollary \ref{cor:linearopt}.
However, to state the latter result formally and understand its proof, one needs the language of Lie theory, which is discussed in Section \ref{sec:Lie}.

\paragraph{Example 2: Sampling from an exponential density on the complex unit sphere.} The second example  we consider  is the ``sampling'' version of the first example.
Given an $ n \times n$ Hermitian matrix $A$, sample from the probability distribution where a unit vector $v$ on the complex unit sphere is picked with probability proportional to $e^{-v^*Av}$.
The complex unit sphere in $n$ dimensions is the set $S_\C^n := \{(z_1,\ldots,z_n) \in \mathbb{C}^n: \sum_i |z_i|^2=1\}.$

First, to formally define this problem, we note that the complex unit sphere has a Haar measure $\mu_{S_\C^n}$ with respect to which we can define the above density.
And second, the density $e^{-v^*Av}$ favors vectors with lower values of $v^*Av$, making this problem as a ``robust'' version of the optimization problem.
Finally, note that the density can be viewed as an exponential density $e^{-\langle A,X\rangle_F}$ where  $X \in \mathcal{P}_1$.
Let $\mu_1$ be the pushforward measure of the map from the complex unit sphere to $\mathcal{P}_1$ that maps $v \mapsto vv^*$. 
$\mu_1$ is the unique unitarily invariant measure on $\mathcal{P}_1$.
Thus, we have not only a group structure, a manifold structure (as discussed in the optimization example), but also an invariant measure.

Such exponential densities arise under the name \emph{matrix Langevin} or {\em matrix Bingham} in statistics   \cite{ChikusePaper,ChikuseBook}, in quantum inference \cite{LMV2021}, and in the context of differentially private rank-$1$ approximation of a given matrix \cite{Kamalika,KTalwar}. 
Unlike the eigenvalue (optimization) problem discussed earlier, efficient sampling algorithms from such densities were only recently discovered \cite{LMV2021}, and they crucially relied on the ``symmetry viewpoint.''

In fact, a very related and almost equivalent problem is that of computing the following integral or the partition function of this exponential density:
$$ \int_{\mathcal{P}_1} e^{-\langle A,X\rangle_F}d\mu_1(X).$$
Here, we focus on this problem instead of the sampling problem.
We prove the following  formula for this integral that immediately implies an efficient algorithm to compute it.
\begin{theorem}\label{thm:partition}
    Let $A$ be an $n\times n$ Hermitian matrix with distinct eigenvalues $\lambda_1 < \cdots < \lambda_n$. 
    Then 
$$  \int_{\mathcal{P}_1} e^{-\langle A,X\rangle_F}d\mu_1(X)  = (n-1)! \sum_{i=1}^n \frac{e^{-\lambda_i }}{\prod_{j \neq i} (\lambda_j-\lambda_i)}. $$ 
\end{theorem}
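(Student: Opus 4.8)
The plan is to use the unitary symmetry to collapse the integral onto the standard simplex, and then to evaluate that simplex integral by a short Laplace--transform computation. First I would reduce to the case of a diagonal matrix: writing $A = U\diag(\lambda_1,\dots,\lambda_n)U^*$ for a unitary $U$ and unfolding the pushforward $v\mapsto vv^*$ defining $\mu_1$, unitary invariance of $\mu_{S_\C^n}$ (replace $v$ by $U^*v$) gives
\[
  \int_{\mathcal{P}_1}e^{-\langle A,X\rangle_F}\,d\mu_1(X) \;=\; \int_{S_\C^n}e^{-v^*Av}\,d\mu_{S_\C^n}(v) \;=\; \int_{S_\C^n}e^{-\sum_{i=1}^n\lambda_i\abs{v_i}^2}\,d\mu_{S_\C^n}(v).
\]
The essential observation is that the integrand now depends on $v$ only through the point $p(v) := (\abs{v_1}^2,\dots,\abs{v_n}^2)$, which lies in the simplex $\Delta := \hull(e_1,\dots,e_n) = \inbraces{p\in\R^n_{\geq 0}:\textstyle\sum_i p_i = 1}$ --- the ``rank-one'' shadow of the polytope that appears in the general theory developed in Section~\ref{sec:Lie}.

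Next I would identify the distribution of $p(v)$: when $v\sim\mu_{S_\C^n}$, its law is the \emph{uniform} probability measure on $\Delta$ (a baby case of the Schur--Horn / Kostant phenomenon). To see this, realize $v = g/\norm{g}$ with $g = (g_1,\dots,g_n)$ having i.i.d.\ standard complex Gaussian entries, so that $g/\norm{g}\sim\mu_{S_\C^n}$; writing $g_j = a_j + \sqrt{-1}\,b_j$ with $a_j,b_j$ i.i.d.\ $N(0,\tfrac12)$, the variable $\abs{g_j}^2 = a_j^2+b_j^2$ is exponential with mean $1$, hence $(\abs{g_1}^2,\dots,\abs{g_n}^2)$ are i.i.d.\ exponentials and $p(v) = \bigl(\abs{g_j}^2/\textstyle\sum_k\abs{g_k}^2\bigr)_j$ is $\mathrm{Dirichlet}(1,\dots,1)$, i.e.\ uniform on $\Delta$. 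Since $\Delta$ has $(n-1)$-dimensional volume $1/(n-1)!$, this yields
\[
  \int_{\mathcal{P}_1}e^{-\langle A,X\rangle_F}\,d\mu_1(X) \;=\; (n-1)!\int_{\Delta}e^{-\sum_i\lambda_i p_i}\,dp,
\]
where $dp$ is Lebesgue measure in the coordinates $(p_1,\dots,p_{n-1})$ and $p_n := 1-\sum_{i<n}p_i$.

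Finally I would compute the simplex integral. For $s > -\min_i\lambda_i$ set $\psi(t) := \int_{\{x\in\R^{n-1}_{\geq 0}:\,\sum_{i<n}x_i\le t\}}\exp\bigl(-\sum_{i<n}\lambda_i x_i - \lambda_n(t-\sum_{i<n}x_i)\bigr)\,dx$, so that the quantity above is $(n-1)!\,\psi(1)$. The substitution $x_n := t-\sum_{i<n}x_i\ge 0$ (Jacobian $1$, with $t=\sum_{i=1}^n x_i$) gives
\[
  \int_0^\infty e^{-st}\psi(t)\,dt \;=\; \int_{\R^n_{\geq 0}}e^{-\sum_i(s+\lambda_i)x_i}\,dx \;=\; \prod_{i=1}^n\frac{1}{s+\lambda_i} \;=\; \sum_{i=1}^n\frac{1}{\prod_{j\neq i}(\lambda_j-\lambda_i)}\cdot\frac{1}{s+\lambda_i},
\]
the last equality being the partial-fraction expansion, where we use that the $\lambda_i$ are distinct. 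Since $\psi$ is continuous with at most exponential growth, uniqueness of the Laplace transform forces $\psi(t) = \sum_{i=1}^n e^{-\lambda_i t}\big/\prod_{j\neq i}(\lambda_j-\lambda_i)$; evaluating at $t=1$ and multiplying by $(n-1)!$ gives the theorem.

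I expect the only real content to be in the second step --- pinning down that the Haar measure pushes forward to the \emph{uniform} measure on $\Delta$, since this is precisely where symmetry is used and is the elementary precursor of the Harish--Chandra and HCIZ formulas; the Gaussian description makes it painless here. The third step is routine, the only care needed being the normalization constant $(n-1)!$ and the standard justification of the Laplace inversion.
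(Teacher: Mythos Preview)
Your proof is correct and follows the same overall architecture as the paper's: reduce to diagonal $A$ by unitary invariance, identify the pushforward of $\mu_{S_\C^n}$ under $v\mapsto(|v_1|^2,\dots,|v_n|^2)$ as the uniform probability measure on the simplex, and evaluate the resulting exponential integral via the Laplace transform and partial fractions.

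The genuine difference is in how you justify the second step. The paper establishes the pushforward identity by matching moments: it invokes the Bombieri inner product formula (Lemma~\ref{lem:bombieri}) to compute $\int|v|^{2\alpha}\,d\mu_{S_\C^n}$ explicitly, and separately computes $\int_{\Delta_1}x^\alpha\,dm$ by a direct induction, observing that the two agree for every monomial (Proposition~\ref{prop:evaluation_formula}); the exponential case then follows by approximation. You instead realize $v$ as $g/\|g\|$ with $g$ a standard complex Gaussian vector, observe that $|g_j|^2$ are i.i.d.\ exponential, and conclude that the normalized vector of squared moduli is $\mathrm{Dirichlet}(1,\dots,1)$, i.e.\ uniform on the simplex. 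Your argument is more probabilistic and arguably quicker for this specific claim; the paper's moment-matching route is more algebraic and yields the more general statement of Proposition~\ref{prop:evaluation_formula} for arbitrary analytic $f$ along the way. In the third step the paper phrases the computation as an iterated convolution $(e^{-\lambda_1 t}*\cdots*e^{-\lambda_n t})|_{t=1}$ before applying $\mathcal{L}$, whereas you compute $\int_0^\infty e^{-st}\psi(t)\,dt$ directly by Fubini over $\R^n_{\geq 0}$; these are equivalent packagings of the same Laplace-transform calculation.
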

It should be surprising that this integral (over the unitary group) reduces to such a formula involving a finite sum with a small number of terms.
While it is not obvious, we point out that the right hand side is in fact a determinant (of a Vandermonde-type) matrix which, in turn, is a summation over $S_n$.
Thus, an integral over $\U(n)$ reduces to a sum over the symmetric group $S_n$.
We present a self-contained proof of this theorem  in Section \ref{sec:ex2} that relies on a  connection between two highly symmetric objects -- $\mathcal{P}_1$ and (another convex polytope) the probability simplex.

Once we see a proof for the case when the eigenvalues are all distinct, it is not too difficult to come up with a similar formula in the case when eigenvalues are repeated.
Moreover, while we do not present it here, the ideas that go in the proof of this theorem can be converted into an efficient sampling algorithm; see \cite{LMV2021}.

The reader would have every reason to suspect that this result is also a tip of the iceberg: This formula is a very special case of the Harish-Chandra integral formula; see Theorems \ref{thm:HC} and \ref{thm:HCIZ}.  
While one needs the toolkit of  Lie theory to formulate and prove these results, let us conclude by mentioning that these results are remarkable: They convert certain integrals over continuous groups to summations over discrete groups -- Weyl groups.

\section{Optimization Characterization of the Minimum Eigenvalue of a Matrix}\label{sec:eigenvalue}

In this section we prove Theorem \ref{thm:eigenvalue}.
Recall that  we are given an $n \times n$ Hermitian matrix $A$ and our goal is  to prove that
$$ \min_{X \in \mathcal{P}_1} \langle A,X \rangle_F = \min_{U \in \U(n)} \langle A,Ue_1e_1^\top U^* \rangle = \lambda_1.\footnotemark{}$$
Let $A=WDW^*$ where $W$ is a unitary matrix and $D=\diag(\lambda)$ is the diagonal matrix consisting of eigenvalues of $A$.
Then we obtain\footnotetext{$e_1e_1^\top$ is a diagonal matrix with  $1$ in the $(1,1)$ entry and $0$ everywhere else.} 
 $$ \min_{U \in \U(n)} \langle WDW^*,Ue_1e_1^\top U^* \rangle_F  = \min_{U \in \U(n)} \langle U^*WDW^*U,e_1e_1^\top  \rangle_F = \min_{V \in \U(n)} \langle VDV^*,e_1e_1^\top  \rangle_F.$$
Thus, the answer to this minimization problem is the smallest entry in the diagonal of $VDV^*$ when $V$ varies over $\U(n)$.

\paragraph{From a unitary orbit to a polytope.}
We compute the diagonal entries of $A = VDV^*$ via:
    \[
        a_{ii} = \sum_{j=1}^n v_{ij} \lambda_j v_{ij}^* = \sum_{j=1}^n |v_{ij}|^2 \lambda_j = [(V \odot V^*)\lambda]_i.
    \]
  That is, the diagonal of $A$ is given by $(V \odot V^*) \lambda$, where $\odot$ denotes the entrywise product. 
   Let $Q := V \odot V^*$. 
   All entries of $Q$ are nonnegative.
   Since $V$ is a unitary matrix, we have that for all $1\leq j \leq n$,
   $$ \sum_{i=1}^n  Q_{ij} =  \sum_{i=1}^n |v_{ij}|^2 = 1.$$
   Similarly, for all $1\leq i \leq n$,
   $$ \sum_{j=1}^n  Q_{ij} =  \sum_{j=1}^n |v_{ij}|^2 = 1.$$
   Thus, $Q$ is doubly stochastic.

   The set of all doubly stochastic matrices is convex and, in fact, a polytope -- the Birkhoff polytope \cite{birk:46}.
   The well-known (and easy to prove) Birkhoff-von Neumann theorem states that the Birkhoff polytope is a convex hull of $n \times n$ permutation matrices.
   Thus, $Q$, which is doubly stochastic can be written as a convex combination of permutation matrices.
   Hence, for any unitary $V$, the diagonal of $A=VDV^*$ which is $Q\lambda$, can be written as a convex combination of permutations of $\lambda$.  

Thus, we have made a connection between the set of all Hermitian matrices with eigenvalues $\lambda=(\lambda_1,\ldots, \lambda_n)$ -- the unitary ``orbit'' of $\diag(\lambda)$ --  and the ``permutation polytope'' $P_\lambda$ of $\lambda$ defined as the convex hull of all the permutations of the vector $\lambda$:
$$ 
P_\lambda := \mathrm{hull} \{ (\lambda_{\pi(1)},\ldots,\lambda_{\pi(n)}): \pi \in S_n \}.
$$
We have argued that the diagonals of the matrices in the unitary orbit of $\diag(\lambda)$ lie in $P_\lambda$.
$$ \min_{V \in \U(n)} \langle VDV^*,e_1e_1^\top  \rangle_F \geq  \min_{v \in P_\lambda} \langle v ,e_1  \rangle= \lambda_1.$$
Moreover, plugging in the unitary $U$ which  diagonalizes $A$ and ensures that the smallest eigenvalue of $A$ appears as the $(1,1)$ entry in $VAV^*$.
Thus, we get that 
$$ \lambda_1 = \langle UAU^*, e_1e_1^\top \rangle_F \geq  \min_{V \in \U(n)} \langle UDU^*,e_1e_1^\top  \rangle_F.$$
This completes the proof of Theorem \ref{thm:eigenvalue}.

\paragraph{The Schur-Horn theorem.}
The fact that, for a given vector $\lambda$, the set of diagonal vectors of matrices in the unitary orbit of $\diag(\lambda)$ lie in $P_\lambda$ is often known as the ``Schur'' part of the Schur-Horn theorem \cite{schur1923,horn1954}.
The converse, known as the ``Horn'' part, is also true, giving us the Schur-Horn theorem; the proof below is adapted from \cite{Kadison4178}.

\begin{theorem}[Schur-Horn] \label{prop:schur_horn}
    If $D := \diag(\lambda)$ is an $n \times n$  real diagonal matrix and $U$ is an $n \times n$ unitary matrix, then the diagonal of $UDU^*$ is in  $P_\lambda$. Conversely, given any vector $v \in P_\lambda$, there exists a unitary matrix $U$ such that $UDU^*$ has diagonal vector $v$.
\end{theorem}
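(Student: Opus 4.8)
The plan is to prove both directions of the Schur--Horn theorem. The ``Schur'' direction (forward inclusion) has essentially been carried out already in the text: writing $A = UDU^*$, the diagonal of $A$ equals $Q\lambda$ where $Q = U \odot \bar{U}$ is doubly stochastic, and by Birkhoff--von Neumann $Q$ is a convex combination of permutation matrices, so $Q\lambda \in P_\lambda$. So the real content is the ``Horn'' converse: given $v \in P_\lambda$, construct a unitary $U$ with $\diag(UDU^*) = v$. Since conjugation by permutation matrices realizes all permutations of $\lambda$ on the diagonal, and since every point of $P_\lambda$ is a convex combination of such permutations, it suffices to show that the set of achievable diagonal vectors is convex — or, more constructively, to show how to interpolate between two diagonals that differ by a transposition.

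First I would reduce to the two-dimensional case. The key observation is that if $v, w$ are two vectors with the same sorted entries (same multiset), related by swapping two coordinates, then for any $t \in [0,1]$ there is a $2\times 2$ unitary (in fact a rotation) acting only on those two coordinates that produces the diagonal $(1-t)v + tw$ when applied to $\diag(v)$. Concretely, for a $2\times 2$ diagonal matrix $\diag(a,b)$ and the rotation $R_\theta = \begin{pmatrix} \cos\theta & -\sin\theta \\ \sin\theta & \cos\theta\end{pmatrix}$, the diagonal of $R_\theta \diag(a,b) R_\theta^\top$ is $(a\cos^2\theta + b\sin^2\theta,\ a\sin^2\theta + b\cos^2\theta)$, which sweeps out the entire segment between $(a,b)$ and $(b,a)$ as $\theta$ ranges over $[0,\pi/2]$. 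Embedding this $2\times 2$ block into an $n\times n$ identity gives a unitary $U$ with $\diag(UDU^*)$ equal to any prescribed convex combination along a ``transposition edge.''

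Next I would run an iterative argument to go from an arbitrary $v \in P_\lambda$ to a realizing unitary. Order the current diagonal and the target $v$; if they are not equal, pick coordinates $i, j$ where the current diagonal is too large in one and too small in the other, apply the appropriate planar rotation on the $(i,j)$ block to move one coordinate exactly to its target value (this is always possible as long as the target value lies between the two current values, which one can arrange by choosing $i$ with the largest surplus and $j$ with the largest deficit), and compose. Each step fixes at least one more coordinate at its target, so after at most $n-1$ steps the diagonal equals $v$; the final unitary is the product of the rotations used. One must check that majorization is preserved at each step — but this is automatic, since each step only applies a unitary conjugation, so the eigenvalues never change, and the diagonal stays in $P_\lambda$ by the Schur direction. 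Alternatively, and more cleanly, one can invoke that the set of realizable diagonals is the image of the compact connected group $\U(n)$ under a continuous map, contains all permutations of $\lambda$, and — by the $2\times 2$ interpolation above applied to transpositions — is closed under taking the convex combination corresponding to any edge of the permutohedron; since $P_\lambda$ is the convex hull of the permutations and its edge graph connects them, a standard path/convexity argument shows the realizable set is all of $P_\lambda$.

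The main obstacle is the bookkeeping in the iterative construction: one has to argue carefully that at every stage there exist indices $i, j$ such that the target value $v_i$ lies in the closed interval between the current values $d_i$ and $d_j$, so that a single planar rotation can set coordinate $i$ to $v_i$ without overshooting. The clean way to guarantee this is to always rotate between a coordinate currently \emph{above} its target and one currently \emph{below} its target (both exist whenever the diagonal is not already equal to $v$, since the entries of both vectors sum to the same value $\sum_i \lambda_i$), and to choose $\theta$ so that whichever of the two coordinates reaches its target first is the one we ``freeze.'' This keeps the number of unsatisfied coordinates strictly decreasing. The rest — the $2\times 2$ computation, the Birkhoff--von Neumann invocation for the forward direction — is routine.
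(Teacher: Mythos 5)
Your core strategy matches the paper's: prove the Schur direction via Birkhoff--von Neumann and doubly stochastic $U \odot \bar U$, and for the Horn direction reduce to realizing the $2\times 2$ interpolation $\bigl((1-t)a + tb,\ ta + (1-t)b\bigr)$ by a planar unitary, using the fact that every $v\in P_\lambda$ can be reached from a permutation of $\lambda$ by a chain of such two-coordinate moves (T-transforms). The paper even uses the same $2\times 2$ matrix, just parametrized as $\bigl[\begin{smallmatrix}\sqrt{1-t}&\sqrt{t}\\-\sqrt{t}&\sqrt{1-t}\end{smallmatrix}\bigr]$ rather than $R_\theta$.

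One loose end worth flagging, because you present your iterative construction as if it were fully rigorous: after the first rotation, $V_1 D V_1^*$ is no longer diagonal, so when you later rotate in a plane $(i,j)$ the clean formula $\bigl(a\cos^2\theta + b\sin^2\theta,\ a\sin^2\theta + b\cos^2\theta\bigr)$ no longer holds --- there is a cross term proportional to $\mathrm{Re}\bigl(\overline{u_{i1}} u_{i2}\, a_{ij}\bigr)$ coming from the nonzero off-diagonal entry $a_{ij}$. This does not break the argument, but it must be repaired either by adjusting a phase (complex unitaries kill the cross term) or by the intermediate value theorem (the map $\theta\mapsto (R_\theta A R_\theta^\top)_{ii}$ is continuous with trace preserved, hitting $a_{ii}$ at $\theta=0$ and $a_{jj}$ at $\theta=\pi/2$, so every value in between is achieved). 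The paper's proof, being labeled a sketch, elides the same point. Your ``cleaner'' alternative is also a bit loose: closure under convex combinations along transposition edges at \emph{vertices} of the permutohedron does not by itself yield the whole polytope; what you need (and what the iterative version implicitly supplies) is the Hardy--Littlewood--P\'olya fact that any $v$ majorized by $\lambda$ is reached from $\lambda$ by finitely many T-transforms applied at \emph{arbitrary} intermediate points, not only at vertices.
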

\begin{proof}{\bf (Sketch)}
   We have already proved the first part of this theorem.
For the other direction it suffices to show: If $w$ is the diagonal vector of some $UDU^*$ and $\tau$ is any transposition of $i,j \in [n]$, then $(1-t)w + t(\tau \cdot w)$ is the diagonal vector of $VDV^*$ for some unitary $V$ for any $t \in [0,1]$, where $\tau \cdot w$ indicates the corresponding permutation of the entries of $w$. 
    Since $w \mapsto (1-t)w + t(\tau \cdot w)$ only alters 2 entries of $w$, it suffices to demonstrate the desired claim on $2 \times 2$ matrices. To show this, we first have from the previous paragraph that $$\diag(UDU^*) = (|u_{11}|^2\lambda_1 + |u_{12}|^2\lambda_2, |u_{21}|^2\lambda_1 + |u_{22}|^2\lambda_2).$$ 
    Choosing $U := \left[\begin{smallmatrix} \sqrt{1-t} & \sqrt{t} \\ -\sqrt{t} & \sqrt{1-t} \end{smallmatrix}\right]$ for $t \in [0,1]$ gives $$\diag(UDU^*) = \left((1-t) \cdot \lambda_1 + t \cdot \lambda_2, t \cdot \lambda_1 + (1-t) \cdot \lambda_2\right).$$ 
    Letting $t$ vary over $[0,1]$ then gives all possible length-2 vectors majorized by $\lambda$.
\end{proof}

\noindent
Thus, if we define a mapping 
$$ \phi: \{U\diag(\lambda) U^*: U \in \U(n)\} \to P_\lambda$$ 
such that 
$$ \phi(U\diag(\lambda) U^*)=\diag(U\diag(\lambda) U^*),$$
then the range of this map is the convex polytope $P_\lambda$.
Note that  $P_{(1,0,\ldots,0)}$ is the probability simplex.

\paragraph{A generalization of Theorem \ref{thm:eigenvalue}: From  $ \U(n)$ to $S_n$.} We now show how to use the Schur-Horn theorem to   reduce the problem of optimizing a linear function over the infinite unitary group to the finite symmetric group.
Given two real diagonal $n \times n$ matrices $D$ and $D'$, we have that:
    \begin{equation}\label{eq:Sn}
        \min_{U \in \U(n)} \langle UDU^*, D' \rangle_F  = \min_{\sigma \in S_n} \langle \sigma D \sigma^*, D' \rangle_F.
    \end{equation}
 To see this,   let $\sigma_0$ be the permutation matrix which minimizes $\langle \sigma D \sigma^*, D' \rangle_F$ over all permutation matrices $\sigma$.\footnote{Note that all permutation matrices are unitary.}
 By the Schur-Horn theorem, for any $U$, the diagonal of $UDU^*$ can be written as a convex combination of the permutations of the diagonal of $D$. 
 By linearity of $\langle \cdot, D' \rangle_F$, the value of $\langle UDU^*, D' \rangle_F$ must then be at least the value of $\langle \sigma_0 D \sigma_0^*, D' \rangle_F$. 

\section{Sampling from an Exponential Density on the Complex Sphere}\label{sec:ex2}
In this section we prove Theorem \ref{thm:partition}.
This proof is adapted from  \cite{LeakeV20}; see also \cite{Vergne1996}.

\paragraph{From $\mathcal{P}_1$ to the simplex.}  The first step is to observe the following remarkable equality for real $A = \diag(\lambda)$, where $m$ is the Lebesgue measure on the simplex $\Delta_1:=\{p \in \mathbb{R}_+^n: \sum_{i=1}^n p_i=1\}$:
\[
    \int_{\mathcal{P}_1} e^{-\langle A, X \rangle_F} d\mu_1(X) = \int_{\Delta_1} e^{-\langle \lambda, x \rangle} dm(x).
\]
Put another way, exponential measures on $\mathcal{P}_1$, a nonconvex manifold, correspond to exponential measures on $\Delta_1$, a convex polytope.
To see this, first note the following equalities for any nonegative integers $m_1,\ldots,m_n$. The first equality is the Bombieri inner product formula (Lemma \ref{lem:bombieri}), and the second equality is a basic induction after a change of variables:
\[
    \int_{\mathcal{P}_1} X_{11}^{m_1} \cdots X_{nn}^{m_n} d\mu_1(X) = \frac{m_1! \cdots m_n! (n-1)!}{(m_1+\cdots+m_n+n-1)!} = \int_{\Delta_1} x_1^{m_1} \cdots x_n^{m_n} dm(x).
\]
The exponential equality then follows from taking limits, since $\mathcal{P}_1$ and $\Delta_1$ are compact and  $e^{-\langle A, X \rangle_F}$ and $e^{-\langle \lambda, x \rangle}$ are limits of polynomials.

We note that this argument also implies the more general fact that $m$ is the pushforward of $\mu_1$ through the map $\phi: X \mapsto \diag(X)$:
\[
    \int_{\mathcal{P}_1} f(\phi(X)) d\mu_1(X) = \int_{\Delta_1} f(x) dm(x).
\]

\paragraph{From the simplex to a finite sum.} This transfer to the simplex now leads to an explicit computation for the integral when $A$ is a diagonal matrix  $A= \diag(\lambda)$ as a finite sum.
The assumption that $A$ is diagonal is without loss of generality because conjugating $A$ by any unitary does not change the integral $\int_{\mathcal{P}_1}e^{-\langle A,X\rangle_F} d\mu(X).$
By making a change of variables, the simplex integral becomes an iterated convolution:
\[
  \int_{\Delta_1} e^{-\langle \lambda, x \rangle} dm(x)   = (n-1)!\int_0^1 \int_0^{1-x_1} \cdots \int_0^{1-x_1-\cdots-x_{n-2}} e^{-\langle \lambda, x \rangle} dx = \left.(e^{-\lambda_1 t} * \cdots * e^{-\lambda_n t})\right|_{t=1}.\footnotemark{}
\]
This is stated formally in Lemma \ref{lem:laplace_integral}.
Applying the Laplace transform $\mathcal{L}$ converts this convolution into a partial fraction decomposition problem for distinct values of $\lambda_i$:
\footnotetext{Here, $*$ denotes the usual integral convolution: Given functions $f, g\colon \R\to \R$, $(f*g)(t)$ is $\int_{-\infty}^\infty f(t-\tau)g(\tau)d\tau$.}
\[
    \left.(e^{-\lambda_1 t} * \cdots * e^{-\lambda_n t})\right|_{t=1} = \mathcal{L}^{-1}\left[\frac{1}{\prod_i (s+\lambda_i)}\right](1) = \mathcal{L}^{-1}\left[\sum_i \frac{c_i}{s+\lambda_i}\right](1) = \sum_i c_i e^{-\lambda_i}.
\]
Computing the values of $c_i$ via a standard partial fractions formula gives:
\[
    \frac{1}{(n-1)!} \int_{\mathcal{P}_1} e^{-\langle Y, X \rangle_F} d\mu_1(X) = \sum_{i=1}^n \frac{e^{-\lambda_i}}{\prod_{j \neq i} (\lambda_j-\lambda_i)}.
\]
This is stated formally in Proposition \ref{prop:evaluation_formula}.

\paragraph{Proof of Theorem \ref{thm:partition}.}
We now state a lemma which gives the most basic result about integrals on $\mathcal{P}_1$.
Specifically, we state a well-known result for integrals of polynomial-like functions.
This proof is very related to the unitarily invariant inner product on homogeneous polynomials, which has many names in the literature: Bombieri-Weyl inner product, Fischer-Fock inner product, Segal-Bargmann inner product, etc.
The following lemma is standard; see e.g. Lemma 3.2 of \cite{pinasco2012}.

\begin{lemma}[Bombieri inner product formula] \label{lem:bombieri}
    For $\alpha \in \{0,1,2,\ldots\}^n$ such that $\sum_i \alpha_i = d$, we have:
    \[
        \int |v|^{2\alpha} d\mu_{S_\C^n}(v) = \int \prod_i |v_i|^{2\alpha_i} d\mu_{S_\C^n}(v) = \binom{d}{\alpha}^{-1} \binom{d+n-1}{n-1}^{-1} = \frac{\alpha_1! \cdots \alpha_n!(n-1)!}{(d+n-1)!}.
    \]
    Here, $\binom{d}{\alpha}$ is the multinomial coefficient and $\binom{d+n-1}{n-1}$ is the binomial coefficient.
\end{lemma}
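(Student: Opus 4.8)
\textbf{Proof proposal for Lemma \ref{lem:bombieri}.}

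The plan is to compute the integral $\int \prod_i |v_i|^{2\alpha_i}\, d\mu_{S_\C^n}(v)$ by lifting it to a Gaussian integral over all of $\C^n$, where the computation factors coordinate-wise and reduces to elementary Gamma-function identities. The key observation is that the standard complex Gaussian measure on $\C^n$, with density proportional to $e^{-\|v\|^2}$, is unitarily invariant, so in ``polar coordinates'' it factors as a radial part times the Haar measure $\mu_{S_\C^n}$ on the sphere. Concretely, writing $v = r\omega$ with $r = \|v\| \ge 0$ and $\omega \in S_\C^n$, the pushforward of the Gaussian under this decomposition is a product measure, and the real dimension $2n$ of $\C^n$ dictates that the radial density is proportional to $r^{2n-1} e^{-r^2}\, dr$.

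First I would set up the identity
\[
    \int_{\C^n} \prod_i |v_i|^{2\alpha_i}\, e^{-\|v\|^2}\, dv
    = \left(\int_0^\infty r^{2d + 2n - 1} e^{-r^2}\, dr \cdot c_n\right) \cdot \int_{S_\C^n} \prod_i |\omega_i|^{2\alpha_i}\, d\mu_{S_\C^n}(\omega),
\]
where $d = \sum_i \alpha_i$ and $c_n$ is the total Gaussian mass bookkeeping constant; homogeneity of the integrand of degree $2d$ in $\|v\|$ is what lets the radial and spherical parts separate. Next I would evaluate the left-hand side directly: since the Gaussian density is a product over coordinates, it equals $\prod_i \int_\C |v_i|^{2\alpha_i} e^{-|v_i|^2}\, dv_i = \prod_i \pi\, \alpha_i!$, using the one-variable computation $\int_\C |z|^{2k} e^{-|z|^2}\, dz = \pi\, k!$ (polar coordinates in $\C$, or recognizing it as $\pi \Gamma(k+1)$). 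The same reasoning with all $\alpha_i = 0$ pins down the normalization: $\pi^n = \big(\int_0^\infty r^{2n-1} e^{-r^2} dr \cdot c_n\big)$, i.e.\ $c_n \cdot \tfrac12\Gamma(n) = \pi^n$. Taking the ratio of the general identity to the $\alpha = 0$ identity cancels $c_n$ and leaves
\[
    \int_{S_\C^n} \prod_i |\omega_i|^{2\alpha_i}\, d\mu_{S_\C^n}(\omega)
    = \frac{\prod_i \alpha_i!}{1} \cdot \frac{\Gamma(n)}{\Gamma(d+n)}
    = \frac{\alpha_1! \cdots \alpha_n!\,(n-1)!}{(d+n-1)!},
\]
after which rewriting $\tfrac{\prod_i \alpha_i!\,(n-1)!}{(d+n-1)!}$ in terms of binomial and multinomial coefficients is a one-line algebraic rearrangement giving $\binom{d}{\alpha}^{-1}\binom{d+n-1}{n-1}^{-1}$.

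The main obstacle is justifying the polar-coordinate factorization of the Gaussian measure rigorously — i.e.\ that the Haar (uniform) measure $\mu_{S_\C^n}$ is precisely the conditional distribution of $v/\|v\|$ under the Gaussian, and that the radial density has the claimed form $r^{2n-1}e^{-r^2}$. This follows from the unitary invariance of $e^{-\|v\|^2}\,dv$ together with uniqueness of Haar measure on $S_\C^n$ (the sphere carries a transitive $\U(n)$-action with unique invariant probability measure), plus the change of variables $v \mapsto (\|v\|, v/\|v\|)$ whose Jacobian contributes the $r^{2n-1}$ factor from the $2n$ real dimensions; I would cite this as standard (as the statement does, referencing \cite{pinasco2012}) rather than reprove it. Everything else is bookkeeping with $\Gamma(k+1) = k!$.
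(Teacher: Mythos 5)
Your proof is correct. Note, however, that the paper does not actually supply a proof of this lemma -- it simply cites it as standard (Lemma 3.2 of \cite{pinasco2012}) -- so there is no internal argument to compare against. Your Gaussian polar-coordinate reduction is a clean, self-contained derivation and is essentially the textbook route: lift to $\C^n$ with the complex Gaussian $e^{-\|v\|^2}\,dv$, use separability to get $\prod_i \pi\,\alpha_i!$ on the left, use unitary invariance and uniqueness of Haar measure to justify the factorization $dv = r^{2n-1}\,dr \cdot c_n\,d\mu_{S_\C^n}$, and then divide by the $\alpha=0$ case to eliminate the surface-area constant $c_n = 2\pi^n/(n-1)!$. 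The only points I'd make sure to spell out in a full write-up are (i) the one-variable identity $\int_\C |z|^{2k}e^{-|z|^2}\,dz = \pi\,k!$, which you correctly obtain from polar coordinates in $\C$, and (ii) the homogeneity observation that $\prod_i |r\omega_i|^{2\alpha_i} = r^{2d}\prod_i|\omega_i|^{2\alpha_i}$, which is exactly what makes the exponent $2d+2n-1$ appear in the radial Gamma integral $\tfrac12\Gamma(d+n)$. The final bookkeeping $\tfrac{\alpha_1!\cdots\alpha_n!\,(n-1)!}{(d+n-1)!} = \binom{d}{\alpha}^{-1}\binom{d+n-1}{n-1}^{-1}$ checks out.
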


\noindent
The next lemma relates the integrals we want to compute and the Laplace transform.

\begin{lemma} \label{lem:laplace_integral}
    For $\lambda_1,\ldots,\lambda_n \in \R$ and $x_n := 1-x_1-\cdots-x_{n-1}$, we have the following where $*$ denotes the usual integral convolution:
    \[
        \int_0^1 \int_0^{1-x_1} \cdots \int_0^{1-x_1-\cdots-x_{n-2}} e^{-\langle \lambda, x \rangle} dx_{n-1} \cdots dx_1 = \left.(e^{-\lambda_1 t} * \cdots * e^{-\lambda_n t})\right|_{t=1}.
    \]
    If $\lambda_1 < \lambda_2 < \cdots < \lambda_n$, then we further have:
    \[
        \int_0^1 \int_0^{1-x_1} \cdots \int_0^{1-x_1-\cdots-x_{n-2}} e^{-\langle \lambda, x \rangle} dx_{n-1} \cdots dx_1 = \sum_{i=1}^n \frac{e^{-\lambda_i}}{\prod_{j \neq i} (\lambda_j-\lambda_i)} 
    \]
\end{lemma}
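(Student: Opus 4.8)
The plan is to prove the two displayed identities in Lemma~\ref{lem:laplace_integral} in sequence. For the first identity, I would start from the definition of the iterated convolution and unwind it inductively. Writing $f_i(t) := e^{-\lambda_i t}$ for $t \ge 0$ (and $0$ for $t<0$, since we are using one-sided exponentials), the claim is that
\[
    (f_1 * \cdots * f_n)(1) = \int_{\Delta_1} e^{-\langle \lambda, x\rangle}\, dx_{n-1}\cdots dx_1,
\]
where the right-hand integral runs over the simplex region $x_i \ge 0$, $x_1 + \cdots + x_{n-1} \le 1$, with $x_n$ defined as the slack $1 - x_1 - \cdots - x_{n-1}$. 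I would prove by induction on $n$ that $(f_1 * \cdots * f_n)(t)$ equals $\int e^{-\langle \lambda, x\rangle}\,dx_{n-1}\cdots dx_1$ over the scaled simplex $\{x_i \ge 0, \sum_{i<n} x_i \le t\}$ with $x_n := t - \sum_{i<n}x_i$. The base case $n=1$ is immediate: $f_1(t) = e^{-\lambda_1 t}$ and the ``integral'' over the $0$-dimensional simplex is just $e^{-\lambda_1 t}$. For the inductive step, $(f_1 * \cdots * f_n)(t) = \int_0^t (f_1 * \cdots * f_{n-1})(t - \tau)\, f_n(\tau)\, d\tau$; substituting the induction hypothesis for $f_1 * \cdots * f_{n-1}$ evaluated at $t - \tau$, renaming $\tau = x_{n-1}$, and observing that the slack variable reorganizes correctly (the old slack $(t-\tau) - \sum_{i<n-1}x_i$ becomes the new $x_n = t - \sum_{i<n} x_i$) gives exactly the $(n-1)$-fold iterated integral at level $t$. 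Setting $t=1$ yields the first identity.

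For the second identity, I would compute $(f_1 * \cdots * f_n)(1)$ directly via the Laplace transform, exactly as sketched in the ``From the simplex to a finite sum'' paragraph preceding the lemma. Since $\mathcal{L}[f_i](s) = \frac{1}{s + \lambda_i}$ and the Laplace transform turns convolution into a product, $\mathcal{L}[f_1 * \cdots * f_n](s) = \prod_{i=1}^n \frac{1}{s+\lambda_i}$. When the $\lambda_i$ are distinct, partial fraction decomposition gives $\prod_i \frac{1}{s+\lambda_i} = \sum_i \frac{c_i}{s+\lambda_i}$ with $c_i = \prod_{j \ne i} \frac{1}{\lambda_j - \lambda_i}$ (the standard Heaviside cover-up formula: multiply by $(s+\lambda_i)$ and set $s = -\lambda_i$). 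Applying the inverse Laplace transform termwise gives $(f_1 * \cdots * f_n)(t) = \sum_i c_i e^{-\lambda_i t}$, and evaluating at $t = 1$ produces $\sum_{i=1}^n \frac{e^{-\lambda_i}}{\prod_{j \ne i}(\lambda_j - \lambda_i)}$, as claimed.

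The only mildly delicate points are bookkeeping ones rather than conceptual obstacles. First, one must be careful that the convolutions are of one-sided functions (supported on $[0,\infty)$) so that $(f*g)(t) = \int_0^t f(t-\tau)g(\tau)\,d\tau$ and the Laplace transform is the right tool; this is consistent with the footnote's definition once we restrict to causal functions. Second, in the inductive step the re-indexing of the simplex variables and the identification of the slack variable must be tracked precisely — this is the step most prone to a sign or index slip, and I would write it out carefully with the scaled simplex $\Delta_t := \{x \in \mathbb{R}_+^{n-1} : \sum_i x_i \le t\}$ to make the induction clean. Everything else — the Laplace transform of an exponential, convolution-to-product, and the partial fractions formula — is entirely standard. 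I expect the induction's variable-tracking to be the ``hard'' part only in the sense of requiring care, not ingenuity.
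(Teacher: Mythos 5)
Your proposal is correct and follows essentially the same route as the paper: unwind the iterated integral into a nested convolution (the paper iterates from the innermost integral outward; you package the same computation as an induction on $n$ over the scaled simplex $\Delta_t$), then apply the Laplace transform, the convolution-to-product rule, and partial fractions exactly as the paper does. One bookkeeping point you should resolve explicitly: when you peel off $(f_1*\cdots*f_{n-1})*f_n$ and rename $\tau = x_{n-1}$, the resulting exponent carries $\lambda_n$ on $x_{n-1}$ (from $f_n(\tau)$) and $\lambda_{n-1}$ on the slack $x_n = t - \sum_{i<n} x_i$ (from the induction hypothesis), so the last two labels come out transposed relative to the target; this is harmless because the integral over $\Delta_t$ is invariant under the volume-preserving change of variables exchanging $x_{n-1}$ with the slack, but you should either invoke that symmetry, or instead peel off $f_1$ and phrase the induction hypothesis for an arbitrary tuple of exponents so that the identity closes literally.
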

\begin{proof}
    We first compute:
    \[
    \begin{split}
        \int_0^1 \int_0^{1-x_1} &\cdots \int_0^{1-x_1-\cdots-x_{n-2}} e^{-\langle  \lambda, x \rangle} dx_{n-1} \cdots dx_1 \\
            &= \int_0^1 e^{-\lambda_1x_1} \cdots \int_0^{1-x_1-\cdots-x_{n-2}} e^{-\lambda_{n-1}x_{n-1}} e^{-\lambda_n(1-x_1-\cdots-x_{n-1})} dx_{n-1} \cdots dx_1 \\
            &= \int_0^1 e^{-\lambda_1x_1} \cdots \int_0^{1-x_1-\cdots-x_{n-3}} e^{-\lambda_{n-2}x_{n-2}} \left.(e^{-\lambda_{n-1} t} * e^{-\lambda_n t})\right|_{t=1-x_1-\cdots-x_{n-2}} dx_{n-2} \cdots dx_1 \\
            &=\cdots = \int_0^1 e^{-\lambda_1x_1} \left.(e^{-\lambda_2 t} * \cdots * e^{-\lambda_n t})\right|_{t=1-x_1} dx_1 = \left.(e^{-\lambda_1 t} * \cdots * e^{-\lambda_n t})\right|_{t=1}.
    \end{split}
    \]
    Using the Laplace transform, we have $\mathcal{L}[e^{-\lambda_i t}](s) = \frac{1}{s+\lambda_i}$ which implies:
    \[
        e^{-\lambda_1 t} * \cdots * e^{-\lambda_n t} = \mathcal{L}^{-1}\left[\frac{1}{(s+\lambda_1)(s+\lambda_2) \cdots (s+\lambda_n)}\right](t).
    \]
    Assuming $\lambda_1 < \lambda_2 < \cdots < \lambda_n$, we can use Lagrange interpolation to compute:
    \[
        \mathcal{L}^{-1}\left[\frac{1}{(s+\lambda_1)(s+\lambda_2) \cdots (s+\lambda_n)}\right](t) = \mathcal{L}^{-1}\left[\frac{1}{Q'(-\lambda_1) \cdot (s+\lambda_1)} + \cdots + \frac{1}{Q'(-\lambda_n) \cdot (s+\lambda_n)}\right](t).
    \]
    Here, $Q(s) := (s+\lambda_1)(s+\lambda_2) \cdots (s+\lambda_n)$. With this we have:
    \[
        \mathcal{L}^{-1}\left[\frac{1}{Q'(-\lambda_1) \cdot (s+\lambda_1)} + \cdots + \frac{1}{Q'(-\lambda_n) \cdot (s+\lambda_n)}\right](t) = \frac{e^{-\lambda_1 t}}{Q'(-\lambda_1)} + \cdots + \frac{e^{-\lambda_n t}}{Q'(-\lambda_n)} = \sum_{i=1}^n \frac{e^{-\lambda_i t}}{\prod_{j \neq i} (\lambda_j-\lambda_i)}.
    \]
    Plugging in $t=1$ gives the equality in the second statement. 
\end{proof}

\noindent
We now state and prove the full evaluation formula for $\mathcal{P}_1$ in full generality.

\begin{proposition}[Generalization of Theorem \ref{thm:partition}] \label{prop:evaluation_formula}
    Fix $n \in \N$, and let $\mu_{S_\C^n}, \mu_1,$ and $\mu_{\Delta_1}$ be the uniform probability distributions on the complex unit sphere in $\C^n$, on $\mathcal{P}_1$, and on the standard simplex in $\R^n$, respectively. For a given analytic function $f$ on the standard simplex the following expressions are equal:
    \begin{enumerate}
   \item $\displaystyle \int_{S_\C^n} f(|v_1|^2,\ldots,|v_n|^2) d\mu_{S_\C^n}(v)$,
        \item $\displaystyle \int_{\mathcal{P}_1} f(\diag(X)) d\mu_1(X)$,
        \item $\displaystyle \int_{\Delta_1} f(x) d\mu_{\Delta_1}(x)$,
        \item $\displaystyle (n-1)! \int_0^1 \int_0^{1-x_1} \cdots \int_0^{1-x_1-\cdots-x_{n-2}} f(x_1,\ldots,x_{n-1},1-x_1-\cdots-x_{n-1}) dx_{n-1} \cdots dx_1$.
    \end{enumerate}
\end{proposition}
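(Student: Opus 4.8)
The plan is to prove the chain of equalities (1) $=$ (2) $=$ (3) $=$ (4) by establishing each link separately, reducing everything to the polynomial identities already recorded in the excerpt together with a standard approximation argument.

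First I would handle (1) $=$ (2), which is essentially a change of variables. The map $v \mapsto vv^*$ sends the complex unit sphere onto $\mathcal{P}_1$, and by definition $\mu_1$ is the pushforward of $\mu_{S_\C^n}$ under this map. Since $(vv^*)_{ii} = |v_i|^2$, we have $\diag(vv^*) = (|v_1|^2,\ldots,|v_n|^2)$, so for any $f$ on the simplex, $f(\diag(vv^*)) = f(|v_1|^2,\ldots,|v_n|^2)$. The change-of-variables formula for pushforward measures then gives (1) $=$ (2) immediately, with no analyticity needed.

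Next, for (2) $=$ (3): I would first prove it for monomials $f(x) = x_1^{m_1}\cdots x_n^{m_n}$ with $m_i \in \{0,1,2,\ldots\}$. By Lemma~\ref{lem:bombieri} (applied after noting $\int_{\mathcal{P}_1} X_{11}^{m_1}\cdots X_{nn}^{m_n}\,d\mu_1 = \int_{S_\C^n}|v_1|^{2m_1}\cdots|v_n|^{2m_n}\,d\mu_{S_\C^n}$ via the step above), the left side equals $\frac{m_1!\cdots m_n!(n-1)!}{(m_1+\cdots+m_n+n-1)!}$. The claim is that the simplex integral $\int_{\Delta_1} x_1^{m_1}\cdots x_n^{m_n}\,d\mu_{\Delta_1}(x)$ equals the same quantity — this is the classical Dirichlet integral, provable by a short induction on $n$ after the change of variables $x_n = 1 - x_1 - \cdots - x_{n-1}$ (this is exactly the ``basic induction'' alluded to in the text, and simultaneously establishes (3) $=$ (4), since $d\mu_{\Delta_1}$ is the normalized Lebesgue measure on $\Delta_1$ and the normalizing constant is $(n-1)!$ — the volume of $\Delta_1$ in the $(x_1,\ldots,x_{n-1})$ chart is $1/(n-1)!$). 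Matching the two monomial formulas gives (2) $=$ (3) on all monomials, hence on all polynomials by linearity. To pass from polynomials to an arbitrary analytic $f$, I would use that $\Delta_1$ is compact, so an analytic $f$ (defined on a neighborhood) is a uniform limit of its Taylor polynomials — or more robustly, invoke Stone–Weierstrass to approximate $f$ uniformly by polynomials — and then pass to the limit under the integral signs, which is justified because $\mu_1$ and $\mu_{\Delta_1}$ are finite (probability) measures and convergence is uniform. The same limiting argument upgrades (3) $=$ (4) from polynomials to analytic $f$.

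The main obstacle, such as it is, is getting the normalization constants exactly right: one must be careful that $\mu_1$, $\mu_{\Delta_1}$, and $\mu_{S_\C^n}$ are the \emph{probability} measures while item (4) is an \emph{unnormalized} Lebesgue integral over the coordinate simplex, so the factor $(n-1)!$ in (4) is precisely $\mathrm{vol}(\Delta_1)^{-1}$ in the chart $(x_1,\ldots,x_{n-1})$; checking this against the Dirichlet formula (take all $m_i = 0$, giving $\frac{(n-1)!}{(n-1)!}=1$) is the sanity check that ties it together. Everything else is bookkeeping: the measure-theoretic limit exchange is routine for finite measures and uniform convergence, and the Dirichlet-integral induction is standard. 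To finish, I would remark that Theorem~\ref{thm:partition} follows by taking $f(x) = e^{-\langle\lambda,x\rangle}$ (analytic), reducing to the case $A = \diag(\lambda)$ by unitary invariance, and then applying Lemma~\ref{lem:laplace_integral} to evaluate expression (4).
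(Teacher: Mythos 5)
Your proposal is correct and follows essentially the same route as the paper's proof: the pushforward of $\mu_{S_\C^n}$ under $v\mapsto vv^*$ for (1)$=$(2), the Bombieri moment formula matched against the Dirichlet integral on monomials, and the constant-Jacobian/volume argument identifying the $(n-1)!$ normalization for (3)$=$(4). The only cosmetic difference is the order in which the links are closed (the paper proves (1)$=$(4) directly rather than (2)$=$(3)), and you are slightly more explicit about the uniform-approximation step that passes from polynomials to analytic $f$, which the paper handles informally earlier in the section.
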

\begin{proof}
    First, for the equality of $(1)$ and $(2)$, note that $\mu_1$ is the pushforward measure of $\mu_{S_\C^n}$ through the map $\psi: S_\C^n \to \mathcal{P}_1$ given by $\psi: v \mapsto vv^*$. (To see this, note that $\psi$ is unitarily invariant and $\mu_{S_\C^n}$ and $\mu_1$ are the unique unitarily invariant measures on their domains.) With this, we have:
    \[
        \int_{\mathcal{P}_1} f(\diag(X)) d\mu_1(X) = \int_{S_\C^n} f(\diag(\psi(v))) d_{S_\C^n}(v) = \int_{S_\C^n} f(|v_1|^2,\ldots,|v_n|^2) d_{S_\C^n}(v).
    \]
    That is, $(1)$ and $(2)$ are equal.

    Next, the equality of $(3)$ and $(4)$ follows from the fact that the map between the two domains of integration (both of which are simplices) is affine. Therefore the determinant of the Jacobian is a constant, and thus we only need to integrate over a constant function to determine that constant. A simple induction shows that it is $(n-1)!$.
    
    To prove the equality of $(1)$ and $(4)$, we compute the integrals on a given monomial $x^m := x_1^{m_1} \cdots x_{n-1}^{m_{n-1}} (1-x_1-\cdots-x_{n-1})^{m_n}$. First, by Lemma \ref{lem:bombieri} we have:
    \[
        \int |v|^{2m} d\mu_{S_\C^n}(v) = \frac{m_1! \cdots m_n! (n-1)!}{(|m|+n-1)!} = \binom{|m|+n-1}{m_1,\ldots,m_n,n-1}^{-1}.
    \]
    Now, note:
    \[
        \int_0^1 x^l (1-x)^m dx = \sum_{k=0}^m \binom{m}{k} (-1)^k \int_0^1 x^{l+k} dx = \frac{l!m!}{(l+m+1)!} \sum_{k=0}^m \frac{\prod_{j \neq k} (l+j+1)}{(-1)^k k! (m-k)!} = \frac{l!m!}{(l+m+1)!}.
    \]
    The last equality is due to Lagrange interpolation, considering the sum as a function of $n$. Further:
    \[
        \frac{l!m!}{(l+m+1)!} = \binom{l+m+2-1}{l,m,2-1}^{-1}.
    \]
    That is, we have equality whenever $n=2$, proving the base case. The rest of the proof goes by induction. First we compute for $\alpha = 1-x_1-\cdots-x_{n-2}$:
    \[
        \int_0^\alpha x_{n-1}^j (\alpha-x_{n-1})^k dx_{n-1} = \int_0^1 (\alpha u)^j (\alpha - \alpha u)^k \alpha du = \alpha^{j+k+1} \int_0^1 u^j (1-u)^k du = \frac{\alpha^{j+k+1} \cdot j!k!}{(j+k+1)!}.
    \]
    With this, we then compute the following by induction, letting $\beta = 1-x_1-\cdots-x_{n-3}$:
    \[
    \begin{split}
        (n-1)! \int_0^1 &\cdots \int_0^\alpha x_1^{m_1} \cdots x_{n-1}^{m_{n-1}} (\alpha-x_{n-1})^{m_n} dx_{n-1} \cdots dx_1 \\
            &= \frac{m_{n-1}!m_n!}{(m_{n-1}+m_n+1)!} \cdot (n-1)! \int_0^1 \cdots \int_0^{\alpha+x_{n-2}} x_1^{m_1} \cdots x_{n-2}^{m_{n-2}} \alpha^{m_{n-1}+m_n+1} dx_{n-2} \cdots dx_1 \\
            &= \frac{m_{n-1}!m_n!(n-1)}{(m_{n-1}+m_n+1)!} \left[(n-2)! \int_0^1 \cdots \int_0^\beta x_1^{m_1} \cdots x_{n-2}^{m_{n-2}} (\beta-x_{n-2})^{m_{n-1}+m_n+1} dx_{n-2} \cdots dx_1\right] \\
            &= \frac{m_{n-1}!m_n!(n-1)}{(m_{n-1}+m_n+1)!} \cdot \frac{m_1! \cdots m_{n-2}! (m_{n-1}+m_n+1)! (n-1-1)!}{(m_1+\cdots+m_n+1+n-1-1)!} \\
            &= \frac{m_1! \cdots m_n! (n-1)!}{(m_1+\cdots+m_n+n-1)!}.
    \end{split}
    \]
    This completes the proof of equality of $(1)$ and $(4)$.
    
\end{proof}

\section{Groups, Manifolds, and a Brief History of Lie Theory} \label{sec:symmetries_revisited}
The two examples of the previous sections make it clear that understanding the symmetries of a particular problem are crucial to understanding how that problem is solved. 
The optimization and sampling problems on the unitary group posed earlier were able to be simplified by reducing down to the discrete symmetric group. 
In this and the next sections, we  observe and discuss the deeper connections between such continuous and discrete groups for the special class of Lie groups. 
By understanding these connections, we hope to convince the reader that the structure of the example problems discussed above goes much deeper than those examples might initially suggest.

Before moving on, we first need to define formally the fundamental objects with which we will be working: groups and manifolds. 
A \textbf{group} $G$ is a set with a distinguished element $e \in G$ called the \textbf{identity}, equipped with a product operation $*$ and an inversion operation $\cdot^{-1}$. 
The product operation should have the property that $g * e = e * g = g$ for all $g \in G$, the inversion operation should have the property that $e^{-1} = e$, and together they should satisfy $g * g^{-1} = e$. 
Typical examples of groups are the symmetric group $S_n$ and the general linear group $\GL_n(\R)$ of invertible $n \times n$ real matrices.

Informally, a \textbf{(differentiable) manifold} $M$ is a topological space which locally looks like $\R^n$ for some $n$ at every point. 
Formally, this means that in some open neighborhood $U_x$ about any point $x \in M$ there is a homeomorphism $\phi_x: U_x \to V_x \subset \R^n$, such that if $U_x \cap U_y \neq \varnothing$ then $\phi_x \circ \phi_y^{-1}: V_y \to V_x$ is a smooth map (see \cite{Lee00,VGeodesic} for more on manifolds). {Note that one can also replace $\R^n$ by $\C^n$ and ``smooth'' by ``holomorphic'' in this definition to obtain the notion of a \textbf{complex manifold}.} 
 
Given a group $G$ and a manifold $M$, the \textbf{action} of $G$ on $M$ is a function $\phi: G \times M \to M$ for which we write $(g,x)$ as $g \cdot x$. 
A \textbf{group action} must further satisfy the properties that $g \cdot (h \cdot x) = (gh) \cdot x$ and $e \cdot x = x$. 
An \textbf{orbit} of $x \in M$ is the set $\mathcal{O}_x := \{y \in M : y = g \cdot x \text{ for some } g \in G\}$. The \textbf{stabilizer} of $x \in M$ is the set $\Stab_x := \{g \in G : g \cdot x = x\}$. 
A \textbf{group representation} is a vector space $V$ that is acted upon by a group action of $G$. Another way to think of this is to consider a map $\rho$ from an element of the group $g \in G$ to an invertible linear transformation $\rho(g)$ of the vector space $V$. Enforcing $\rho$ to be a group homomorphism (i.e., for any $g_1,g_2 \in G$ we have $\rho(g_1 g_2)=\rho(g_1)\rho(g_2)$) implies the action $g \cdot v := \rho(g) v$ is a group action.

The symmetries of an object are given by a group acting on that object. 
By finding properties that are preserved by these symmetries, we are finding coordinate-free and canonical properties of the underlying object. 
This allows us to get at the heart of the object without having to deal with the technicalities that often come with a particular choice of coordinates or parameterization of the object.

Let us  consider the simple example of the vector space $\R^n$. 
This vector space can be considered to have two continuous groups that act upon it, given by translations and rotations. 
For translation, we have that the additive group $(\R^n,+)$ acts on $\R^n$ via $(x_1,\ldots,x_n) \mapsto (x_1+c_1,\ldots,x_n+c_n)$ for any $c$ in the group $(\R^n,+)$. 
This action is abelian or commutative, since changing the order of mutliple translations does not change the effect of the action. 
For rotation, the group $\SO_n(\R)$ of orthogonal rotation matrices acts on $\R^n$ via $x \mapsto Ox$ for any $O \in \SO_n(\R)$. 
This action is \emph{not} abelian; different orderings of rotations of $\R^n$ can result in different actions.

Given a subset $S \subset \R^n$, the translation and rotation actions preserve many important properties that one might want to know about $S$: volume, surface area, width, distances, and angles. 
The translation action also preserves some properties that rotation does not, for example the slope of a line. 
The rotation action similarly preserves other properties as well, like when a set $S$ is centrally symmetric. 
The key point  is that these symmetries bring to the forefront various properties that we find important or interesting about a set $S$. 
The symmetries serve to remove the coincidental data regarding how a particular set is embedded in $\R^n$, leaving behind only the important information about the set.

To see how such continuous symmetries play a role in the development of algorithms, we now turn to Lie theory. 
As discussed above, Lie theory is be used to generalize and shed light on the two example problems explored in the previous sections. 
And before discussing the basics of Lie theory in general, we first discuss a number of important properties of the unitary group $\U(n)$. 
The unitary group $\U(n)$ is essentially the complex vector space equivalent of the group $\SO(n)$ of rotations of a real vector space. 
The introduction of the complex numbers here serves to simplify the theory, and this is a common theme in Lie theory. 
This example of the unitary group $\U(n)$  serves as a running example for the remainder of our discussion.

Before moving on, we mention a bit of the history of Lie theory. 
Lie theory is a vast subject that has been studied for around 150 years by some of the greatest mathematicians of that time period.
It has had a close association with physics starting with the work of Emmy Noether \cite{Noether1918} and, till today, plays an important role in the search of fundamental laws of nature.
The subject is named after Sophus Lie, who was formally initiated the study of  infinitesimal group actions on a manifold.
This gave rise to the Lie algebra of a Lie group, and Wilhelm Killing then extensively studied the problem of the classification of Lie algebras \cite{killing1889zusammensetzung}. 
Killing gave the correct classification, albeit with a few incomplete or incorrect proofs, and Elie Cartan gave a rigorous proof based on Killing's work (e.g., see \cite{chern1952elie}). 
For a more detailed account of the history of this classification and the emergence of Lie theory, most of which occurred in the 19th century, see \cite{hawkins2012emergence}. 
Later works of E. B. Dynkin \cite{dynkin1947structure} and Nathan Jacobson \cite{jacobson1979lie} made this classification and the surrounding results more accessible, and their results and exposition often serve as the basis of how Lie theory is taught today.

While a classification of Lie algebras gave way to a number of results, mathematicians of the 20th century continued searching for more underlying structure which would enable proofs that did not rely on the classification. 
To this end, Hermann Weyl (see \cite{weyl1968gesammelte}), Claude Chevalley \cite{chevalley1948theorie}, Harish-Chandra \cite{harish1951some}, Hendrik Casimir and Bartel Leendert van der Waerden \cite{casimir1935algebraischer}, and Jean-Pierre Serre (see \cite{serre2012complex}) all played significant roles in the development of this theory, among countless others. 
Many of their findings are the underpinnings of the Lie theory that we  present here.

\section{The Unitary Group} \label{sec:unitary_group}

Given $n \in \N$, the \textbf{unitary group} $\U(n)$ is the group of $n \times n$ unitary matrices under matrix multiplication. 
This is defined more explicitly by
\[
    \U(n) := \{U \in \C^{n \times n} : UU^* = I\},
\]
where $\C^{n \times n}$ is the vector space of all complex $n \times n$ matrices, $U^*$ denotes the conjugate transpose of $U$, and $I$ is the identity matrix. 
Because this group in embedded in the vector space of matrices, it can be equipped with a real\footnote{Even though the unitary group is embedded in the complex vector space $\C^{n \times n}$, it cannot be equipped with a complex manifold structure. 
However, it can be equipped with a real manifold structure by viewing $\C^{n \times n} \cong (\R^{n \times n})^2$. 
This is similar to the fact that the space of Hermitian matrices is a real vector space, even though the matrices have complex entries.} manifold structure coming from the vector space. 
As a manifold, $\U(n)$ is compact. 
Further, the group operations of multiplication and inversion in $\U(n)$, given by $V \mapsto UV$ for fixed $U \in \U(n)$ and $V \mapsto V^{-1}$, are smooth maps with respect to the manifold structure. 
That is, the algebraic group structure and the analytic manifold structure are compatible.

As a manifold, $\U(n)$ has a tangent space over every point $U \in \U(n)$. 
These tangent spaces are isomorphic as vector spaces, so let us in particular look at the tangent space over a canonical point, the identity matrix $I \in \U(n)$. 
We compute the elements of this tangent space in a standard way: by considering derivatives of paths on the manifold through the identity matrix. 
That is, fix any smooth function $f: (-\epsilon, \epsilon) \to \U(n)$ such that $(-\epsilon, \epsilon) \subset \R$ is a small real interval and $f(0) = I$. 
An element $X$ of the tangent space of $\U(n)$ at the identity, denoted $T_I \U(n)$, is given by
\[
    X = \left.\frac{d}{dt}\right|_{t=0} f(t),
\]
and in fact every element of the tangent space of $\U(n)$ can be constructed in this way. 
So far this construction has only used the manifold structure of $\U(n)$, but we now employ the group structure to get a better handle on what $X$ can actually be. 
Since $f(t) \in \U(n)$ for all $t$ near 0, we further have that $f(t) f(t)^* = I$ for all such $t$. 
By the derivative Leibniz rule, this implies
\[
    0 = \left.\frac{d}{dt}\right|_{t=0} \left[f(t) f(t)^*\right] = \left[\left.\frac{d}{dt}\right|_{t=0} f(t)\right] f(0)^* + f(0) \left[\left.\frac{d}{dt}\right|_{t=0} f(t)\right]^* = X + X^*.
\]
In fact, this precisely describes the tangent space $T_I \U(n)$:
\[
    T_I \U(n) = \{X \in \C^{n \times n} : X^* = -X\}.
\]
That is, $T_I \U(n)$ is the (real) vector space of all $n \times n$ skew-Hermitian matrices.

Beyond describing the tangent space, the group structure on $\U(n)$ also provides extra structure on the tangent space $T_I \U(n)$. 
For example, the conjugation action of $\U(n)$ on $\C^{n \times n}$ preserves the space of skew-Hermitian matrices. 
Specifically, for any $U \in \U(n)$ and any $X \in T_I \U(n)$ we have
\[
    (UXU^{-1})^* = U^{-*}X^*U^* = UX^*U^{-1} = -UXU^*,
\]
which precisely says that $UXU^{-1}$ is skew-Hermitian. 
Another interesting preserving ``action'' of $T_I \U(n)$ on itself is given by the matrix commutator, $[X,Y] := XY-YX$. 
For any $X,Y \in T_ \U(n)$, we have
\[
    (XY-YX)^* = Y^*X^*-X^*Y^* = (-Y)(-X)-(-X)(-Y) = -(XY-YX),
\]
which precisely says that $XY-YX$ is skew-Hermitian. 
And finally, a general vector space can be equipped with an inner product in any number of ways, but the tangent space $T_I \U(n)$ has a very special inner product $\langle X, Y \rangle := -\Tr(XY)$. (Note that the negation is required for positive definiteness, since $X,Y$ are \emph{skew}-Hermitian.) 
What makes this inner product so special is its connection to the above actions we described.
Specifically,
\[
    \langle UXU^{-1}, UYU^{-1} \rangle = -\Tr(UXU^{-1}UYU^{-1}) = -\Tr(UXYU^{-1}) = -\Tr(XY) = \langle X, Y \rangle,
\]
and
\[
\begin{split}
    \langle [X,Y], Z \rangle &= -\Tr((XY-YX)Z) = -\Tr(XYZ-YXZ) = \Tr(YXZ-XYZ) \\
        &= \Tr(YXZ-YZX) = \Tr(Y(XZ-ZX)) = -\langle Y, [X,Z] \rangle.
\end{split}
\]
That is, this inner product is \textbf{invariant} under these natural actions of conjugation and commutation.

With this, we  catch a glimpse of the symmetric nature of the unitary group $\U(n)$. 
The above discussion has made it clear that the tangent space of $\U(n)$ is more than just a vector space: The unitary group acts on it in a natural way, it acts on itself in a natural way, and these actions are very compatible with a natural inner product. 
The obvious next questions are as follows: 
What can we gain from all this structure, in terms of optimization and sampling? And how specific to the unitary group is all of this structure?

\section{Lie Theory Basics}\label{sec:Lie}

The structure of the unitary group and its tangent space is not specific to the unitary group, but in fact generalizes to \textbf{Lie groups} and their associated \textbf{Lie algebras}. 
While the features of the unitary group discussed above may have seemed somewhat ad hoc, we  see how they can be derived from general principles in the context of Lie theory. 
After discussing the basics of Lie theory, we  then utilize this theory and demonstrate its importance in the context of optimization and sampling. 
All of this basic material is standard, and can be found in standard references, e.g., \cite{baker2003matrix,knapp2013,hall2003lie}.
Chapter 4 of \cite{knapp2013} is a particularly good reference for compact Lie groups. 
For a view more towards physics and engineering, see also \cite{gilmore2008lie}.

As a final note, while we discuss a number of basic Lie theoretic topics, we do not discuss more advanced topics which would be considered to be crucial to a complete discussion of Lie theory. This includes Lie group and Lie algebra representations, roots, weight theory, Dynkin diagrams, nilpotent Lie algebras, the exponential map, and likely other topics.

\subsection{Lie Groups}

A \textbf{Lie group} $G$ is a group equipped with a manifold structure which is compatible with the group structure.
Compatible here means that the operations of multiplication $g*h$ and inversion $g^{-1}$, given by
\[
    *: G \times G \to G \qquad \text{and} \qquad \cdot^{-1}: G \to G,
\]
are smooth (differentiable) maps between manifolds. 
The typical examples of Lie groups are \textbf{matrix groups}, which are groups of matrices equipped with the manifold structure coming from the fact that a matrix can be considered as a point in $\C^{n \times n}$. 
Some examples are:
\begin{itemize}
    \item $\U(n)$, the group of $n \times n$ unitary matrices.
    \item $\SU(n)$, the group of $n \times n$ unitary matrices with determinant 1.
    \item $\GL_n(\C)$ and $\GL_n(\R)$, the groups of invertible $n \times n$ matrices.
    \item $\GL(V)$, the group of invertible linear maps from the vector space $V$ to itself.
    \item $\SL_n(\C)$ and $\SL_n(\R)$, the groups of invertible $n \times n$ matrices with determinant 1.
    \item $\SO_n(\C)$ and $\SO_n(\R)$, the groups of orthogonal $n \times n$ matrices with determinant 1.
    \item $\Sp_{2n}(\C)$, the group of $2n \times 2n$ symplectic matrices.
    \item $\USp(n)$, the group of $2n \times 2n$ symplectic unitary matrices.
\end{itemize}
Note that even though we often consider matrix groups contained in $\C^{n \times n}$, certain matrix groups may still be considered as either real Lie groups or complex Lie groups. 
In the case of a complex Lie group, $G$ is a complex manifold and the multiplication and inversion maps must actually be holomorphic. 
In the case of a real Lie group, we implicitly identify $\C^{n \times n} \cong (\R^{n \times n})^2$. 
Note that for some matrix groups $G$, like the unitary group $G = \U(n)$, we are forced to consider $G$ as a real Lie group even though elements of the group have complex entries. 
This situation occurs when there is no way to endow our Lie group $G$ with the structure of a complex manifold, and one important example of such groups are the compact Lie groups. 
This point is not very important to our exposition, but it is still worth making to avoid certain confusions.

Given a Lie group $G$, one can define the notion of a \textbf{Lie subgroup}. 
A Lie subgroup $H$ of $G$ is a subgroup of $G$ which is embedded within $G$ as a submanifold. 
A theorem of Cartan \cite{cartan1952theorie} says that if $H$ is a closed subgroup of a real Lie group $G$, then $H$ is in fact a Lie subgroup of $G$. 
A wealth of examples of Lie subgroups come from the matrix Lie groups listed above. 
All of them are Lie subgroups of either $\GL_n(\C)$ or $\GL_n(\R)$. 
Further, $\SU(n)$ is a Lie subgroup of $\U(n)$.
Another important example is the \textbf{compact torus} $\U(1)^n$, where $\U(1)$ is isomorphic to the unit circle. 
There are a number of ways to view $\U(1)^n$ as a Lie subgroup of $\U(n)$, but the typical way of embedding of $\U(1)^n$ into $\U(n)$ is as the set of all diagonal matrices in $\U(n)$.

\subsection{The Tangent Space at the Identity Element}

Since a Lie group $G$ is a manifold, it has a tangent vector space at every point. 
Moreover, since these tangent spaces are all isomorphic as vector spaces, we restrict to a specific canonical tangent space; that is, we restrict to the tangent space of the identity element of the group, denoted $T_e G$. 
Using only the manifold structure of a Lie group $G$, there is a standard way to characterize the tangent space of $G$ at a given point, and we now describe this for $T_e G$. 
This is precisely what we did for the group $\U(n)$ above. 
Let $f: (-\epsilon,\epsilon) \to G$ be an injective smooth map from a small real interval about 0 to the Lie group $G$ such that $f(0) = e$, the identity element. 
Such an $f$ is equivalently defined as a short path in $G$ which passes through the identity element at $t=0$. The expression
\[
    X = \left.\frac{d}{dt}\right|_{t=0} f(t)
\]
then gives rise to an element $X$ of the tangent space $T_e G$ according to the direction of the path at the identity. 
Further, every element of the tangent space can be formed in this way, by choosing an appropriate smooth path through the identity element of $G$.

\begin{example}[$\GL_n(\C)$] \label{ex:GL_n}
    Beyond the unitary group $\U(n)$ above, the simplest example is $\GL_n(\C)$ (and the same argument works for $\GL_n(\R)$). 
    Since this Lie group is embedded in the space of $n \times n$ complex matrices $\C^{n \times n}$, the elements of the tangent space $T_I \GL_n(\C)$ are  $n \times n$ complex matrices. 
    Since the determinant of a matrix is continuous in its entries, it is straightforward to see that $\GL_n(\C)$ is an $n^2$-dimensional complex manifold. 
    Thus the tangent space $T_I \GL_n(\C)$ is an $n^2$-dimensional subspace of the vector space of all complex $n \times n$ matrices, and therefore $T_I \GL_n(\C)$ must in fact be equal to the whole vector space of such matrices.
\end{example}

\subsection{The (Group) Adjoint Action}

We now incorporate the group structure of $G$ by defining an action of the group $G$ on the tangent space $T_e G$. 
Given $X \in T_e G$, let $f: (-\epsilon, \epsilon) \to G$ be a short path in $G$ such that $f(0) = e$ and $\left.\frac{d}{dt}\right|_{t=0} f(t) = X$. We now act on $X$ by $g \in G$ via the \textbf{(group) adjoint action}, given by
\[
    \Ad_g(X) := \left.\frac{d}{dt}\right|_{t=0} \left[g \cdot f(t) \cdot g^{-1}\right].
\]
Note that by the compatibility properties of the group and manifold structures of $G$, we have that $g \cdot f(t) \cdot g^{-1}$ is a smooth path such that $\left[g \cdot f(t) \cdot g^{-1}\right](0) = g \cdot e \cdot g^{-1} = e$.
Therefore the above expression for $\Ad_g(X)$ gives a well-defined element of the tangent space $T_e G$. 
It is then technical but straightforward that this action does not depend on the choice of $f$, and in turn that this action gives rise to a well-defined group action on $T_e G$.

In the case of matrix Lie groups, such as those listed above, there is a straightforward way to define this (group) adjoint action. 
This is given by matrix conjugation, as described in the case of the unitary group in Section \ref{sec:unitary_group}. 
Specifically, if $G$ is a matrix group, then the tangent space $T_e G$ can be considered as a vector space of matrices. The (group) adjoint action of $G$ on $T_e G$ is given by
\[
    \Ad_g: X \mapsto gXg^{-1}
\]
for all $g \in G$ and $X \in T_e G$. 
For essentially everything we discuss here, the reader can keep this conjugation action in mind for the (group) adjoint action of a Lie group $G$ on its tangent space $T_e G$.

\begin{remark}[Adjoint orbits] \label{rem:adjoint_orbits_KKS}
    The (group) adjoint action of $G$ on its tangent space $T_e G$ is a group action, and thus this action partitions the vector space $T_e G$ into orbits, referred to as the \textbf{adjoint orbits} of $G$. 
    Given $X \in T_e G$, we denote the adjoint orbit of $X$ by $\mathcal{O}_X := \{\Ad_g(X) : g \in G\}$. In the case of the unitary group, the adjoint orbits are determined precisely by the eigenvalues of the matrices in $T_I \U(n)$. That is, the adjoint orbits of $\U(n)$ can be defined via
    \[
        \mathcal{O}_X =\{Y \in T_I \U(n) : \eig(Y) = \eig(X)\}.
    \]
    For compact groups $G$, we note that such adjoint orbits can be given the structure of a symplectic manifold via the \textbf{Kirillov-Kostant-Souriau (KKS) symplectic form}. 
    We do not discuss this further here, but instead direct the interested reader to \cite{kirillov2004lectures}. 
    We also note that this KKS form typically refers to \emph{co}adjoint orbits rather than adjoint orbits (as in \cite{kirillov2004lectures}), but these are essentially equivalent in our case, see Remark \ref{rem:coadjoint}.
\end{remark}

\subsection{The (Algebra) Adjoint Action} \label{sec:algebra_adjoint_action}

When considering the unitary group, we constructed a certain action of the tangent space on itself, given by the matrix commutator.
It is easy to see that this operation is bilinear, and as we saw in Section \ref{sec:unitary_group}, it preserves the tangent space $T_I \U(n)$ of the unitary group. Because of this, the operation of matrix conjugation can be thought of as a product-like operation on $T_I \U(n)$.

In a general Lie group, this action, called the \textbf{(algebra) adjoint action}, can be constructed from the group adjoint action discussed in the previous section. 
To do this, we take the derivative of the action of $\Ad_g$. 
More formally, the action of $G$ on $T_e G$ can be thought of as a group homomorphism, given by
\[
    \phi: G \to \GL(T_e G) \qquad \text{via} \qquad \phi: g \mapsto \Ad_g.
\]
The group $\GL(T_e G)$ is in fact a Lie group, and thus a manifold, and the above map $\phi$ is a smooth group homomorphism. 
Thus we can consider the differential of the map $\phi$. By definition, this differential gives rise to linear maps from the tangent spaces of $G$ to the tangent spaces of $\GL(T_e G)$. 
When restricted to the tangent space $T_e G$, this gives rise to a map
\[
    d\phi_e: T_e G \to T_I \left[\GL(T_e G)\right] \qquad \text{via} \qquad d\phi_e: X \mapsto \ad_X.
\]
That is, we have defined $\ad_X$ to be the element of $T_I \left[\GL(T_e G)\right]$ given by $d\phi_e(X)$ for $X \in T_e G$. 
It not yet clear precisely what this $\ad_X$ is, and we  describe this further now.

We have already described $T_I \left[\GL(T_e G)\right]$ as the vector space of all (not necessarily invertible) linear maps from $T_e G$ to itself, see Example \ref{ex:GL_n}. 
Thus $\ad_X$ is a linear map on the tangent space $T_e G$ of $G$. We then refer to this as the \textbf{(algebra) adjoint action} of $T_e G$ on itself.

In the case that $G$ is a matrix group, we now compute explicitly this (algebra) adjoint action. 
Given a short path $f: (-\epsilon,\epsilon) \to G$ such that $f(0) = I$ and $\left.\frac{d}{dt}\right|_{t=0} f(t) = X$, and any $Y \in T_I G$, the differential is defined via
\[
\begin{split}
    d\phi_I\left(\left.\frac{d}{dt}\right|_{t=0} f(t)\right) Y &= \left.\frac{d}{dt}\right|_{t=0} \phi(f(t)) Y = \left.\frac{d}{dt}\right|_{t=0} \left[f(t) Y f(t)^{-1}\right] \\
        &= \left(\left.\frac{d}{dt}\right|_{t=0} f(t)\right) Y f(0)^{-1} + f(0) Y \left(\left.\frac{d}{dt}\right|_{t=0} f(t)^{-1}\right) \\
        &= XY + Y\left(-f(0)^{-2} \left.\frac{d}{dt}\right|_{t=0} f(t)\right) = XY - YX.
\end{split}
\]
That is, $\ad_X(Y) = XY-YX$ is the matrix commutator for all matrix groups $G$, including the unitary group.

\subsection{Lie Algebras} \label{sec:lie_alg_via_group}

With the algebra adjoint action in hand, we are ready to define the Lie algebra of a Lie group. 
Given a Lie group $G$, its \textbf{Lie algebra}, denoted $\mathfrak{g}$, is the tangent vector space $T_e G$ equipped with a product-like operation called the \textbf{Lie bracket}. 
The Lie bracket on $\mathfrak{g}$ is defined via
\[
    [\cdot, \cdot]: \mathfrak{g} \times \mathfrak{g} \to \mathfrak{g} \qquad \text{via} \qquad [X,Y] := \ad_X(Y).
\]
The Lie bracket has a few important properties:
\begin{enumerate}
    \item \textbf{Bilinearity:} $[aX+bY,Z] = a[X,Z] + b[Y,Z]$ and $[X,bY+cZ] = b[X,Y] = c[X,Z]$.
    \item \textbf{Anti-symmetry:} $[X,Y] = -[Y,X]$.
    \item \textbf{Jacobi identity:} $[X,[Y,Z]] + [Y,[Z,X]] + [Z,[X,Y]] = 0$.
\end{enumerate}
While the Lie bracket is referred to as a product-like operation, it is not a typical product because it is not associative. 
However, one can rewrite the Jacobi identity to view it as an ``associativity error'' formula as follows:
\[
    [[X,Y],Z] - [X,[Y,Z]] = [Y,[Z,X]].
\]
In the case of Lie algebras associated to matrix groups, the Lie bracket is given by the commutator $[X,Y] = XY-YX$ as discussed above.
It is straightforward to see that the matrix commutator satisfies these three properties.

As with Lie subgroups of Lie groups, there is a notion of a \textbf{Lie subalgebra} of a Lie algebra. A Lie subalgebra $\mathfrak{h} \subset \mathfrak{g}$ is a vector subspace with Lie bracket inherited from $\mathfrak{g}$ such that $[X,Y] \in \mathfrak{h}$ for all $X,Y \in \mathfrak{h}$. 
According to the Lie group-Lie algebra correspondence, every such Lie subalgebra $\mathfrak{h} \subset \mathfrak{g}$ is the Lie algebra of a Lie subgroup $H \subset G$.

There are various categories that different Lie algebras can be put into, but the most important ones for us are \textbf{compact Lie algebra}, \textbf{abelian Lie algebra}, and \textbf{semisimple Lie algebra}. 
A compact Lie algebra is a Lie algebra associated to a compact Lie group. 
For example, the Lie algebra $\mathfrak{u}_n$ associated to the Lie group $\U(n)$ is a compact Lie algebra. 
An abelian Lie algebra is a Lie algebra with trivial Lie bracket: $[X,Y] = 0$ for all $X,Y \in \mathfrak{g}$. 
A semisimple Lie algebra is a Lie algebra with trivial \textbf{center}; i.e., for which there is no nonzero $X \in \mathfrak{g}$ such that $[X,Y] = 0$ for all $Y \in \mathfrak{g}$. Note that $\mathfrak{su}_n$ is semisimple, but $\mathfrak{u}_n$ is not because the matrix $i \cdot I$ is skew-Hermitian and commutes with all matrices in $\mathfrak{u}_n$.

\begin{example}[Lie algebra of the unitary group]
    In the case of the unitary group $\U(n)$, the associated Lie algebra $\mathfrak{u}_n$ is the (real) vector space of all skew-Hermitian matrices, as discussed in Section \ref{sec:unitary_group}. As stated above, the Lie bracket of $\mathfrak{u}_n$ is the commutator since $\U(n)$ is a matrix group. To see that this is well-defined, we compute the following for any skew-Hermitian matrices $X,Y$:
    \[
        [X,Y]^* = (XY-YX)^* = Y^*X^* - X^*Y^* = -XY + YX = -[X,Y].
    \]
    That is, we have that $[X,Y]$ is itself skew-Hermitian, and thus $[\cdot,\cdot]: \mathfrak{u}_n \times \mathfrak{u}_n \to \mathfrak{u}_n$ is well-defined.
\end{example}

\begin{example}[Lie algebra of the compact torus]
    The subspace of diagonal matrices in $\mathfrak{u}_n$, all of which have purely imaginary diagonal entries, is an abelian Lie subalgebra of the compact Lie algebra $\mathfrak{u}_n$. This Lie subalgebra is then the Lie algebra of the compact abelian subgroup $\U(1)^n \subset \U(n)$, given by the diagonal matrices in $\U(n)$.
\end{example}
Although we have constructed Lie algebras here via the tangent spaces of Lie groups, one can also define an \textbf{abstract Lie algebra} as any vector space with a Lie bracket operation which satisfies the above three properties. 
Such a Lie algebra is not a priori associated to a Lie group, but a classical classification theorem shows that nice Lie algebras defined in this way can be associated to matrix Lie groups.
We discuss this classification further in Section \ref{sec:classification}.

\section{Lie Theory for Optimizing Over Orbits}

Having discussed the basics of Lie theory, we were able to answer one of our questions; i.e., the structure we found in the unitary group and its tangent space is not at all specific to the unitary group. 
Given a Lie group $G$ in general, one can construct the associated Lie algebra $\mathfrak{g}$ using the tangent space at the identity, and the conjugation action and commutator of the unitary group $\U(n)$ and its Lie algebra $\mathfrak{u}_n$ generalize to the adjoint actions $\Ad_g$ and $\ad_X$ and the Lie bracket $[\cdot, \cdot]$ of any $G$ and $\mathfrak{g}$.

This leaves the other question: What can we gain from all this structure, in terms of optimization and sampling? 
To answer this question, we first discuss a  simple optimization problem that one can associate to the adjoint orbits of a Lie group $G$. 
To get a sense of this optimization problem, we first consider a related problem for the unitary group.

Let $\mathcal{H}(n)$ denote the (real) vector space of all $n \times n$ Hermitian matrices, let $Z \in \mathcal{H}(n)$ be any real diagonal matrix, let $X = e_1 e_1^\top \in \mathcal{H}(n)$, and let $\mathcal{O}_X$ denote the conjugation orbit of $X$ with respect to the group $\U(n)$. We consider the optimization problem
\[
    \min_{Y \in \mathcal{O}_X} \langle Y, Z \rangle_F,
\]
where $\langle Y, Z \rangle_F = \Tr(YZ)$ is the Frobenius inner product on Hermitian matrices.
Another way to write this is as
\[
    \min_{Y \in \mathcal{O}_X} \langle Y, Z \rangle_F = \min_{\|v\|_2 = 1} v^* Z v.
\]
It is well-known that this problem is equivalent to computing the minimum eigenvalue of $Z$, which is precisely the problem studied in Section \ref{sec:eigenvalue} above.
Since $Z$ is diagonal, this problem is further equivalent to computing the minimum entry of the diagonal of $Z$. 
This can be written as
\[
    \min_{Y \in \mathcal{O}_X} \langle Y, Z \rangle_F = \min_{\sigma \in S_n} (\sigma \cdot e_1)^\top Z (\sigma \cdot e_1),
\]
where $S_n$ is the symmetric group and $\sigma \cdot v$ acts on the vector $v$ by permuting its entries.

The next question then is: Why study further such a simple optimization problem? 
What we see in the rest of this section is that optimization problem, while simple, has a Lie-theoretic interpretation which demonstrates an interesting connection between the nonconvex adjoint orbits of a Lie group and certain associated convex polytopes. 
To do this, we need to generalize some of the notions used in the above simple optimization problem: specifically, the Frobenius inner product, the notion of a diagonal matrix, and the symmetric group $S_n$.

\subsection{The Killing Form}

In the above optimization problem on the unitary group, we considered the conjugation action of $\U(n)$ on $\mathcal{H}(n)$, the space of Hermitian matrices equipped with a particular inner product. 
To generalize this to other Lie groups beyond $\U(n)$, an easy observation is that $\mathcal{H}(n) = i \cdot \mathfrak{u}_n$, and thus up to this scaling by $i$ we can view $\mathcal{H}(n)$ as essentially equal to the Lie algebra $\mathfrak{u}_n$ of $\U(n)$.

The next question is: What then is the Lie-theoretic equivalent of the Frobenius inner product? 
To answer this, let us recall two properties of the Frobenius inner product which we discussed above. Specifically, for any $X,Y \in \mathfrak{u}_n$ and any $U \in \U(n)$, we have
\[
    \langle \Ad_U(X), \Ad_U(Y) \rangle_F = \Tr(UXU^{-1}UYU^{-1}) = \Tr(UXYU^{-1}) = \Tr(XY) = \langle X, Y \rangle_F
\]
and
\[
\begin{split}
    \langle \ad_X(Y), Z \rangle_F &= \Tr((XY-YX)Z) = \Tr(XYZ-YXZ) = \Tr(YXZ-XYZ) \\
        &= -\Tr(YXZ-YZX) = -\Tr(Y(XZ-ZX)) = -\langle Y, \ad_X(Z) \rangle_F.
\end{split}
\]
That is, the Frobenius inner product is invariant under the group and algebra adjoint actions of $\U(n)$ and $\mathfrak{u}_n$ on $\mathfrak{u}_n$. 
Note that the Frobenius ``inner product'' is actually \emph{negative} definite when applied to $\mathfrak{u}_n$ since $\mathfrak{u}_n$ consists of \emph{skew}-Hermitian matrices.

How can we generalize this to other Lie groups and algebras? 
In the general case, a Lie algebra $\mathfrak{g}$ is not necessarily a vector space of matrices, and so we cannot apply the trace. However, we know that $\ad_X$ is a linear operator on $\mathfrak{g}$, and thus can be considered as a matrix according to some chosen basis. With this, we can define a Frobenius-like symmetric bilinear form $B$ on $\mathfrak{g}$ in general, via:
\[
    B(X, Y) := \Tr(\ad_X \circ \ad_Y).
\]
We now attempt to prove some invariance properties similar to that of the Frobenius inner product. 
First, let us determine how $\Ad_g$ and $\ad_Z$ act on the linear map $\ad_X$. First we have
\[
\begin{split}
    \ad_{\Ad_g(X)}(Y) &= [gXg^{-1}, Y] = gXg^{-1}Y - YgXg^{-1} \\
        &= g(Xg^{-1}Yg - g^{-1}YgX)g^{-1} = [\Ad_g \circ \ad_X \circ \Ad_g^{-1}](Y),
\end{split}
\]
which says that $\ad_{\Ad_g(X)} = \Ad_g \circ \ad_X \circ \Ad_g^{-1}$. And second we have
\[
    \ad_{\ad_Z(X)}(Y) = [[Z,X],Y] = [Z,[X,Y]] - [X,[Z,Y]] = [\ad_Z \circ \ad_X - \ad_X \circ \ad_Z](Y),
\]
which says that $\ad_{\ad_Z(X)} = \ad_Z \circ \ad_X - \ad_X \circ \ad_Z$, the matrix commutator of $\ad_Z$ and $\ad_X$. 
This says that $\Ad_g$ and $\ad_Z$ act on $\ad_X$ via the adjoint actions of $\GL_n$ and $\mathfrak{gl}_n$ on $\ad_X \in \mathfrak{gl}_n$. 
The same arguments as in the case of $\U(n)$ and $\mathfrak{u}_n$ then imply the desired invariance properties. 
Specifically, we have
\[
    B(\Ad_g(X), \Ad_g(Y)) = B(X,Y) \qquad \text{and} \qquad B(\ad_Z(X), Y) = -B(X, \ad_Z(Y)).
\]
This symmetric bilinear form is called the \textbf{Killing form}. 
For semisimple matrix groups, this Killing form can be explicitly expressed with reference to the Frobenius inner product, via
\[
    B(X, Y) = C_{\mathfrak{g}} \cdot \Tr(X, Y),
\]
where $C_{\mathfrak{g}}$ is some positive constant which depends on the Lie algebra $\mathfrak{g}$ being considered. 

\begin{remark}
    The expression $B(X, Y) = C_{\mathfrak{g}} \cdot \Tr(X, Y)$ holds for the semisimple Lie algebras $\mathfrak{su}_n$, $\mathfrak{sl}_n$, $\mathfrak{so}_n$, and $\mathfrak{sp}_{2n}$, among others. These expressions do not quite hold for $\mathfrak{u}_n$ and $\mathfrak{gl}_n$ because they have nontrivial center. 
    In fact, the Killing form is identically zero on the center of $\mathfrak{g}$, but this point is not be very important to our discussion, see Remark \ref{rem:Killing_u_n}.
\end{remark}
As we saw with $\mathfrak{u}_n$, this bilinear form is not an inner product on $\mathfrak{g}$.
However, for a compact Lie algebra the Killing form is always negative semidefinite. 
And further, for a compact semisimple Lie algebra the Killing form is negative definite, and thus the negation of the Killing form gives rise to an $\Ad_g$-invariant and $\ad_Z$-invariant inner product on $\mathfrak{g}$ in this case.

\subsection{Maximal Tori and Cartan Subalgebras}

The Killing form defined above gives us a way to generalize the Frobenius inner product with its invariance properties to other Lie algebras. 
We next generalize the notion of diagonal matrix to a general Lie algebra.

Given a Lie algebra $\mathfrak{g}$, there exist certain subalgebras $\mathfrak{t} \subset \mathfrak{g}$ called Cartan subalgebras. 
While such subalgebras are typically defined for any Lie algebra, we only define them for compact Lie algebras here.

\begin{definition}[Cartan subalgebra]
    Given a compact Lie algebra $\mathfrak{g}$, a \textbf{Cartan subalgebra} or \textbf{maximal abelian subalgebra} $\mathfrak{t} \subset \mathfrak{g}$ is an abelian subalgebra of $\mathfrak{g}$ which is not properly contained in any other abelian subalgebra of $\mathfrak{g}$.
\end{definition}
In the literature, the term Cartan subalgebra is typically used in reference to complex Lie groups, whereas often the term \textbf{maximal abelian subalgebra} is used in reference to compact Lie groups. 
We also note here a standard result regarding Cartan subalgebras of compact Lie algebras.

\begin{proposition}[Prop. 4.30 of \cite{knapp2013}] \label{prop:Cartan_torus}
    If $G$ is a compact group with compact Lie algebra $\mathfrak{g}$, then $\mathfrak{t} \subset \mathfrak{g}$ is a Cartan subalgebra if and only if $\mathfrak{t}$ is the Lie subalgebra associated to a maximal torus $T \subset G$.
\end{proposition}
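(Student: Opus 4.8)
The plan is to prove both implications using the Lie group--Lie algebra correspondence (invoked in the excerpt) together with the structure theory of compact connected abelian Lie groups, namely that every such group is a torus. Throughout I take $G$ to be connected, as is standard when discussing maximal tori; the general case reduces to the identity component $G^0$, which has the same Lie algebra $\mathfrak{g}$, and whose tori and Cartan subalgebras are the relevant ones. I will freely use two standard facts: (i) a connected Lie subgroup $H \subset G$ is abelian if and only if its Lie algebra $\mathfrak{h}$ is abelian; and (ii) the closure in $G$ of any connected abelian subgroup is a compact connected abelian Lie subgroup, hence a torus.

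For the direction ($\Leftarrow$), suppose $\mathfrak{t} = \mathrm{Lie}(T)$ for a maximal torus $T \subset G$. Since $T$ is abelian, $\mathfrak{t}$ is abelian by fact (i). To see it is maximal abelian, let $\mathfrak{a} \supseteq \mathfrak{t}$ be any abelian subalgebra, and let $A \subseteq G$ be the connected Lie subgroup with $\mathrm{Lie}(A) = \mathfrak{a}$ supplied by the Lie group--Lie algebra correspondence. Then $A$ is abelian by fact (i), and $T \subseteq A$ since $\mathfrak{t} \subseteq \mathfrak{a}$ and both subgroups are connected. By fact (ii), $\overline{A}$ is a torus containing $T$, and $\dim \overline{A} \geq \dim A = \dim \mathfrak{a} \geq \dim \mathfrak{t} = \dim T$; maximality of $T$ forces $\overline{A} = T$, hence $\mathfrak{a} \subseteq \mathrm{Lie}(\overline{A}) = \mathfrak{t}$, so $\mathfrak{a} = \mathfrak{t}$. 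Thus $\mathfrak{t}$ is a Cartan subalgebra.

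For the direction ($\Rightarrow$), suppose $\mathfrak{t}$ is a Cartan subalgebra, i.e.\ maximal abelian, and let $T$ be the connected subgroup with $\mathrm{Lie}(T) = \mathfrak{t}$. By fact (i), $T$ is abelian, so by fact (ii) its closure $\overline{T}$ is a torus, with Lie algebra $\overline{\mathfrak{t}} := \mathrm{Lie}(\overline{T}) \supseteq \mathfrak{t}$; and $\overline{\mathfrak{t}}$ is abelian by fact (i) applied to $\overline{T}$. Maximality of $\mathfrak{t}$ gives $\overline{\mathfrak{t}} = \mathfrak{t}$, hence $\dim \overline{T} = \dim T$ and $T = \overline{T}$ is closed, so $T$ is itself a torus. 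Finally, if $T \subseteq T'$ for a torus $T'$, then $\mathfrak{t} \subseteq \mathfrak{t}' := \mathrm{Lie}(T')$ with $\mathfrak{t}'$ abelian, so maximality of $\mathfrak{t}$ gives $\mathfrak{t}' = \mathfrak{t}$ and therefore $T' = T$; hence $T$ is a maximal torus.

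The main obstacle is the non-closedness of the connected subgroup attached to an abelian subalgebra --- the classical ``irrational line on a torus'' phenomenon --- which is precisely what blocks a naive bijection between abelian subalgebras and tori. The resolution above is to always pass to the closure and invoke the classification of compact connected abelian Lie groups as tori; once that input is granted, the remaining bookkeeping (dimension counts, and matching connected subgroups with their Lie algebras) is routine. A secondary point worth pinning down before running the argument is the connectedness hypothesis on $G$ and the precise meaning of ``maximal torus'' --- maximal among tori, equivalently maximal among connected abelian subgroups --- so that the maximality comparisons in both directions are literally about the same class of objects.
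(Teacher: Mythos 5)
The paper does not give a proof of this proposition; it is stated as a citation to Knapp (Prop.\ 4.30) and used as a black box. Your self-contained argument is correct and is essentially the standard one: it isolates exactly the right auxiliary facts (connected subgroup abelian $\iff$ its Lie algebra abelian; closure of a connected abelian subgroup of a compact group is a torus), correctly identifies the one genuine obstruction (the connected subgroup integrating a maximal abelian subalgebra need not be closed \emph{a priori} --- the dense winding-line phenomenon), and resolves it by passing to the closure and using the classification of compact connected abelian Lie groups. The reduction to the identity component and the remark that ``maximal torus'' means maximal among tori (equivalently, among connected abelian subgroups) are also appropriate bookkeeping.

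Two very minor points. In the ($\Leftarrow$) direction, the chain $\dim \overline{A} \geq \dim A = \dim\mathfrak{a} \geq \dim\mathfrak{t} = \dim T$ is not actually used: once you know $\overline{A}$ is a torus containing $T$, maximality of $T$ already forces $\overline{A} = T$, and then $A \subseteq \overline{A} = T$ gives $\mathfrak{a} \subseteq \mathfrak{t}$ directly, so the dimension count can be dropped. In the ($\Rightarrow$) direction, the step ``$\dim \overline{T} = \dim T$ implies $T = \overline{T}$'' silently invokes the fact that a connected Lie subgroup of a connected Lie group having the same dimension must coincide with it; this is true, but it is worth naming since it is exactly the point where non-closedness is finally ruled out.
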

To get a more concrete feel for the Cartan subalgebra, let us consider $\mathfrak{t} \subset \mathfrak{u}_n$ to be the set of all diagonal matrices in $\mathfrak{u}_n$. 
For all $X,Y \in \mathfrak{t}$, we have that
\[
    [X,Y] = XY-YX = 0
\]
since $X$ and $Y$ are diagonal matrices. This implies $\mathfrak{t}$ is an abelian subalgebra of $\mathfrak{u}_n$. 
Further, it is straightforward to see that for all $Z \in \mathfrak{u}_n$ which is not diagonal, there is some diagonal $X \in \mathfrak{u}_n$ for which $[X,Z] \neq 0$. 
This implies $\mathfrak{t}$ is in fact a maximal abelian subalgebra of $\mathfrak{u}_n$. The Lie subgroup $T \subset \U(n)$ associated to $\mathfrak{t}$ as described by Proposition \ref{prop:Cartan_torus} above is then the compact torus of all diagonal matrices in $\U(n)$.

\begin{remark}
    The usual purpose of a Cartan subalgebra $\mathfrak{h}$ of a Lie algebra $\mathfrak{g}$ in the complex semisimple case is for the classification of semisimple Lie algebras, along with their representations and the representations of their associated Lie groups.
    The notion of a group representation was defined in Section \ref{sec:symmetries_revisited}, and a Lie algebra representation can be defined via the differential of a group representation.
    The main point is that the linear operators $\ad_H$ for $H \in \mathfrak{h}$ are all simultaneously diagonalizable. 
    From this, one can construct the \textbf{root system} of the Lie algebra $\mathfrak{g}$, which is a combinatorial construction which helps with Lie algebra classification and which we do not describe any further here. 
    When considering a representation $V$ of a given Lie algebra $\mathfrak{g}$, the action of the Cartan subalgebra $\mathfrak{h}$ via linear operators on $V$ is also simultaneously diagonalizable. 
    The associated eigenvalues can then be used to determine the decomposition of $V$ into irreducible subrepresentations via \textbf{highest weight theory}.
    We do not explore any of this further here, but the interested reader can find detailed explanations in any standard reference, e.g., \cite{baker2003matrix,knapp2013,hall2003lie}.
\end{remark}

\subsection{The Weyl Group}\label{sec:Weyl}

The Killing form defined above gives us a way to generalize the Frobenius inner product with its invariance properties to other Lie algebras, and the Cartan subalgebras of the previous section generalize the notion of diagonal matrices. 
The last thing we need to discuss and generalize is the symmetric group that appeared in our simple optimization problem above.

Associated to every semisimple or compact Lie group $G$ is a particular finite group $W$ called its Weyl group, and in the case of $\U(n)$, this is precisely the symmetric group $S_n$. 
For us, the key intuition behind this notion is that the Weyl group connects the continuous symmetries of a Lie group $G$ and its Lie algebra $\mathfrak{g}$ to underlying and hidden discrete finite symmetries of $G$ and $\mathfrak{g}$. 
This is  made very clear when we reinterpret our simple optimization problem in the language of Lie theory in the next section.

There are different ways to define the Weyl group of a given Lie group, but we describe just one here which specifically pertains to connected compact Lie groups.

\begin{definition}[Weyl group]
    Let $G$ be a connected compact Lie group, and let $T \subset G$ be a maximal torus in $G$. Let the normalizer of $T$ in $G$ be defined as usual, via $N(T) := \{g \in G : gtg^{-1} \in T, ~ \forall t \in T\}$. 
    The \textbf{Weyl group} of $G$ is defined to be the quotient group $W := N(T)/T$.
\end{definition}
Although it is not immediately clear from the definition, the definition of the Weyl group is independent of the choice of maximal torus $T \subset G$, and additionally the Weyl group is always a finite group. 
To give intuition for the fact that $W$ is finite, we explicitly construct it in the case of the unitary group as follows. 
First, we let $T \subset \U(n)$ be the subgroup of all diagonal unitary matrices; we know this is a maximal torus in $\U(n)$ for the discussion of the previous section. 
The normalizer $N(T)$ of $T$ is then equal to the set of matrices of the form $PD$, where $P$ is a permutation matrix and $D$ is a diagonal unitary matrix. 
From this description, it is straightforward to see that $N(T)/T$ is the set of permutation matrices, which is isomorphic to the symmetric group $S_n$.

\subsection{The Kostant Convexity Theorem}

We now have all the Lie-theoretically generalized pieces from our simple optimization problem, and we can now rewrite it in these generalized terms. 
Let $G$ be a connected compact Lie group with compact semisimple Lie algebra $\mathfrak{g}$, let $T \subset G$ be a maximal torus with associated Lie algebra $\mathfrak{t}$, and let $B$ be the Killing form of $\mathfrak{g}$. 
Our optimization problem can then be generalized as follows. Given $X,Z \in \mathfrak{t}$ with $\mathcal{O}_X$ being the adjoint orbit of $X$ in $\mathfrak{g}$, we want to optimize:
\[
    \min_{Y \in \mathcal{O}_X} -B(X, Z).
\]
Recall that the Killing form $B$ is negative definite since $G$ is compact semisimple.

But how can we solve this optimization problem? 
In the original example, the explicitness of the formulation gave us a few simple answers to this question, via the diagonal entries and the eigenvalues. 
In order to generalize these techniques to the above Lie-theoretic version of the problem, we need the following theorem which demonstrates the hidden convexity of the adjoint orbits of the Lie group $G$.

\begin{theorem}[Kostant convexity theorem, \cite{Kostant1973}; see also \cite{ziegler1992kostant}]\label{thm:Kostant}
    Let $G$ be a connected compact Lie group with compact semisimple Lie algebra $\mathfrak{g}$. 
    Let $T \subset G$ be a maximal torus with associated Lie algebra $\mathfrak{t}$. 
    And let $\pi: \mathfrak{g} \to \mathfrak{t}$ be the orthogonal projection induced by the Killing form $B$ (which is negative definite since $\mathfrak{g}$ is semisimple). 
    Then for every $X \in \mathfrak{t}$, the (group) adjoint orbit $\mathcal{O}_X \subset \mathfrak{g}$ of $G$ intersects $\mathfrak{t}$ in a Weyl group orbit $W \cdot X$, and $\pi(\mathcal{O}_X) = \hull(W \cdot X)$.
\end{theorem}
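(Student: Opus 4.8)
The plan is to treat Theorem~\ref{thm:Kostant} as the Lie-theoretic incarnation of the Schur--Horn theorem (Theorem~\ref{prop:schur_horn}): $X \in \mathfrak t$, the Killing-form orthogonal projection $\pi$, the Weyl group $W$, and $\hull(W \cdot X)$ now play the roles of $\diag(\lambda)$, the map $X \mapsto \diag(X)$, the symmetric group $S_n$, and the permutation polytope $P_\lambda$. Accordingly I would prove three things: (i) $\mathcal O_X \cap \mathfrak t = W \cdot X$; (ii) the ``Schur'' inclusion $\pi(\mathcal O_X) \subseteq \hull(W \cdot X)$; and (iii) the ``Horn'' inclusion $\hull(W \cdot X) \subseteq \pi(\mathcal O_X)$. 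For (i), the containment $W \cdot X \subseteq \mathcal O_X \cap \mathfrak t$ is immediate since representatives of $W = N(T)/T$ lie in $G$ and normalize $\mathfrak t$. Conversely, if $Y = \Ad_g(X) \in \mathfrak t$, then $\mathfrak t$ and $\Ad_g(\mathfrak t)$ are maximal abelian subalgebras of $\mathfrak g$ both containing $Y$, hence both are maximal abelian in the centralizer $\mathfrak z_{\mathfrak g}(Y)$; since $\mathfrak z_{\mathfrak g}(Y)$ is the Lie algebra of a compact connected subgroup in which maximal tori are conjugate (Proposition~\ref{prop:Cartan_torus} applied there), there is an $h$ centralizing $Y$ with $\Ad_h \Ad_g(\mathfrak t) = \mathfrak t$, so that $n := hg \in N(T)$ and $\Ad_n(X) = \Ad_h(Y) = Y$, giving $Y \in W \cdot X$.

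For (ii), I would argue via supporting hyperplanes. Because $\pi$ is $B$-orthogonal projection onto $\mathfrak t$, we have $B(\pi(Y), Z) = B(Y, Z)$ for all $Z \in \mathfrak t$, so it is enough to show that $B(\Ad_g(X), Z)$ lies between $\min_{w \in W} B(wX, Z)$ and $\max_{w \in W} B(wX, Z)$ for every $Z \in \mathfrak t$ --- precisely the assertion that $\pi(Y)$ satisfies every supporting inequality of the polytope $\hull(W \cdot X)$. Fix a regular $Z \in \mathfrak t$ and consider the smooth function $g \mapsto B(\Ad_g(X), Z)$ on the compact group $G$. At a critical point, the first-order condition together with the $\ad$-invariance and non-degeneracy of $B$ forces $\Ad_g(X) \in \ker(\ad_Z) = \mathfrak z_{\mathfrak g}(Z) = \mathfrak t$, hence $\Ad_g(X) \in \mathcal O_X \cap \mathfrak t = W \cdot X$ by (i); thus every critical value, in particular the two global extrema, is attained on $W \cdot X$, so the range of this function is exactly $[\min_w B(wX,Z), \max_w B(wX,Z)]$. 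This gives the bound for regular $Z$, and density of regular elements in $\mathfrak t$ extends it to all $Z$, so $\pi(\mathcal O_X) \subseteq \hull(W \cdot X)$.

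For (iii), the harder direction, note first that $W \cdot X \subseteq \mathcal O_X \cap \mathfrak t$ together with $\pi|_{\mathfrak t} = \mathrm{id}$ gives $W \cdot X \subseteq \pi(\mathcal O_X)$, so it would suffice to show $\pi(\mathcal O_X)$ is convex. The elementary route mirrors the Schur--Horn sketch: for a root $\alpha$, conjugating a representative of a point of $\pi(\mathcal O_X)$ by the subgroup $\SU(2)_\alpha \subseteq G$ attached to $\alpha$ and projecting by $\pi$ traces out the whole segment from that point to its reflection $s_\alpha$-image, by the same kind of $2 \times 2$ computation as in the proof of Theorem~\ref{prop:schur_horn}; one then needs a combinatorial ``local-to-global'' lemma saying that a compact subset of $\mathfrak t$ containing $W \cdot X$ and closed under these reflection-segment moves must contain $\hull(W \cdot X)$ (provable by induction on the dominance order on its intersection with a fixed closed Weyl chamber). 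Alternatively --- and in the spirit of Remark~\ref{rem:adjoint_orbits_KKS} --- one can observe that $\pi|_{\mathcal O_X}$ is a moment map for the Hamiltonian $T$-action on $\mathcal O_X$ equipped with its KKS symplectic form, whose fixed-point set is exactly $\mathcal O_X \cap \mathfrak t = W \cdot X$, and invoke the Atiyah--Guillemin--Sternberg convexity theorem, which simultaneously yields convexity of $\pi(\mathcal O_X)$ and the identity $\pi(\mathcal O_X) = \hull\big(\pi(\text{fixed points})\big) = \hull(W \cdot X)$.

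The main obstacle is exactly step (iii): the $\SU(2)$-reduction to reflection segments is routine, but promoting the resulting family of moves to the \emph{entire} convex hull requires the local-to-global convexity lemma --- equivalently, one ends up re-proving the Atiyah--Guillemin--Sternberg theorem --- and that is where the real content of Kostant's theorem resides.
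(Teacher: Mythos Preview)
Your proposal is correct, and your treatment of (i) and (ii) is essentially the paper's own argument, carried out in the general compact semisimple setting rather than only for $\U(n)$. The paper does not actually prove Theorem~\ref{thm:Kostant} in full: immediately after stating it, the paper proves only the ``Schur'' inclusion $\pi(\mathcal O_X) \subseteq \hull(S_n \cdot X)$ in the special case $G = \U(n)$ (Corollary~\ref{cor:kostant_unitary}), using exactly your critical-point argument --- compute the tangent space of the orbit via $\ad$, use $\ad$-invariance of the bilinear form to rewrite the first-order condition as $[Y,L]=0$, and conclude via regularity of $L$ that critical points lie in $\mathfrak t$, hence in $W \cdot X$. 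Your (i) is also the paper's argument, specialized there to the elementary observation that diagonal matrices in a conjugacy class are permutations of the eigenvalue list; your general version via conjugacy of maximal tori in the centralizer is the standard way to say the same thing.

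Where you genuinely diverge is (iii). The paper explicitly declines to prove the ``Horn'' inclusion, saying it ``relies on the weight theory of Lie algebras which we have not discussed in detail here'' and pointing to Ziegler's short weight-theoretic proof. Your two proposed routes --- the $\SU(2)_\alpha$ segment construction plus a local-to-global convexity lemma, or the appeal to Atiyah--Guillemin--Sternberg via the KKS structure of Remark~\ref{rem:adjoint_orbits_KKS} --- are both valid and well-known alternatives. The $\SU(2)_\alpha$ approach has the virtue of being a direct generalization of the paper's own $2 \times 2$ Horn argument in Theorem~\ref{prop:schur_horn}, so it fits the expository thread nicely; its cost, as you correctly flag, is the combinatorial lemma promoting reflection-segment closure to full convexity. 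The AGS route is cleanest but imports a theorem strictly stronger than what is being proved. The weight-theory route the paper cites sits between these: it avoids symplectic machinery but requires the root-space decomposition and highest-weight combinatorics that the paper deliberately omitted.
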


\noindent
We now prove one direction of the Kostant convexity theorem for the unitary group.
The other direction relies on the weight theory of Lie algebras which we have not discussed in detail here.
A short proof using weight theory can be found in \cite{ziegler1992kostant}.
Note that although technically $\mathfrak{u}_n$ is not semisimple, we can still prove the result by using the slightly tweaked Killing form given by $\Tr(XY)$. (See Remark \ref{rem:Killing_u_n} for more discussion on the Killing form of $\mathfrak{u}_n$.) 
Although this proof is specifically for the unitary group and its Lie algebra, it captures the essence of the general proof for compact Lie groups and their Lie algebras.

\begin{corollary}[Kostant convexity theorem for $\U(n)$] \label{cor:kostant_unitary}
    Let $\mathfrak{u}_n$ be the Lie algebra of the Lie group $\U(n)$, let $\mathfrak{t} \subset \mathfrak{u}_n$ and $T \subset \U(n)$ be the respective subsets of diagonal matrices, and let $\pi: \mathfrak{u}_n \to \mathfrak{t}$ be the orthogonal projection induced by the inner product $-\Tr(XY)$. Then for every $X \in \mathfrak{t}$, the (group) adjoint orbit $\mathcal{O}_X \subset \mathfrak{u}_n$ of $\U(n)$ intersects $\mathfrak{t}$ in a Weyl group orbit $S_n \cdot X$, and $\pi(\mathcal{O}_X) \subseteq \hull(S_n \cdot X)$.
\end{corollary}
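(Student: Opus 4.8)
The plan is to translate everything into explicit matrix language and reduce to the Schur--Horn theorem (Theorem~\ref{prop:schur_horn}), which has already been proved in the excerpt. Identify $\mathfrak{u}_n$ with skew-Hermitian matrices, $\mathfrak{t}$ with diagonal skew-Hermitian matrices, and write a general $X \in \mathfrak{t}$ as $X = i\,\diag(\lambda)$ for a real vector $\lambda \in \R^n$. The (group) adjoint action is conjugation, so $\mathcal{O}_X = \{U X U^* : U \in \U(n)\} = \{i \cdot U\diag(\lambda)U^* : U \in \U(n)\}$. The orthogonal projection $\pi$ induced by $-\Tr(XY)$ onto $\mathfrak{t}$ is exactly the map sending a skew-Hermitian matrix to its diagonal part (the off-diagonal basis vectors are orthogonal to $\mathfrak{t}$ under this form). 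Hence $\pi(\mathcal{O}_X) = \{i \cdot \diag(U\diag(\lambda)U^*) : U \in \U(n)\}$, and the claim $\pi(\mathcal{O}_X) \subseteq \hull(S_n \cdot X)$ becomes precisely the assertion that the diagonal of $U\diag(\lambda)U^*$ lies in $P_\lambda = \hull\{(\lambda_{\sigma(1)},\ldots,\lambda_{\sigma(n)}) : \sigma \in S_n\}$, after stripping off the harmless factor of $i$.

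First I would carefully set up the identification $\mathfrak{t} \cong \R^n$ via $i\,\diag(\lambda) \leftrightarrow \lambda$ and verify that $S_n$ acts on $\mathfrak{t}$ by permuting the entries of $\lambda$ (this uses the description of the Weyl group $N(T)/T$ as permutation matrices from Section~\ref{sec:Weyl}: conjugating $\diag(\lambda)$ by a permutation matrix $P_\sigma$ permutes the diagonal entries). Next I would check that $\pi$ is indeed the ``take the diagonal'' map, by noting that $-\Tr(XY)$ restricted to $\mathfrak{u}_n$ makes the elementary skew-Hermitian matrices $E_{jk} - E_{kj}$ and $i(E_{jk}+E_{kj})$ (for $j<k$) orthogonal to every diagonal element $i\,\diag(\mu)$, so the orthogonal complement of $\mathfrak{t}$ in $\mathfrak{u}_n$ is the space of skew-Hermitian matrices with zero diagonal. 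Then the inclusion $\pi(\mathcal{O}_X) \subseteq \hull(S_n \cdot X)$ follows immediately by invoking the Schur part of the Schur--Horn theorem (already proved above as the first part of Theorem~\ref{prop:schur_horn}): for any unitary $U$, $\diag(U\diag(\lambda)U^*) = Q\lambda$ where $Q = U \odot \overline{U}$ is doubly stochastic, hence a convex combination of permutation matrices, hence $Q\lambda \in P_\lambda$.

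For the statement that $\mathcal{O}_X \cap \mathfrak{t} = S_n \cdot X$: the inclusion $\supseteq$ is clear since permutation matrices are unitary. For $\subseteq$, if $U\diag(\lambda)U^*$ happens to be diagonal, then it is a diagonal matrix unitarily similar to $\diag(\lambda)$, so it has the same multiset of eigenvalues $\{\lambda_1,\ldots,\lambda_n\}$ on its diagonal; thus it equals $\diag(\lambda_{\sigma(1)},\ldots,\lambda_{\sigma(n)})$ for some $\sigma \in S_n$, i.e. it lies in $S_n \cdot X$. This part is genuinely elementary.

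The main obstacle is essentially bookkeeping rather than a deep idea: correctly handling the factor of $i$ and the real-vs-complex structure (skew-Hermitian versus Hermitian), and confirming that the inner product $-\Tr(XY)$ gives the ``diagonal projection'' map rather than something twisted. There is nothing to prove that is not already available --- the real content is Schur--Horn, which the excerpt supplies --- so the proof is short; the only thing to get right is the dictionary between the Lie-theoretic formulation and the matrix formulation of Section~\ref{sec:eigenvalue}. One subtlety worth a sentence: the corollary only claims one inclusion ($\pi(\mathcal{O}_X) \subseteq \hull(S_n\cdot X)$), which is exactly the ``Schur'' direction; the reverse inclusion would require the ``Horn'' direction (or weight theory in general), consistent with the remark in the excerpt that the other direction is omitted.
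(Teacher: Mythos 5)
Your proof is correct, but it takes a genuinely different route from the paper's. You reduce the inclusion $\pi(\mathcal{O}_X) \subseteq \hull(S_n \cdot X)$ directly to the ``Schur'' half of the Schur--Horn theorem already proved in Section~\ref{sec:eigenvalue}: after identifying $\pi$ with the diagonal-extraction map and $X$ with $i\,\diag(\lambda)$, the claim is literally the statement that $\diag(U\diag(\lambda)U^*) = Q\lambda$ with $Q = U \odot \overline{U}$ doubly stochastic, hence in $P_\lambda$. The paper instead gives a variational argument intrinsic to Lie theory: it shows that any linear functional $\ell(Y) = \Tr(YL)$ is minimized over $\mathcal{O}_X$ at a critical point, computes the tangent space $T_Y\mathcal{O}_X = \{[Z,Y] : Z \in \mathfrak{u}_n\}$, deduces from criticality that $[Y,L]=0$, and concludes (taking $L$ with distinct diagonal entries) that $Y$ must be diagonal, i.e.\ $Y \in \mathcal{O}_X \cap \mathfrak{t} = S_n \cdot X$. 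Your approach is shorter and more elementary given what the paper already has, but it is anchored to the matrix-specific Birkhoff--von Neumann fact; the paper deliberately uses the tangent-space/critical-point argument because, as it notes, that route ``captures the essence of the general proof for compact Lie groups'' and previews how Kostant's theorem is actually proved for arbitrary $G$. Your treatment of the intersection $\mathcal{O}_X \cap \mathfrak{t} = S_n \cdot X$ and the bookkeeping with the factor of $i$, the inner product $-\Tr(XY)$, and the Weyl group $N(T)/T \cong S_n$ are all handled correctly.
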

\begin{proof}
    First we show that $\mathcal{O}_X \subset \mathfrak{u}_n$ of $\U(n)$ intersects $\mathfrak{t}$ in a Weyl group orbit $S_n \cdot X$. 
    This follows from the fact that the diagonal matrices in the orbit $\mathcal{O}_X$ are precisely those diagonal matrices with the eigenvalues of $X$ along the diagonal. 
    Since $X$ is an element $\mathfrak{t}$ and thus is itself diagonal, we have that $\mathcal{O}_X \cap \mathfrak{t} = S_n \cdot X$.
    
    Since $S_n \cdot X \subset \mathcal{O}_X$, we further have that the extreme points of $\hull(S_n \cdot X)$ are contained in $\mathcal{O}_X$ and also in $\pi(\mathcal{O}_X)$, since $\pi$ acts by sending all off-diagonal entries to 0. 
    The statement $\pi(\mathcal{O}_X) \subseteq \hull(W \cdot X)$ is thus equivalent to saying that every linear functional $\ell(Y)$ on $\pi(\mathcal{O}_X) \subset \mathfrak{t}$ is minimized at a point of $W \cdot X$. 
    Further, any linear functional on $\mathfrak{t}$ can be written as $\ell(Y) := \Tr(YL)$ for some $L \in \mathfrak{t}$, and for such $L$ we also have that $\Tr(YL) = \Tr(\pi(Y)L)$ for all $Y \in \mathcal{O}_X$. Thus it suffices to show that $\ell(Y) := \Tr(YL)$ is minimized over $\mathcal{O}_X$ at a point of $S_n \cdot X$ for all $L \in \mathfrak{t}$. 
    In fact, it suffices to show this for all $L \in \mathfrak{t}$ with distinct diagonal entries, by limiting.
    
    To show this, let us now consider $\mathcal{O}_X$ as a manifold acted upon by $\U(n)$ via the (group) adjoint action. 
    The tangent space of $\mathcal{O}_X$ at a point $Y$ can then be determined by considering the differential of the group adjoint action, which as we saw above gives the algebra adjoint action (see Section \ref{sec:algebra_adjoint_action}). 
    That is, for any $Y \in \mathcal{O}_X$ the tangent space of $\mathcal{O}_X$ at $Y$ is given by
    \[
        T_Y \mathcal{O}_X := \{\ad_Z(Y) = [Z,Y] : Z \in \mathfrak{u}_n\}.
    \]
    Further, $Y$ is a critical point of $\ell(Y) = \Tr(YL)$ in $\mathcal{O}_X$ if and only if for all paths $f: (-\epsilon,\epsilon) \to \mathcal{O}_X$ such that $f(0) = Y$ we have $\left.\frac{d}{dt}\right|_{t=0} \ell(f(t)) = 0$. Since $\ell$ is linear, we have that
    \[
        \left.\frac{d}{dt}\right|_{t=0} \ell(f(t)) = \ell\left(\left.\frac{d}{dt}\right|_{t=0} f(t)\right) = \ell(\ad_Z(Y))
    \]
    for some $\ad_Z(Y) \in T_Y \mathcal{O}_X$, and every element of $T_Y \mathcal{O}_X$ is the derivative of a path $f$ at 0 in this way.
    Thus, $Y$ is a critical point of $\ell(Y) = \Tr(YL)$ in $\mathcal{O}_X$ if and only if
    \[
        0 = \ell(\ad_Z(Y)) = \Tr(\ad_Z(Y)L) = -\Tr(\ad_Y(Z)L) = \Tr(Z\ad_Y(L))
    \]
    for all $Z \in \mathfrak{u}_n$. 
    Since $-\Tr(XY)$ is an inner product on $\mathfrak{u}_n$, this implies $\ad_Y(L) = YL-LY = 0$. 
    By assuming that $L \in \mathfrak{t}$ has distinct diagonal entries as discussed above, this further implies that $Y$ must be diagonal by a straightforward computation.
    Therefore, $Y \in \mathcal{O}_X \cap \mathfrak{t} = S_n \cdot X$, and this completes the proof.
\end{proof}

\noindent
The key takeaway of the Kostant convexity theorem is the fact that the image of a nonconvex adjoint orbit of $\mathfrak{g}$ under a linear projection map is a convex polytope.
And a key corollary to this is the fact that linear optimization over the orbitope generated by the adjoint orbit $\mathcal{O}_X$ is equivalent to linear optimization over an associated convex polytope. 
An \textbf{orbitope} is defined to be the convex hull of an orbit of some group action on a vector space, and these objects were extensively studied in \cite{sanyal2011}. 
It is also worth noting that these objects are sometimes spectrahedral, especially in the context of Lie theory, see \cite{Kobertphdthesis}.

We now state formally the key corollary regarding transferring linear optimization from the orbitope of an adjoint orbit $\mathcal{O}_X$ to the associated polytope guaranteed by the Kostnat convexity theorem.

\begin{corollary}\label{cor:linearopt}
    Let $G$, $T$, $\mathfrak{g}$, $\mathfrak{t}$ be as in the Kostant convexity
theorem with $\mathfrak{t} \subset \mathfrak{g}$, and suppose $\mathfrak{g}$ is semisimple with negative definite Killing form $B$. Then for any $X,Z \in \mathfrak{t}$, we have
\[
    \min_{Y \in \mathcal{O}_X} -B(Y, Z) = \min_{Y \in W \cdot X} -B(Y, Z),
\]
where $\mathcal{O}_X$ is the adjoint orbit of $X$ and $W \cdot X$ is the Weyl group orbit of $X$. 
That is, optimization of a linear functional can be restricted to $W \cdot X = \mathcal{O}_X \cap \mathfrak{t}$ in this case.
\end{corollary}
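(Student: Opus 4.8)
The plan is to derive this corollary as a direct consequence of the Kostant convexity theorem (Theorem \ref{thm:Kostant}). The key observation is that $-B(\cdot, Z)$ is a linear functional on $\mathfrak{g}$, and minimizing a linear functional over any set $S$ gives the same value as minimizing it over the convex hull $\hull(S)$, since the minimum of a linear functional over a polytope (or any compact convex set) is always attained at an extreme point, and all extreme points of $\hull(S)$ lie in $S$. Applying this with $S = \mathcal{O}_X$ gives $\min_{Y \in \mathcal{O}_X} -B(Y,Z) = \min_{Y \in \hull(\mathcal{O}_X)} -B(Y,Z)$.

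Next I would use the fact that $\pi: \mathfrak{g} \to \mathfrak{t}$ is the orthogonal projection with respect to $B$ (equivalently, with respect to the inner product $-B$), together with the hypothesis $Z \in \mathfrak{t}$. Orthogonal projection onto $\mathfrak{t}$ means that for any $Y \in \mathfrak{g}$, we have $-B(Y, Z) = -B(\pi(Y), Z)$ because $Y - \pi(Y)$ is $B$-orthogonal to $\mathfrak{t}$ and $Z \in \mathfrak{t}$. Therefore
\[
    \min_{Y \in \mathcal{O}_X} -B(Y,Z) = \min_{Y \in \mathcal{O}_X} -B(\pi(Y), Z) = \min_{w \in \pi(\mathcal{O}_X)} -B(w, Z).
\]
By the Kostant convexity theorem, $\pi(\mathcal{O}_X) = \hull(W \cdot X)$, so this last quantity equals $\min_{w \in \hull(W \cdot X)} -B(w,Z)$, which by the extreme-point argument again equals $\min_{Y \in W \cdot X} -B(Y,Z)$. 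Finally, the identification $W \cdot X = \mathcal{O}_X \cap \mathfrak{t}$ is exactly the first assertion of the Kostant convexity theorem, so the restriction statement follows immediately.

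There is no serious obstacle here; the corollary is essentially a repackaging of Theorem \ref{thm:Kostant} combined with two elementary facts (linear functionals are optimized at extreme points of convex hulls, and orthogonal projection does not change inner products against vectors in the target subspace). The only point requiring a word of care is the interplay of signs: since $B$ is negative definite, $-B$ is a genuine inner product, and it is with respect to this positive definite form that $\pi$ is the orthogonal projection, so the identity $B(Y,Z) = B(\pi(Y),Z)$ for $Z \in \mathfrak{t}$ is legitimate. One should also note compactness of $G$ (hence of $\mathcal{O}_X$ and of $\hull(W \cdot X)$, the latter being a polytope) to guarantee the minima are attained, but this is built into the hypotheses inherited from the Kostant convexity theorem.
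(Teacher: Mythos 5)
Your proof is correct and follows essentially the same route as the paper's: replace $-B(Y,Z)$ with $-B(\pi(Y),Z)$ using that $Z \in \mathfrak{t}$ and $\pi$ is the $(-B)$-orthogonal projection, invoke the Kostant convexity theorem to identify $\pi(\mathcal{O}_X)$ with $\hull(W\cdot X)$, and then pass to extreme points. Your opening step (reducing over $\mathcal{O}_X$ to $\hull(\mathcal{O}_X)$) is harmless but unused, and your remarks about the sign of $B$ and compactness are sensible clarifications that the paper leaves implicit.
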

\begin{proof}
    Let $\pi: \mathfrak{g} \to \mathfrak{t}$ be the orthogonal projection induced by the Killing form $B$, as in the Kostant convexity theorem. Since $Z \in \mathfrak{t}$, we have
    \[
        \min_{Y \in \mathcal{O}_X} -B(Y, Z) = \min_{Y \in \mathcal{O}_X} -B(\pi(Y), Z) = \min_{Y \in \pi(\mathcal{O}_X)} -B(Y, Z).
    \]
    The Kostant convexity theorem then implies
    \[
        \min_{Y \in \pi(\mathcal{O}_X)} -B(Y, Z) = \min_{Y \in \hull(W \cdot X)} -B(Y, Z).
    \]
    Since a linear functional is always optimized over a polytope at one of its extreme points, we finally have
    \[
        \min_{Y \in \hull(W \cdot X)} -B(Y, Z) = \min_{Y \in W \cdot Z} -B(Y,Z).
    \]
    This completes the proof.
\end{proof}

\begin{remark} \label{rem:Killing_u_n}
    Note that although we restrict to compact semisimple Lie algebras in many the above results, the same results can also be stated for more general compact Lie algebras by adjusting the Killing form $B$ on $\mathfrak{g}$ so that it is strictly negative definite but still $\Ad$-invariant and $\ad$-invariant. 
    To do this, one simply needs to construct any negated inner product $B_0$ on the abelian part of $\mathfrak{g}$, and consider $B + B_0$. 
    The invariance properties are then trivially satisfied by $B_0$ and thus also by $B + B_0$. This is precisely what was done in Corollary \ref{cor:kostant_unitary} to obtain the negated inner product $\Tr(XY)$ on $\mathfrak{u}_n$, since $\mathfrak{u}_n$ is not semisimple but has one-dimensional center given by purely imaginary multiples of the identity matrix. See \cite[Prop.~4.24]{knapp2013} for more information.
\end{remark}

\begin{remark}
    Applying Corollary \ref{cor:linearopt} to the unitary group via Corollary \ref{cor:kostant_unitary} gives a result about optimizing $\Tr(YZ)$ over the adjoint orbits contained in $\mathfrak{u}_n$, the space of \emph{skew}-Hermitian matrices. However, the goal of the optimization problem of Section \ref{sec:eigenvalue} is to optimize over orbits of the unitary group within the space of (non-skew) Hermitian matrices. To see that these problems are equivalent, note that $\mathfrak{u}_n = i \cdot \mathcal{H}_{n \times n}$ where $\mathcal{H}_{n \times n}$ is the (real) vector space of Hermitian matrices. And further, the subalgebra $\mathfrak{t}$ of diagonal skew-Hermitian matrices is the set of diagonal matrices with purely imaginary diagonal entries. Therefore optimizing $\Tr(YZ)$ over $Y \in \mathcal{O}_X \subset \mathfrak{u}_n$ for $Z \in \mathfrak{t}$ is equivalent to optimizing $\Tr(iY \cdot iZ) = -\Tr(YZ)$ where $Y$ lies in some conjugation orbit of Hermitian matrices and $Z$ is a Hermitian diagonal matrix. Since minimizing $\Tr(YZ)$ is equivalent to maximizing $-\Tr(YZ)$ and vice versa, this shows that the two optimization problems are equivalent.
\end{remark}
Beyond the Kostant convexity theorem, there are actually other convexity theorems regarding various types of manifolds. For example, the \textbf{ Atiyah-Guillemin-Sternberg theorem} \cite{Atiyah1982,guillemin1982convexity} gives a similar result in the case of a symplectic manifold and its corresponding moment map. 
In the case of a symplectic manifold being acted upon by a compact Lie group, the \textbf{Kirwan convexity theorem} \cite{kirwan1984convexity} gives a result similar to that of the Kostant convexity theorem. In fact this generalizes the Kostant convexity theorem, since the KKS symplectic structure of Remark \ref{rem:adjoint_orbits_KKS} shows that adjoint orbits of a compact Lie group are symplectic manifolds.

\begin{remark} \label{rem:coadjoint}
    The Kostant convexity theorem is actually originally stated for \emph{co}adjoint orbits in the Lie algebra dual space $\mathfrak{g}^*$. In our case, the invariant Killing form allows us to transfer this statement to adjoint orbits in order to state the theorem as we have above.
\end{remark}

\section{Lie Theory for Sampling over Orbits}

We now explore how symmetries produced by Lie theory can help with developing sampling algorithms on manifolds.
A standard hurdle in the development of sampling algorithms on finite discrete domains is the computation of the partition function.
Given a weighting of the points in a discrete set, the partition function is the sum of all weights, giving the normalization constant with which one can turn the weighting into a probability distribution. 
This computation often bounds the computational complexity of a sampling problem on the domain.
Computation of the partition function can sometimes be easy; for example in the case of the uniform distribution on the set of spanning trees on a connected graph, computation of the partition function is given by a determinant of a matrix with rows on the order of the number of vertices, even when the total number of trees is exponentially large (e.g., see \cite{feder1992balanced}).
On the other hand, computing the partition function of the uniform distribution on the set of perfect matchings of a bipartite graph is known to be equivalent to computing the permanent of a 0-1 matrix, which is a \#P-hard problem (e.g., see \cite{jerrum1989approximating}).

What if the sampling domain is a continuous manifold? 
In this case, the partition function becomes an integral rather than a sum, and the efficient computability of such a partition function becomes more difficult to obtain or understand.
Thus, if we want to sample from continuous Lie-theoretic domains like adjoint orbits, we need to be able to compute integrals over such domains. 
We soon see how Lie theory and symmetry make this possible.

\subsection{The Haar Measure}

Often computational questions regarding manifolds (such as integration, expectation, and sampling) require an appropriate probability measure on the manifold. 
In the discrete case or on convex bodies, there is an implicit probability measure that comes from the fact that there is a well-defined uniform measure. 
For manifolds in general, there is no clear way to define such a measure. 
In the case of compact Lie groups however, there is a natural invariant measure which we  call the uniform measure, and we define it now.

\begin{theorem}
    If $G$ is a compact Lie group, then there exists a unique probability measure $\mu$ on $G$ called the \textbf{Haar measure} or \textbf{uniform measure} which has the following properties:
    \begin{enumerate}
        \item \textbf{Probability measure:} $\int_G 1 \, d\mu(g) = 1$.
        \item \textbf{$G$-invariance:} For all $h \in G$ and all $\mu$-measurable functions $f$, we have:
        \[
            \int_G f(hg) \, d\mu(g) = \int_G f(g) \, d\mu(g) = \int_G f(gh) \, d\mu(g).
        \]
    \end{enumerate}
\end{theorem}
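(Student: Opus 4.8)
The plan is to prove existence and uniqueness of the Haar measure on a compact Lie group $G$ separately, and to leverage compactness heavily since it makes the construction much easier than in the locally compact case. First I would establish \textbf{existence}. The cleanest route for a Lie group is to build a bi-invariant volume form. Pick any nonzero element $\omega_e \in \bigwedge^{\dim G} T_e^* G$ on the tangent space at the identity, and translate it around the group: define $\omega_g := (L_{g^{-1}})^* \omega_e$, where $L_h$ denotes left multiplication by $h$. By construction this is a left-invariant nowhere-vanishing top form, hence it orients $G$ and defines a left-invariant measure; normalizing by its total mass (finite because $G$ is compact) gives a left-invariant probability measure $\mu$. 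To upgrade to two-sided invariance, I would use the compactness of $G$ together with the modular character: right translation scales a left-invariant form by a continuous homomorphism $\Delta: G \to \R_{>0}$, but a continuous homomorphism from a compact group into $\R_{>0}$ has image a compact subgroup, which must be $\{1\}$; hence $\Delta \equiv 1$ and $\mu$ is right-invariant as well. (An alternative, more self-contained existence argument avoiding differential forms: start from any probability measure $\nu$ on $G$, and average its left translates over $G$ against itself — i.e.\ consider the iterated convolution — and extract a weak-$*$ limit using compactness of the space of probability measures; but the volume-form argument is cleaner for a Lie group and fits the manifold language of the paper.)

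Next I would prove \textbf{uniqueness}. Suppose $\mu$ and $\nu$ are both $G$-invariant probability measures. For any continuous $f$ on $G$, consider the double integral $\iint_{G \times G} f(gh)\, d\mu(g)\, d\nu(h)$ and evaluate it two ways using Fubini (justified since $G \times G$ is compact and $f$ is continuous, hence bounded). Integrating in $g$ first and using left-invariance of $\mu$ (with the substitution $g \mapsto g h^{-1}$) gives $\int_G f(g)\, d\mu(g)$ independently of $h$, so the double integral equals $\int_G f\, d\mu$. Integrating in $h$ first and using right-invariance of $\nu$ gives $\int_G f\, d\nu$. Hence $\int_G f\, d\mu = \int_G f\, d\nu$ for all continuous $f$, and by the Riesz representation theorem (continuous functions on a compact space determine regular Borel measures) we conclude $\mu = \nu$. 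Note this argument needs $\mu$ to be left-invariant and $\nu$ to be right-invariant, which both satisfy by the stated two-sided invariance property; so uniqueness holds within the class of two-sided invariant probability measures, and combined with the existence step one in fact gets that the left-invariant probability measure is unique and automatically bi-invariant.

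The main obstacle — and the step requiring the most care — is the passage from a left-invariant measure to a bi-invariant one, i.e.\ showing the modular function is trivial. In the general locally compact setting this fails (there are non-unimodular groups), so the proof must genuinely use compactness; the cleanest statement is that $\Delta(G)$ is a compact subgroup of $(\R_{>0}, \times) \cong (\R, +)$ and the only such subgroup is trivial. A secondary technical point is the measure-theoretic bookkeeping: one should be careful that the volume form actually induces a Borel measure in the standard way (integrating compactly supported continuous functions against $|\omega|$ and invoking Riesz), and that weak-$*$ limits, if one uses the averaging approach instead, land on genuinely invariant measures. Everything else — Fubini on a compact product, the change of variables in the group, normalization — is routine. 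I would also remark that the same construction specializes to give the invariant measures $\mu_{S_\C^n}$ and $\mu_1$ used earlier in the paper, since the complex sphere and $\mathcal{P}_1$ are homogeneous spaces for $\U(n)$ and inherit a unique invariant probability measure by pushing forward the Haar measure of $\U(n)$.
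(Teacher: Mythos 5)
The paper states this theorem without proof, treating it as standard background; there is no paper proof to compare against, so I will assess the argument on its own terms. Your construction is a correct and well-organized version of the standard argument: left-translating a nonzero top form at the identity gives a left-invariant volume, normalizing by total mass (finite by compactness) gives a left Haar probability measure, the modular character has trivial image because it is a continuous homomorphism from a compact group into $\R_{>0}$, and Fubini on $G \times G$ combined with the Riesz representation theorem gives uniqueness. One small bookkeeping slip worth correcting: in the Fubini step, collapsing $\int_G f(gh)\,d\mu(g)$ to $\int_G f\,d\mu$ uses \emph{right}-invariance of $\mu$ (the map $g \mapsto gh$ is right translation by $h$), and collapsing $\int_G f(gh)\,d\nu(h)$ to $\int_G f\,d\nu$ uses \emph{left}-invariance of $\nu$; your parenthetical remark and concluding sentence have these two labels swapped. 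Since the theorem as stated assumes both measures are bi-invariant, the conclusion is unaffected, but the correct pairing ($\mu$ right-invariant, $\nu$ left-invariant) is exactly what lets one run the sharper version of the argument you gesture at, namely that any left-invariant probability measure equals any right-invariant one and hence is the unique bi-invariant measure, tying existence and uniqueness together cleanly. Your final observation that the Haar measure pushes forward to the invariant measures $\mu_{S_\C^n}$ and $\mu_1$ on the homogeneous spaces is precisely the mechanism the paper makes explicit in Lemma \ref{lem:quotient_integral} and Corollary \ref{cor:orbit_Haar_measure}.
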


\begin{remark}
    Haar measures can actually be defined for Lie groups in general, but the definition is more delicate. 
    In particular, there are potentially two different Haar measures: one corresponding to invariance respect to left multiplication and the other with respect to right multiplication. 
    For compact groups these two notions coincide, and so we  restrict to this case here throughout.
\end{remark}
The Haar measure on a compact Lie group $G$ also induces Haar measures on more interesting manifolds which arise via certain constructions on $G$. 
For example, if a connected compact Lie group $G$ acts smoothly on some real manifold $M$, then the Haar measure on $G$ induces a $G$-invariant measure on any given orbit of the action of $G$ on $M$. This is formalized in the following.

\begin{lemma}[Theorem 1.9 of \cite{helgason1984}] \label{lem:quotient_integral}
    Let $G$ be a compact Lie group with compact Lie subgroup $H$. There exists a $G$-invariant measure on the manifold $G/H$ such that
    \[
        \int_G f(g) d\mu(g) = \int_{G / H} \left(\int_H f(gh) d\nu(h)\right) d\mu_x(g)
    \]
    for all $\mu$-measurable functions $f$ on $G$.
\end{lemma}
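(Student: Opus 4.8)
The plan is to construct the $G$-invariant measure $\mu_x$ on $G/H$ directly by "integrating out the fiber." First I would fix the Haar probability measures $\mu$ on $G$ and $\nu$ on $H$ (both exist and are unique by the previous theorem, since $G$ and $H$ are compact). Given a continuous function $\bar{f}$ on $G/H$, I would pull it back to $G$ via the quotient projection $p: G \to G/H$, obtaining the function $f := \bar{f} \circ p$ on $G$, which is automatically right-$H$-invariant: $f(gh) = f(g)$ for all $h \in H$. Then I would \emph{define}
\[
    \int_{G/H} \bar{f}(x)\, d\mu_x(x) := \int_G f(g)\, d\mu(g).
\]
The first thing to check is that this is a well-defined positive linear functional on $C(G/H)$, which is immediate from linearity and positivity of the Haar integral on $G$; by the Riesz representation theorem this functional is given by integration against a unique Radon probability measure $\mu_x$ on $G/H$.

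Next I would verify $G$-invariance of $\mu_x$. For $g_0 \in G$ and $\bar{f} \in C(G/H)$, the translate $\bar{f}(g_0^{-1} \cdot x)$ pulls back to $f(g_0^{-1} g)$ on $G$, and left-invariance of $\mu$ gives $\int_G f(g_0^{-1}g)\,d\mu(g) = \int_G f(g)\,d\mu(g)$, which is exactly $G$-invariance of $\mu_x$. Then I would prove the integral formula itself: for a \emph{general} $\mu$-measurable $f$ on $G$ (not necessarily right-$H$-invariant), set
\[
    f^\sharp(g) := \int_H f(gh)\, d\nu(h).
\]
This $f^\sharp$ is right-$H$-invariant by left-invariance of $\nu$ on $H$, so it descends to a function $\overline{f^\sharp}$ on $G/H$, and by the definition of $\mu_x$ above we get $\int_{G/H} \overline{f^\sharp}\, d\mu_x = \int_G f^\sharp\, d\mu = \int_G \left(\int_H f(gh)\,d\nu(h)\right) d\mu(g)$. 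So the claim reduces to showing $\int_G f^\sharp(g)\, d\mu(g) = \int_G f(g)\, d\mu(g)$, i.e. that averaging over right $H$-translates does not change the $G$-integral. This follows from Fubini (justified by compactness and measurability) together with right-invariance of $\mu$ on $G$: $\int_G \int_H f(gh)\,d\nu(h)\,d\mu(g) = \int_H \int_G f(gh)\,d\mu(g)\,d\nu(h) = \int_H \left(\int_G f(g)\,d\mu(g)\right) d\nu(h) = \int_G f(g)\,d\mu(g)$, using that $\nu$ is a probability measure.

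The main obstacle, to the extent there is one, is the measure-theoretic bookkeeping rather than any conceptual difficulty: verifying that $f^\sharp$ is genuinely $\mu$-measurable (and descends to a measurable function on $G/H$), and that Fubini applies to the possibly non-continuous $f$ in the statement. For continuous $f$ everything is clean — the fiber integral $f^\sharp$ is continuous by uniform continuity on the compact group, and Fubini is standard — so I would first establish the formula for $f \in C(G)$ and then extend to general $\mu$-measurable $f$ by the usual approximation argument (monotone class / density of $C(G)$ in $L^1(\mu)$, using that all measures in sight are finite). Uniqueness of the $G$-invariant measure on $G/H$, if one wants it, follows from the transitivity of the $G$-action on $G/H$ together with the uniqueness clause in the Haar theorem, by an averaging argument identical to the one used to prove uniqueness of Haar measure itself.
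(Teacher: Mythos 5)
The paper itself gives no proof of this lemma --- it is cited as Theorem~1.9 of Helgason \cite{helgason1984} and used as a black box --- so there is no in-paper argument to compare against. Your construction is correct and is the standard one: define $\mu_x$ as the pushforward of Haar measure under the projection $p\colon G \to G/H$ (equivalently, as the Riesz measure of the positive functional $\bar f \mapsto \int_G (\bar f \circ p)\,d\mu$ on $C(G/H)$), deduce $G$-invariance of $\mu_x$ from left-invariance of $\mu$, and obtain the disintegration formula by introducing the right-$H$-average $f^\sharp(g) := \int_H f(gh)\,d\nu(h)$, applying Fubini, and using right-invariance of $\mu$ (automatic since $G$ is compact, hence unimodular) together with $\nu(H)=1$. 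Each of these steps checks out, and your plan to first establish the identity for continuous $f$ and then extend to general $\mu$-measurable $f$ by a monotone class / density argument is the correct way to handle the generality in the statement. This is also essentially the argument in Helgason's text, except that Helgason works with general locally compact unimodular pairs $(G,H)$, where the Haar measures are not probability measures and one must match the invariant measure on $G/H$ only up to a positive scalar; compactness lets you avoid that bookkeeping entirely, as you do.
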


\begin{corollary} \label{cor:orbit_Haar_measure}
    Let $G$ be a compact Lie group, and let $G$ act smoothly on a manifold $M$. 
    Given $x \in M$, let $\mathcal{O}_x \subset M$ denote the orbit of $x$ in $M$ under the action of $G$. 
    There exists a unique $G$-invariant probability measure on $\mathcal{O}_x$ called the \textbf{Haar measure} or \textbf{uniform measure}.
\end{corollary}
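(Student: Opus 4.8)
The plan is to realize the desired measure as the pushforward of the Haar measure on $G$ under the orbit map, and then to establish uniqueness by an averaging (Fubini) argument against the same Haar measure. First I would record the structural fact underlying everything: the stabilizer $\Stab_x = \{g \in G : g \cdot x = x\}$ is the preimage of the point $x$ under the continuous map $g \mapsto g \cdot x$, hence closed in $G$; by Cartan's closed subgroup theorem (quoted above) it is a compact Lie subgroup, and the orbit map descends to a $G$-equivariant diffeomorphism $G/\Stab_x \cong \mathcal{O}_x$. In particular $\mathcal{O}_x$ is a compact embedded submanifold of $M$, so it makes sense to speak of Borel probability measures on it and to invoke the Riesz representation theorem. (For the bare existence and uniqueness of the probability measure one in fact only needs that $\mathcal{O}_x$ is compact Hausdorff, which is automatic since it is a continuous image of the compact group $G$ inside the Hausdorff space $M$.)

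For existence, let $\rho \colon G \to \mathcal{O}_x$ be the orbit map $\rho(g) := g \cdot x$ and define $\mu_{\mathcal{O}_x} := \rho_\ast \mu$, the pushforward of the Haar measure; that is, $\int_{\mathcal{O}_x} f \, d\mu_{\mathcal{O}_x} := \int_G f(g \cdot x)\, d\mu(g)$ for continuous $f$ on $\mathcal{O}_x$. This is a Borel probability measure because $\mu$ is, and it is $G$-invariant: for any $h \in G$,
\[
    \int_{\mathcal{O}_x} f(h \cdot y)\, d\mu_{\mathcal{O}_x}(y) = \int_G f(hg \cdot x)\, d\mu(g) = \int_G f(g \cdot x)\, d\mu(g) = \int_{\mathcal{O}_x} f\, d\mu_{\mathcal{O}_x},
\]
where the middle equality uses left-invariance of $\mu$. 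Alternatively, one can obtain this measure by applying Lemma \ref{lem:quotient_integral} with $H = \Stab_x$ to produce a $G$-invariant measure on $G/H$ and transporting it along the diffeomorphism $G/H \cong \mathcal{O}_x$.

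For uniqueness, let $\nu$ be any $G$-invariant Borel probability measure on $\mathcal{O}_x$ and fix a continuous function $f$. By $G$-invariance of $\nu$, for every $g \in G$ we have $\int_{\mathcal{O}_x} f(g \cdot y)\, d\nu(y) = \int_{\mathcal{O}_x} f\, d\nu$; integrating this constant over $g \in G$ against the probability measure $\mu$ and applying Fubini gives
\[
    \int_{\mathcal{O}_x} f\, d\nu = \int_G \left( \int_{\mathcal{O}_x} f(g \cdot y)\, d\nu(y) \right) d\mu(g) = \int_{\mathcal{O}_x} \left( \int_G f(g \cdot y)\, d\mu(g) \right) d\nu(y).
\]
For fixed $y = g_0 \cdot x \in \mathcal{O}_x$, right-invariance of $\mu$ yields $\int_G f(g \cdot y)\, d\mu(g) = \int_G f(gg_0 \cdot x)\, d\mu(g) = \int_G f(g \cdot x)\, d\mu(g) = \int_{\mathcal{O}_x} f\, d\mu_{\mathcal{O}_x}$, a constant independent of $y$. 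Hence $\int_{\mathcal{O}_x} f\, d\nu = \int_{\mathcal{O}_x} f\, d\mu_{\mathcal{O}_x}$ for all continuous $f$, and the Riesz representation theorem forces $\nu = \mu_{\mathcal{O}_x}$.

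The routine parts are the measure-theoretic bookkeeping: Borel measurability of $\rho$ and of $(g,y) \mapsto f(g \cdot y)$, and the applicability of Fubini, all of which hold automatically since $G$ and $\mathcal{O}_x$ are compact metrizable and the action is continuous. The one genuinely substantive ingredient is the identification $\mathcal{O}_x \cong G/\Stab_x$ as smooth manifolds, which rests on Cartan's closed subgroup theorem together with the standard fact that orbits of smooth actions of compact groups are embedded submanifolds; I expect this to be the main point requiring care, since it is what justifies the implicit assertion in the corollary that $\mathcal{O}_x$ is a manifold at all.
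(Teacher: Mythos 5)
Your proof is correct and takes a more self-contained route than the paper's. The paper proves the corollary by identifying $\mathcal{O}_x \cong G/\Stab_x$ via Cartan's closed-subgroup theorem (exactly as you do) and then invoking Lemma \ref{lem:quotient_integral}, the quotient-integral disintegration from Helgason, to supply the $G$-invariant measure on $G/\Stab_x$. You instead build the measure directly as the pushforward $\rho_\ast \mu$ of Haar measure under the orbit map $\rho(g) = g\cdot x$, verify invariance from left-invariance of $\mu$, and then prove uniqueness by the classical Fubini averaging argument, using right-invariance of $\mu$ to show that $\int_G f(g\cdot y)\,d\mu(g)$ is independent of $y\in\mathcal{O}_x$. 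This buys two things. First, it is more elementary: as you correctly note, existence and uniqueness of the probability measure only require $\mathcal{O}_x$ to be compact Hausdorff (automatic as a continuous image of $G$ in the Hausdorff space $M$), so Cartan's theorem and the embedded-submanifold structure are needed only to justify the word ``manifold'' in the statement, not to build the measure. Second, your averaging argument gives an independent proof of uniqueness; the paper's Lemma \ref{lem:quotient_integral} as stated asserts only \emph{existence} of a $G$-invariant measure on $G/H$, while the paper's proof of the corollary then claims uniqueness as a consequence, so your explicit Riesz-plus-Fubini argument closes that small gap rather than deferring it to the cited reference.
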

\begin{proof}
    Since $G$ acts smoothly on $M$, the stabilizer $\Stab_x$ of $x$ in $G$ is a closed subgroup of $G$, and thus a closed Lie subgroup by Cartan's closed-subgroup theorem. 
    Therefore $\Stab_x$ is in fact a compact Lie group and thus has a Haar measure $\nu$, and we also have that $\mathcal{O}_x \cong G / \Stab_x$. 
    The previous lemma then implies there exists a unique $G$-invariant measure $\mu_x$ on $\mathcal{O}_x \cong G / \Stab_x$ such that
    \[
        \int_G f(g) d\mu(g) = \int_{G / \Stab_x} \left(\int_{\Stab_x} f(gh) d\nu(h)\right) d\mu_x(g) = \int_{\mathcal{O}_x} \left(\int_{\Stab_x} f(gh) d\nu(h)\right) d\mu_x(g).
    \]
    This measure $\mu_x$ is then precisely the claimed Haar measure on $\mathcal{O}_x$.
\end{proof}
The above theorem not only guarantees Haar measures on more general manifolds beyond compact Lie groups, but the proof also tells us how to write certain types of integrals on those manifolds in terms of integrals on the associated compact Lie group. 
This essentially says that computing integrals over such manifolds (adjoint orbits for example) boils down to computing a similar integral over the compact Lie group acting on the manifold.

\subsection{The Harish-Chandra and Harish-Chandra--Itzykson--zuber Integral Formulas}

We now have a notion of uniform measure with respect to which we can write down integrals over Lie groups and adjoint orbits. 
But, this seems to get us no closer to actually computing such integrals. 
This is where the symmetries of the uniform measure and the Lie group itself come into play, yielding two of the most important results (from a computational perspective) in the theory of compact Lie groups: the \textbf{Harish-Chandra (HC) integral formula}, along with its corollary, the \textbf{Harish-Chandra--Itzykson--Zuber (HCIZ) integral formula}.

\begin{theorem}[HC integral formula, \cite{HC}]\label{thm:HC}
    Let $G$ be a compact connected semisimple Lie group, $\mathfrak{g}$ the associated Lie algebra, and $\mathfrak{t} \subset \mathfrak{g}$ a Cartan subalgebra. 
    For any $H_1,H_2 \in \mathfrak{t}$, we have
    \[
        \int_G e^{B(H_1, \Ad_g(H_2))} \, d\mu(g) = C_G \cdot \frac{\sum_{w \in W} \epsilon(w) e^{B(H_1, w \cdot H_2)}}{\Pi(H_1) \Pi(H_2)},
    \]
    where $B$ is the Killing form of $\mathfrak{g}$, $\mu$ is the Haar measure on $G$, $W$ is the Weyl group of $G$, $\Pi$ is an efficiently computable function on $\mathfrak{t}$, $C_G$ is an efficiently computable constant, and $\epsilon(w)$ is the sign of $w$. (We do not define the notion of sign here, but note that it is analogous to the notion of the sign of a permutation in $S_n$.)
\end{theorem}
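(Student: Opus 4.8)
The plan is to recognize the left-hand side as an exponential integral over the adjoint orbit $\mathcal{O}_{H_2}$ and to evaluate it by an exact stationary-phase (localization) argument, so that the continuous integral over $G$ collapses onto the finite critical set $W \cdot H_2$. First I would use Corollary \ref{cor:orbit_Haar_measure} to push the Haar integral on $G$ down to the orbit,
\[
    \int_G e^{B(H_1, \Ad_g(H_2))} \, d\mu(g) = \int_{\mathcal{O}_{H_2}} e^{B(H_1, Y)} \, d\mu_{H_2}(Y),
\]
where $\mu_{H_2}$ is the invariant probability measure on $\mathcal{O}_{H_2}$. By the KKS symplectic structure on adjoint orbits (Remark \ref{rem:adjoint_orbits_KKS}), $\mu_{H_2}$ is proportional to the Liouville measure of this symplectic manifold, and $Y \mapsto B(H_1, Y)$ is exactly the component of the moment map generating the torus action $Y \mapsto \Ad_{\exp(sH_1)}(Y)$. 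The Duistermaat--Heckman theorem (equivalently, the Berline--Vergne / Atiyah--Bott localization formula) then asserts that the (oscillatory version of this) integral is computed exactly by summing Gaussian contributions over the fixed points of that torus action.

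The key steps are then the following. \emph{(i) Locate the critical points.} Repeating the tangent-space computation from the proof of Corollary \ref{cor:kostant_unitary} in the general setting: the tangent space of $\mathcal{O}_{H_2}$ at $Y$ is $\{[Z,Y] : Z \in \mathfrak{g}\}$, so $Y$ is critical for $B(H_1,\cdot)$ iff $B([Z,Y],H_1) = -B([Y,Z],H_1) = B(Z,[Y,H_1]) = 0$ for all $Z \in \mathfrak{g}$, i.e. iff $[H_1,Y]=0$. For regular $H_1$ this forces $Y \in \mathfrak{t}$, and Kostant's theorem (Theorem \ref{thm:Kostant}) already identifies $\mathcal{O}_{H_2} \cap \mathfrak{t} = W \cdot H_2$; for regular $H_2$ these are $|W|$ isolated nondegenerate points. \emph{(ii) Compute the Hessians.} Decomposing $\mathfrak{g} = \mathfrak{t} \oplus \bigoplus_{\alpha} \mathfrak{g}_\alpha$ into root spaces, the Hessian of $B(H_1,\cdot)$ restricted to $\mathcal{O}_{H_2}$ at $w\cdot H_2$ is block-diagonal over positive roots with eigenvalues proportional to $\alpha(H_1)\,\alpha(wH_2)$; hence the stationary-phase contribution at $w \cdot H_2$ is a constant times $e^{B(H_1, w H_2)} \cdot \prod_{\alpha>0}(\alpha(H_1)\alpha(wH_2))^{-1}$, which equals $\epsilon(w)\, e^{B(H_1, wH_2)} / (\Pi(H_1)\Pi(H_2))$ with $\Pi := \prod_{\alpha>0}\alpha$ the Weyl denominator, the sign $\epsilon(w)$ arising from how $w$ permutes and flips the positive roots. \emph{(iii) Assemble.} Summing over $w \in W$ and absorbing the orbit-independent constants (powers of $2\pi i$, the Liouville-to-probability normalization) into $C_G$ yields the stated formula for regular $H_1,H_2$; one then extends to all of $\mathfrak{t}$ by continuity, the apparent pole of the right-hand side being removable since the numerator is $W$-anti-invariant and hence divisible by $\Pi$.

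The hard part will be step (ii): pinning down the Hessian determinant and, in particular, tracking signs carefully enough to see that the product of eigenvalue signs at $w \cdot H_2$, relative to the contribution at $H_2$ itself, is exactly $\epsilon(w)$. This is genuinely where the root-space decomposition and a little weight theory enter --- precisely the material the excerpt has chosen not to develop. A fully self-contained alternative is Harish-Chandra's original PDE argument: show that $u(H_2) := \Pi(H_2)\int_G e^{B(H_1,\Ad_g H_2)}\,d\mu(g)$, viewed as a function of $H_2 \in \mathfrak{t}$, solves the constant-coefficient eigenvalue equation $\Delta_{\mathfrak{t}}\,u = B(H_1,H_1)\,u$ (because $e^{B(H_1,\cdot)}$ is a Laplacian eigenfunction on $\mathfrak{g}$ and the radial part of the Laplacian, conjugated by $\Pi$, is the flat Laplacian on $\mathfrak{t}$), and that $u$ is $W$-anti-invariant with controlled growth; these properties force $u(H_2) = C_G\,\Pi(H_1)^{-1}\sum_{w\in W}\epsilon(w)\,e^{B(H_1,wH_2)}$. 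Either route reduces the integral over the continuous group $G$ to the finite alternating sum over the Weyl group $W$.
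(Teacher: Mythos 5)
Your proposal is a correct high-level sketch, but it takes a genuinely different route from what the paper presents. In fact the paper does not prove Theorem~\ref{thm:HC} in the stated generality at all: it only gives a proof sketch of the HCIZ special case $G = \U(n)$ (Theorem~\ref{thm:HCIZ}), and that sketch proceeds by rank reduction rather than localization --- project onto the leading $(n-1)\times(n-1)$ principal submatrix, invoke Cauchy interlacing and Baryshnikov's formula for the eigenvalue density of that submatrix, rewrite the orbit integral as a nested multilinear integral over the Gelfand--Tsetlin cone $P_\Lambda$, and close an induction on $n$ by pushing the coordinate integrals through the determinant. That argument is elementary and very explicit, but it is tightly bound to the $A_{n-1}$ picture of Hermitian matrices and their principal submatrices and does not obviously generalize. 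Your two proposed routes --- exact stationary phase via Duistermaat--Heckman / Berline--Vergne localization on the (co)adjoint orbit with its KKS form, and Harish-Chandra's original radial-Laplacian PDE argument --- are the two classical proofs of the general formula for any compact connected $G$, and they make the appearance of the Weyl group and the alternating sign $\epsilon(w)=(-1)^{\ell(w)}$ structurally transparent (as Morse indices, or as the anti-invariance forced by the radial Laplacian). You correctly flag where the real work lies: in (ii) one must carry out the second-order expansion in the root-space decomposition $\mathfrak{g}=\mathfrak{t}\oplus\bigoplus_{\alpha>0}\mathfrak{g}_\alpha$ to obtain the Hessian eigenvalues $\pm\alpha(H_1)\alpha(wH_2)$ and verify that the sign of $\prod_{\alpha>0}\alpha(wH_2)$ relative to $\prod_{\alpha>0}\alpha(H_2)$ is exactly $\epsilon(w)$, which is where weight/root theory (not developed in the paper) enters. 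One small caveat worth making explicit in your step (i)--(iii): as written the integrand $e^{B(H_1,\Ad_g H_2)}$ is a real exponential (since $B$ is real-valued on a compact real form), so the literal Duistermaat--Heckman statement about oscillatory integrals $e^{it\mu}$ requires either complexifying $H_1$ and analytically continuing, or directly integrating the equivariantly closed form $e^{\omega+\mu_{H_1}}$ and citing the Berline--Vergne formula; your sketch gestures at this but it should be said.
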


\noindent
A priori, without the HC formula, it is  surprising that the integral on the left hand side can be written in terms of  determinants.
However, with the expression in the HC formula and the knowledge of the Weyl group, one might guess that the right hand side may be written in terms of  determinants.

We now state the HCIZ formula, which can be viewed as a special case of the HC formula, where the Lie group is chosen to be the unitary group $\U(n)$. 

\begin{theorem}[HCIZ integral formula, \cite{HC,IZ1980}]\label{thm:HCIZ}
    Given real diagonal $n \times n$ matrices $Y = \diag(y)$ and $\Lambda = \diag(\lambda)$ with diagonal entries $y_1 > y_2 > \cdots > y_n$ and $\lambda_1 > \lambda_2 > \cdots > \lambda_n$ respectively, we have
    \[
        \int_{\U(n)} e^{-\langle Y, U \Lambda U^* \rangle_F} \, d\mu(U) = \left(\prod_{p=1}^{n-1} p!\right) \frac{\det([e^{-y_i\lambda_j}]_{1 \leq i,j \leq n})}{\prod_{i<j} -(y_i-y_j)(\lambda_i-\lambda_j)},
    \]
    where $\langle \cdot, \cdot \rangle_F$ is the Frobenius inner product and $\mu$ is the Haar measure on $\U(n)$.
\end{theorem}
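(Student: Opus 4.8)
The plan is to derive the HCIZ formula as the special case $G = \U(n)$ of the Harish-Chandra integral formula (Theorem \ref{thm:HC}), using the identifications established in Sections \ref{sec:unitary_group}--\ref{sec:Lie}: the Lie algebra $\mathfrak{u}_n$ is (up to the factor $i$) the space of Hermitian matrices, the Cartan subalgebra $\mathfrak{t}$ is the diagonal matrices, and the Weyl group $W$ is the symmetric group $S_n$ acting by permuting diagonal entries. First I would write $Y = \diag(y)$ and $\Lambda = \diag(\lambda)$ as elements of $\mathfrak{t}$ (absorbing factors of $i$ and signs as in Remark \ref{rem:Killing_u_n} and the remark following Corollary \ref{cor:linearopt}), so that $-\langle Y, U\Lambda U^*\rangle_F$ becomes $B(H_1,\Ad_U(H_2))$ for the appropriately normalized trace form on $\mathfrak{u}_n$. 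Then Theorem \ref{thm:HC} immediately yields
\[
    \int_{\U(n)} e^{-\langle Y, U\Lambda U^*\rangle_F}\, d\mu(U) = C_{\U(n)} \cdot \frac{\sum_{\sigma \in S_n} \sgn(\sigma)\, e^{-\sum_i y_i \lambda_{\sigma(i)}}}{\Pi(y)\,\Pi(\lambda)}.
\]

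The second step is to identify each of the three pieces on the right-hand side explicitly for $\U(n)$. The numerator $\sum_{\sigma\in S_n}\sgn(\sigma)\,e^{-\sum_i y_i\lambda_{\sigma(i)}}$ is, by the Leibniz expansion of the determinant, exactly $\det\big([e^{-y_i\lambda_j}]_{1\le i,j\le n}\big)$ — this is the ``hidden determinant'' alluded to in the discussion after Theorem \ref{thm:HC}. For the denominator, the Weyl-theoretic function $\Pi$ on $\mathfrak{t}$ for $\U(n)$ (or $\SU(n)$) is the product of positive roots, i.e. the Vandermonde $\Pi(y) = \prod_{i<j}(y_i - y_j)$ up to sign, and likewise for $\lambda$; combining the two Vandermondes and tracking the sign gives the factor $\prod_{i<j} -(y_i-y_j)(\lambda_i-\lambda_j)$ in the statement. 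Finally the constant $C_{\U(n)}$ must be pinned down as $\prod_{p=1}^{n-1} p!$; the cleanest way is to evaluate both sides in a limiting regime — e.g. take $\lambda \to 0$ (so the left side $\to 1$ since $\mu$ is a probability measure) and compare leading-order asymptotics of the right side via L'Hôpital / Taylor expansion of the determinant, which forces $C_{\U(n)} = \prod_{p=1}^{n-1}p!$. Alternatively, and more in the spirit of the rest of the paper, one can match against Theorem \ref{thm:partition}: the rank-one case of HCIZ (taking $Y = \diag(1,0,\ldots,0)$ after a suitable limit, or integrating HCIZ against the appropriate measure) recovers the $\mathcal{P}_1$ partition function $(n-1)!\sum_i e^{-\lambda_i}/\prod_{j\ne i}(\lambda_j-\lambda_i)$, which already fixes the normalization in a special case and can be bootstrapped.

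An alternative self-contained route, avoiding Theorem \ref{thm:HC} entirely, is to prove HCIZ directly by the heat-equation / Weyl-integration-formula argument: push the Haar measure on $\U(n)$ forward through $U \mapsto \diag(U\Lambda U^*)$ using the Weyl integration formula, observe that the resulting density on the simplex-like region is (up to the Vandermonde Jacobian) flat, and recognize that $I(y) := \Pi(y)\int_{\U(n)} e^{-\langle Y, U\Lambda U^*\rangle_F} d\mu(U)$ satisfies the free heat equation $\sum_i \partial_{y_i}^2 I = (\sum_i \lambda_i^2) I$ together with antisymmetry under $S_n$, which forces it to be a signed exponential sum, i.e. a determinant; the boundary/normalization data then gives the constant. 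This mirrors the convolution-and-Laplace-transform strategy used for Theorem \ref{thm:partition} in Section \ref{sec:ex2}, just one level up in generality.

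The main obstacle is pinning down the explicit constant $\prod_{p=1}^{n-1} p!$ and the precise form of $\Pi$: Theorem \ref{thm:HC} only asserts that $C_G$ and $\Pi$ are ``efficiently computable'' without giving them in closed form, so the real work is the representation-theoretic bookkeeping that identifies $\Pi$ with the Vandermonde (product of positive roots of $A_{n-1}$) and computes $C_{\U(n)}$ — most efficiently done by the limiting/normalization argument above rather than by unwinding the proof of the general HC formula. The determinant identity for the numerator and the algebra identifying $\mathfrak{u}_n$-data with matrix-data are routine once set up.
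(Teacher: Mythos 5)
Your proposal is essentially correct in outline but takes a genuinely different route from the paper's proof, and the paper itself already anticipates (and partially deflects) your first route. Specializing the Harish--Chandra formula (Theorem \ref{thm:HC}) to $G=\U(n)$ is precisely the derivation the paper sketches immediately after stating Theorem \ref{thm:HCIZ}, where it expands the determinant as the signed sum over $S_n$; but the paper explicitly flags that this is \emph{not} an immediate corollary because $\U(n)$ is not semisimple, and defers the rigorous version to \cite{mcswiggen2019}. You gesture at this via Remark \ref{rem:Killing_u_n}, but you do not actually discharge the non-semisimplicity issue, and since Theorem \ref{thm:HC} is stated in the paper without proof, this route is ultimately proof-by-citation. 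Your suggested normalization trick ($\lambda\to 0$ and Taylor/L'H\^opital matching, or bootstrapping from Theorem \ref{thm:partition}) is a reasonable way to pin down $C_{\U(n)}$ and $\Pi$, but these are exactly the details the paper chooses to avoid by giving an independent proof of HCIZ.

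The paper's own proof, following \cite{faraut2015rayleigh}, is a self-contained induction that you do not mention: it rewrites the unitary integral as an orbital integral over $\mathcal{O}_\Lambda$ via Lemma \ref{lem:quotient_integral} and Corollary \ref{cor:orbit_Haar_measure}, projects to the leading $(n-1)\times(n-1)$ principal submatrix, uses Rayleigh--Cauchy interlacing to identify the image as a scaled hypercube $P_\Lambda$, quotes Baryshnikov's theorem for the explicit pushforward density $(n-1)!\,V_{n-1}(a)/V_n(\eig(\Lambda))$ on $P_\Lambda$, and then verifies the closed form by induction on $n$, passing the $n-1$ coordinate integrals inside the determinant by multilinearity. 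This hands-on route gives explicit control of the Vandermonde factors and of the constant $\prod_{p=1}^{n-1}p!$---exactly the bookkeeping you identify as the main obstacle in the HC-specialization approach---at the cost of needing Baryshnikov's density formula as an input. Your second suggestion (heat equation plus Weyl integration, in the spirit of Itzykson--Zuber) is a genuine third alternative and would be self-contained, but you leave it as a sketch; it is not the path the paper takes.
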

\noindent
By expanding the determinant in the HCIZ formula, one obtains
\[
    \det([e^{-y_i\lambda_j}]_{1 \leq i,j \leq n}) = \sum_{\sigma \in S_n} \epsilon(\sigma) \prod_{i=1}^n e^{-y_i \lambda_{\sigma(i)}} = \sum_{\sigma \in S_n} \epsilon(\sigma) e^{-\langle y, \sigma \cdot \lambda \rangle} = \sum_{\sigma \in S_n} \epsilon(\sigma) e^{-\langle Y, \sigma \cdot \Lambda \rangle_F}
\]
since $Y$ and $\Lambda$ are diagonal, where $\epsilon(\sigma)$ denotes the sign of the permutation $\sigma$. This then implies
\[
    \int_{\U(n)} e^{-\langle Y, U \Lambda U^* \rangle_F} \, d\mu(U) = \left(\prod_{p=1}^{n-1} p!\right) \frac{\sum_{\sigma \in S_n} \epsilon(\sigma) e^{-\langle Y, \sigma \cdot \Lambda \rangle_F}}{\prod_{i<j} -(y_i-y_j)(\lambda_i-\lambda_j)},
\]
which is essentially the HC formula. Note that formally, the HCIZ formula is not an immediate corollary of the HC formula because $\U(n)$ is not semisimple. 
However, this derivation can be made formal; see \cite{mcswiggen2019}.

\begin{remark}
    The difference in various signs between the HC and HCIZ formulas comes from the fact that in the HCIZ formula we deal with Hermitian matrices, while in the HC formula for $G = \U(n)$ we deal with the Lie algebra $\mathfrak{u}_n$ consisting of \emph{skew}-Hermitian matrices.
\end{remark}
Harish-Chandra originally studied such integrals in the `50s in order to develop a theory of Fourier analysis on semisimple Lie algebras.
Since then, these formulas have acquired far-reaching applications in other areas. 
The integral expression given above in the HCIZ formula is a partition function for a distribution on the set of unitary matrices $\U(n)$, which we call \textbf{HCIZ densities}.
Such unitary matrices are distributed according to the density function $e^{-\langle Y, U \Lambda U^* \rangle_F}$, and they are important in various settings in physics and random matrix theory. 
For instance, they appear in multi-matrix models in quantum field theory and string theory \cite{IZ,DGZ}, and they are also related to models of coupled Gaussian matrices \cite{IZ} that have been used to solve the Ising model on a planar random lattice \cite{Kaz, BouKaz}. 
Further, the moments of such HCIZ-distributed matrices are useful for computing correlation functions for matrix models of gauge theories which have been studied for over 20 years \cite{Morozov, Sha, Eynard-note, E-PF, PEDZ}.

Beyond HCIZ-distributed random matrices, the integral itself occurs in expressions for joint spectral densities of certain matrix ensembles such as Wishart matrices and off-center Wigner matrices \cite{AGsurvey}. 
The integral is also important in representation theory, where the corresponding finite sum formula is closely related to the Weyl character formula which describes the characters of irreducible representations of compact Lie groups (see \cite{knapp2013}).

The algorithmic significance of these results is that they give discrete sum formulas for continuous partition functions on compact Lie groups. 
Conceptually, the Weyl group of a Lie group $G$ acts as a ``quadrature group'' for $G$; that is, integrals of certain exponential functions over the group $G$ can be computed by instead summing those functions over the finite Weyl group $W$. 
Unfortunately the Weyl group $W$ is often related to the symmetric group, and so it is a priori too large for the HC formula to be efficiently computable. 
For many matrix groups however, the formula can be rewritten in terms of a small sum of determinants, as in the HCIZ formula, and this leads to something which can be efficiently computed, see \cite{LeakeV20b}. 
See \cite{mcswiggen2018harish} for more details on the formula itself, as well as its history.

We now have formulas for integrals over a compact Lie group, but how do these HCIZ and HC densities and formulas relate to integral formulas on the adjoint orbits of a Lie group?
We now answer this explicitly in the case of unitary orbits. 
Once we have the HCIZ formula, we can apply Lemma \ref{lem:quotient_integral} and Corollary \ref{cor:orbit_Haar_measure} to immediately obtain a similar integral formula on the adjoint orbits of $\U(n)$, given as follows.
\[
    \int_{\mathcal{O}_\Lambda} e^{-\langle Y, X \rangle_F} \, d\mu_\Lambda(X) = \int_{\U(n)} e^{-\langle Y, U \Lambda U^* \rangle_F} \, d\mu(U) = \left(\prod_{p=1}^{n-1} p!\right) \frac{\det([e^{-y_i\lambda_j}]_{1 \leq i,j \leq n})}{\prod_{i<j} -(y_i-y_j)(\lambda_i-\lambda_j)}.
\]
These log-linear densities on the adjoint obrits of $\U(n)$ are also well-studied. 
They arise for example as solutions to certain maximum entropy problems over the adjoint orbits, with one specific application being the computation of the entropy-maximizing representation of a quantum density matrix as an ensemble of pure states \cite{Band1976,Park1977,SlaterEntropy1}; see also \cite{LeakeV20}.
When the adjoint orbit is given by rank-$k$ PSD projections, such distributions give rise to the exponential mechanism for differentially private low-rank approximation \cite{MTalwar,Kamalika,KTalwar}. 
In the context of statistics, these distributions go by the name \textbf{matrix Langevin} and \textbf{matrix Bingham} distributions, except that typically one considers adjoint orbits of the orthogonal group rather than the unitary group \cite{ChikusePaper,ChikuseBook}.
And finally, when considering a convex body instead of an adjoint orbit, such entropy-maximizing distributions arose in the work of Klartag (inspired by a work of Gromov) on the isotropic constant \cite{Klartag2006,Gromov1990}, and these distributions also appear connection to the works of G\"uler, Bubeck and Eldan on barrier functions for interior point methods \cite{Guler1997,Guler1998,BubeckEldan}.

These HC, HCIZ, and orbit formulas immediately prompt a sampling question. 
That is, we have explicit expressions for the partition functions of these densities on a compact Lie group; can we efficiently sample from such distributions?

In the case of the unitary group, the problem of sampling according to a given HCIZ density and the problem of sampling from to a specific adjoint orbit are equivalent. 
The sampling question in this case has been answered affirmatively, and an efficient algorithm for sampling has been recently given by \cite{LMV2021}. 
The algorithm makes crucial use of Lie theory: the idea of obtaining a convex polytope from projecting a given adjoint orbit, as in the Kostant convexity theorem, is an essential part of the algorithm. 
Generally speaking, one first samples from the polytope, and then samples from the adjoint orbit by sampling from the corresponding fiber of the projection map. (Though it should be mentioned that the polytope and projection are slightly different; they are the \textbf{Gelfand-Tsetlin polytope} and the \textbf{Rayleigh map}. See \cite{LMV2021} for more discussion.)

Beyond the unitary group, the question of efficient sampling remains open, even in the case of adjoint orbits of the real orthogonal group. That is, efficiently sampling from matrix Langevin and matrix Bingham distributions as they are traditionally defined is an open problem.

\begin{remark}
    The expression above for the integral of a log-linear function over $\mathcal{O}_\Lambda$ looks different than the expression of Theorem \ref{thm:partition}. This is because here we have assumed that the eigenvalues of $\Lambda$ are distinct, while the eigenvalues of $X = e_1e_1^\top$ are not distinct. 
    One way to make the connection between these two cases is via a limiting argument and applying L'H\^opital's rule. 
    The details can be found in \cite{LeakeV20}.
\end{remark}

\subsubsection{Proof (Sketch) of the HCIZ Formula}

In this section, we sketch a proof of the HCIZ formula (Theorem \ref{thm:HCIZ}) due to \cite{faraut2015rayleigh}. Recall that we want to prove
\[
    \int_{\U(n)} e^{-\langle Y, U \Lambda U^* \rangle_F} \, d\mu(U) = \left(\prod_{p=1}^{n-1} p!\right) \frac{\det([e^{-y_i\lambda_j}]_{1 \leq i,j \leq n})}{\prod_{i<j} -(y_i-y_j)(\lambda_i-\lambda_j)}.
\]
First recall that by Lemma \ref{lem:quotient_integral} and Corollary \ref{cor:orbit_Haar_measure}, we have that
\[
    \int_{\mathcal{O}_\Lambda} e^{-\langle Y, X \rangle_F} \, d\mu_\Lambda(X) = \int_{\U(n)} e^{-\langle Y, U \Lambda U^* \rangle_F} \, d\mu(U).
\]
We now define the measure $\mu_\Lambda^{(k)}$ to be the measure obtained by projecting $\mu_\Lambda$ through the map $\pi_k$ which maps a matrix to its leading $k \times k$ principal submatrix. Letting $\mathcal{O}_\Lambda^{(k)} := \pi_k(\mathcal{O}_\Lambda)$, this definition implies the formula
\[
    \int_{\mathcal{O}_\Lambda^{(k)}} f(Y) \, d\mu_\Lambda^{(k)}(Y) = \int_{\mathcal{O}_\Lambda} f(\pi_k(X)) \, d\mu_\Lambda(X)
\]
for any function $f$ on $\mathcal{O}_\Lambda^{(k)}$. 
The support $\mathcal{O}_\Lambda^{(k)}$ of the measure $\mu_\Lambda^{(k)}$ can be described explicitly, using the following result often attributed to Cauchy or Rayleigh.

\begin{proposition}[Rayleigh-Cauchy]
    Let $\Lambda$ be a Hermitian matrix with eigenvalues $\lambda_1 \geq \lambda_2 \geq \cdots \geq \lambda_n$, and let $\mu_1 \geq \cdots \geq \mu_{n-1}$ be the eigenvalues of the leading $(n-1) \times (n-1)$ principal submatrix of $\Lambda$. 
    Then the following interlacing relations hold:
    \[
        \lambda_1 \geq \mu_1 \geq \lambda_2 \geq \mu_2 \geq \cdots \geq \lambda_{n-1} \geq \mu_{n-1} \geq \lambda_n.
    \]
\end{proposition}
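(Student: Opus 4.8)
The plan is to prove the interlacing relations via the Courant--Fischer min--max characterization of the eigenvalues of a Hermitian matrix, which reduces the statement to an elementary dimension count for subspaces. (A second route via Cauchy's secular equation is possible and I sketch it at the end.)

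First I would fix notation: write $\Lambda$ in block form with leading $(n-1)\times(n-1)$ principal block $B$, border vector $b$, and corner entry $d$, and identify $\C^{n-1}$ with the subspace $W := \{v \in \C^n : v_n = 0\}$. The key elementary observation is that for any $v \in W$ one has $v^* \Lambda v = \tilde v^* B \tilde v$ and $\|v\| = \|\tilde v\|$, where $\tilde v \in \C^{n-1}$ consists of the first $n-1$ coordinates of $v$; hence the Rayleigh quotient $R_B$ of $B$ on $\C^{n-1}$ coincides with the restriction to $W$ of the Rayleigh quotient $R_\Lambda(v) := v^*\Lambda v/(v^*v)$ of $\Lambda$.

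Next I would invoke both forms of Courant--Fischer. For the inequalities $\mu_k \le \lambda_k$: since $\lambda_k = \max_{\dim S = k} \min_{0 \ne v \in S} R_\Lambda(v)$ over $k$-dimensional $S \subseteq \C^n$ while $\mu_k$ is the same maximum restricted to $S \subseteq W$, the max defining $\mu_k$ ranges over a subfamily, so $\mu_k \le \lambda_k$ directly. For the inequalities $\mu_k \ge \lambda_{k+1}$: let $u_1,\dots,u_n$ be orthonormal eigenvectors of $\Lambda$ ordered by decreasing eigenvalue, set $U := \mathrm{span}(u_1,\dots,u_{k+1})$, and use $\dim(U \cap W) \ge \dim U + \dim W - n = (k+1)+(n-1)-n = k$; choosing any $k$-dimensional $S \subseteq U \cap W$, every nonzero $v \in S$ lies in $U$ and hence satisfies $R_\Lambda(v) \ge \lambda_{k+1}$, so $\mu_k \ge \min_{0 \ne v \in S} R_\Lambda(v) \ge \lambda_{k+1}$. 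Together these give the full chain $\lambda_1 \ge \mu_1 \ge \lambda_2 \ge \cdots \ge \mu_{n-1} \ge \lambda_n$.

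The hard part here is only bookkeeping: one must match the max--min form of Courant--Fischer to the upper bounds and the min--max form (or the eigenspace argument above) to the lower bounds, and apply the dimension inequality $\dim(U \cap W) \ge \dim U + \dim W - n$ with the right indices. Alternatively, one could expand $\det(\Lambda - xI) = (d-x)\det(B - xI) - b^*\,\mathrm{adj}(B - xI)\,b$, observe that when $B$ has distinct eigenvalues and $b$ has no vanishing components in the eigenbasis of $B$ this is a rational function with simple poles at the $\mu_i$ whose sign changes force a root of $\det(\Lambda - xI)$ strictly between consecutive $\mu_i$, and then recover the general weak-inequality statement by a perturbation/continuity argument; in that approach the main nuisance is handling the degenerate cases cleanly. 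I would present the min--max argument, as it is shortest and needs no genericity reductions.
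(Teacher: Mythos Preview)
Your argument via the Courant--Fischer min--max principle is correct and is one of the standard proofs of this classical interlacing result. The paper, however, does not prove this proposition at all: it is stated without proof in the sketch of the HCIZ formula and attributed to Cauchy and Rayleigh, then immediately used as input to describe the support of the projected measure $\mu_\Lambda^{(n-1)}$. So there is no ``paper's own proof'' to compare against; you have supplied a valid proof where the paper simply cites the fact.

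For what it is worth, both routes you outline are classical. The min--max route you chose is the cleanest and handles the weak inequalities and all degeneracies uniformly, exactly as you note. The secular-equation route (expanding $\det(\Lambda - xI)$ via the Schur complement and tracking sign changes of the resulting rational function) is closer in spirit to how the result is sometimes presented in the random-matrix and HCIZ literature, since that expansion also underlies the explicit density $g_\Lambda$ in Baryshnikov's theorem quoted just after the proposition. But either is perfectly adequate here, and your min--max argument needs no genericity assumptions, so it is the right choice for a self-contained write-up.
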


\noindent
This fact can actually be strengthened: in fact, these interlacing relations completely determine the support of the measure $\mu_\Lambda^{(n-1)}$:
\[
    \mathcal{O}_\Lambda^{(n-1)} = \{M \in \C^{(n-1) \times (n-1)} : \eig(M) \text{ interlace } \eig(\Lambda)\}.
\]
That is, every $(n-1) \times (n-1)$ Hermitian matrix with eigenvalues interlacing those of $\Lambda$ can appear as the principal submatrix of a unitary conjugate of $\Lambda$. 
Further, the set of possible eigenvalues of elements of $\mathcal{O}_\Lambda^{(n-1)}$ form a convex scaled hypercube given by the above inequalities, which we  denote by $P_\Lambda$. 
By induction, we obtain similar results for $\mathcal{O}_\Lambda^{(k)}$ for all $k$.

Beyond this expression for $\mathcal{O}_\Lambda^{(n-1)}$, Barychnikov has given a formula for the probability distribution on the eigenvalues of matrices which can appear in $\mathcal{O}_\Lambda^{(n-1)}$. 
This is given explicitly as follows.

\begin{theorem}[Barychnikov, see Theorem 1.3 of \cite{faraut2015rayleigh}]
    Fix a Hermitian matrix $\Lambda$, and let $X$ be sampled from $\mathcal{O}_\Lambda$ according to the uniform probability measure $\mu_\Lambda$. 
    Then $\eig(\pi_{n-1}(X))$ is a vector sampled from the convex scaled hypercube $P_\Lambda$ of vectors which interlace $\eig(\Lambda)$, according to a probability measure which is a density function $g_\Lambda$ times Lebesgue measure restricted to $P_\Lambda$, and the $g_\Lambda$ is explicitly given by
    \[
        g_\Lambda(x) = (n-1)! \frac{V_{n-1}(x)}{V_n(\eig(\Lambda))},
    \]
    where $V_n(x)$ is the Vandermonde determinant $V_n(x) = \prod_{i<j} (x_i-x_j)$.
\end{theorem}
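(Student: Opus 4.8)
The plan is to reduce the statement to the computation of a single density $g_\Lambda$ on $P_\Lambda$ and then obtain $g_\Lambda$, most cleanly by an induction on $n$ that realizes the iterated interlacing as a uniform Gelfand--Tsetlin pattern, with the Vandermonde factor ultimately appearing through a Jacobian tied to the secular equation of a bordered matrix. First I would record the invariance that makes the statement well posed: for $V \in \U(n-1)$ put $\widehat V := \diag(V,1) \in \U(n)$; then $\widehat V$ stabilizes $\mathcal{O}_\Lambda$ and fixes $\mu_\Lambda$, while $\pi_{n-1}(\widehat V X \widehat V^*) = V\,\pi_{n-1}(X)\,V^*$. Hence the pushforward of $\mu_\Lambda$ under $\pi_{n-1}$ is a $\U(n-1)$-conjugation-invariant probability measure on Hermitian $(n-1)\times(n-1)$ matrices, supported---by the Rayleigh--Cauchy interlacing description of $\mathcal{O}_\Lambda^{(n-1)}$ recalled above---on matrices whose spectrum lies in $P_\Lambda$. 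Such a measure is exactly the data of a density $g_\Lambda$ on $P_\Lambda$ against Lebesgue measure, so the entire content is to identify $g_\Lambda$.

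Next, the induction. For $n=2$ one has $\pi_1(U\Lambda U^*) = \lambda_2 + |U_{11}|^2(\lambda_1-\lambda_2)$, and since $|U_{11}|^2$ is uniform on $[0,1]$ for Haar $U \in \U(2)$ we get $g_\Lambda \equiv (\lambda_1-\lambda_2)^{-1} = V_1(x)/V_2(\eig\Lambda)$, the base case. For the inductive step I would use the bordered-matrix structure as the engine: write $X \in \mathcal{O}_\Lambda$ as $X = \left[\begin{smallmatrix} M & b \\ b^* & c \end{smallmatrix}\right]$ and pass to an eigenbasis of $M$, where the identity $\det(zI-X)/\det(zI-M) = z - c - \sum_j |\beta_j|^2(z-\mu_j)^{-1}$ pins, once $\mu := \eig(M)$ is fixed, the quantities $|\beta_j|^2 = -\prod_i(\mu_j-\lambda_i)\big/\prod_{k\neq j}(\mu_j-\mu_k)$ and $c = \sum_i\lambda_i - \sum_j\mu_j$; thus $\mathcal{O}_\Lambda$ fibers over $\{M : \eig(M)\text{ interlaces }\eig(\Lambda)\}$ with torus fibers $(\R/2\pi\Z)^{n-1}$ (the phases of the $\beta_j$), and $\mu_\Lambda$ disintegrates as the uniform measure on those fibers over a base measure on $\{M\}$. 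A coarea/change-of-variables computation relating this base measure to Haar measure on $\U(n)$ and to the chart $(M,b,c)\cap\{\eig(X)=\eig(\Lambda)\}$ then shows that its pushforward to $\mu$ is $\propto \prod_{i<j}(\mu_i-\mu_j) = V_{n-1}(\mu)$, which is $g_\Lambda$ up to normalization. Equivalently, iterating $\pi_{n-1},\pi_{n-2},\dots$ realizes the spectra of all leading principal submatrices of a uniform $X\in\mathcal{O}_\Lambda$ as a random Gelfand--Tsetlin pattern with deterministic top row $\eig(\Lambda)$; one proves by induction that this pattern is uniform on the Gelfand--Tsetlin polytope, whence its penultimate row has density $\propto V_{n-1}$ by the classical volume formula for that polytope.

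Finally, the normalization constant is fixed by $\int_{P_\Lambda} V_{n-1}(x)\,dx = V_n(\eig\Lambda)/(n-1)!$, which follows either by a direct induction on $n$ (integrating out one interlacing coordinate at a time) or by recognizing the left side, up to the factor $\prod_{k=1}^{n-2}k!$, as the volume $V_n(\eig\Lambda)/\prod_{k=1}^{n-1}k!$ of the Gelfand--Tsetlin polytope with top row $\eig(\Lambda)$. This gives $g_\Lambda(x) = (n-1)!\,V_{n-1}(x)/V_n(\eig\Lambda)$, as claimed. The main obstacle is the middle step: carrying out the disintegration of the orbital measure $\mu_\Lambda$ in the bordered-matrix chart carefully enough for the $V_{n-1}$ factor to appear---equivalently, proving the uniformity of the random Gelfand--Tsetlin pattern---since every other step is either a symmetry observation or a routine, if slightly fiddly, integral.
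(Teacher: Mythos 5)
The paper does not prove this theorem: it is stated with a citation to Faraut's notes (Theorem 1.3 of \cite{faraut2015rayleigh}) and used as a black box in the sketch of the HCIZ proof. So there is no in-paper proof to compare against, and your proposal has to stand on its own.

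Your plan is the standard route to Baryshnikov's theorem and is structurally sound. The reduction to a density on $P_\Lambda$ via $\U(n-1)$-invariance of the pushforward, the $n=2$ base case (with $|U_{11}|^2$ uniform on $[0,1]$), the bordered-matrix chart with the secular-equation formula $|\beta_j|^2 = -\prod_i(\mu_j-\lambda_i)/\prod_{k\neq j}(\mu_j-\mu_k)$ and $c = \Tr\Lambda - \sum_j\mu_j$, the identification of the fiber of $\pi_{n-1}$ over a fixed $M$ (with distinct, strictly interlacing eigenvalues) as the torus of phases of $\beta$, and the normalization $\int_{P_\Lambda} V_{n-1} = V_n(\eig\Lambda)/(n-1)!$ via the Gelfand--Tsetlin polytope volume are all correct and well chosen. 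The equivalent reformulation as ``the random Gelfand--Tsetlin pattern of minors is uniform'' is exactly what Baryshnikov's theorem asserts and is the cleanest way to phrase the target.

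That said, you have accurately located where the real work lies and then left it as a placeholder: the coarea/disintegration computation that shows the base measure on $\{M\}$ pushes to a density proportional to $V_{n-1}(\mu)$. This is not a ``routine, slightly fiddly integral''---it is the entire theorem. One has to compare two measures: the pushforward of the orbit measure $\mu_\Lambda$ (itself the pushforward of Haar on $\U(n)$, whose orbit-tangent-space Jacobian carries a factor $\prod_{i<j}|\lambda_i-\lambda_j|^2$) to the $(M,b,c)$ coordinates, and then the change of variables $(M,b,c) \mapsto (\mu, V, |\beta|, \arg\beta, c)$ on the constraint surface $\eig(X)=\lambda$. The Vandermonde $V_{n-1}(\mu)$ must fall out of the determinant of that constrained Jacobian, and nothing in your sketch explains why the various Vandermonde and cross-difference factors from the eigenvector Jacobians and the $|\beta_j|^2$ formula conspire to leave exactly $V_{n-1}(\mu)$ with no residual $\prod_{i,j}|\lambda_i-\mu_j|$-type factors. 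Until that cancellation is exhibited, the proof is a plan rather than a proof. A concrete fix would be to carry out the coarea formula for the map $U \mapsto (\eig(\pi_{n-1}(U\Lambda U^*)))$ in the rank-one-perturbation (bordered) chart, tracking the Jacobian of the secular map $\mu \mapsto (|\beta_j|^2)_j$ explicitly; or, equivalently, to prove uniformity of the GT pattern by induction using this same chart one corner at a time.
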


\noindent
This then further implies another equivalent expression for orbital integrals, given by
\[
    \int_{\mathcal{O}_\Lambda^{(n-1)}} f(Y) \, d\mu_\Lambda^{(n-1)}(Y) = \int_{a \in P_\Lambda} \left[\int_{\mathcal{O}_{\diag(a)}} f(X) \, d\mu_{\diag(a)}(X)\right] g_\Lambda(a) \, da.
\]
That is, we have split the integral by first selecting a collection of eigenvalues $a$ according to $f_\Lambda(a) da$ on $P_\Lambda$, where $f_\Lambda$ is as defined in the previous theorem, and then integrating over the unitary orbit of $\diag(a)$ which is contained within $\mathcal{O}_\Lambda^{(n-1)}$.

We now have the necessary tools to prove an inductive formula for the desired orbital intergal over $e^{-\langle Y, X \rangle_F}$. We first give an alternative expression for the exponent $-\langle X, Y \rangle_F$ in the orbital integral. For this, we now assume that $Y = \diag(y)$ is diagonal, which is without loss of generality by unitary conjugation. Letting $X'$ and $Y'$ denote the leading principal submatrices of $X$ and $Y$ respectively, we then have
\[
    -\langle Y, X \rangle_F = -\Tr(YX) = -\Tr(Y'X') - y_n (\Tr(X)-\Tr(X')) = -\langle Y', X' \rangle_F - y_n (\Tr(\Lambda)-\Tr(X')).
\]
We now plug this expression into the orbital integral to obtain
\[
    \int_{\mathcal{O}_\Lambda} e^{-\langle Y, X \rangle_F} \, d\mu_\Lambda(X) = e^{-y_n \Tr(\Lambda)} \int_{\mathcal{O}_\Lambda} e^{y_n \Tr(X')} e^{-\langle Y', X' \rangle_F} \, d\mu_\Lambda(X).
\]
The integrand is then a function only of $X' = \pi_{n-1}(X)$ and not of $X$ itself. 
Therefore we can apply the above formulas for integration over $\mathcal{O}_\Lambda^{(n-1)}$ to obtain
\[
\begin{split}
    \int_{\mathcal{O}_\Lambda} e^{-\langle Y, X \rangle_F} \, d\mu_\Lambda(X) &= e^{-y_n \Tr(\Lambda)} \int_{\mathcal{O}^{(n-1)}_\Lambda} e^{y_n \Tr(Z)} e^{-\langle \pi_{n-1}(Y), Z \rangle_F} \, d\mu_\Lambda^{(n-1)}(Z) \\
        &= e^{-y_n \Tr(\Lambda)} \int_{\mathcal{O}^{(n-1)}_\Lambda} e^{y_n \Tr(Z)} e^{-\langle \pi_{n-1}(Y), Z \rangle_F} \, d\mu_\Lambda^{(n-1)}(Z) \\
        &= \int_{P_\Lambda} e^{y_n(a_1+\cdots + a_{n-1} - \Tr(\Lambda))} \left[\int_{\mathcal{O}_{\diag(a)}} e^{-\langle \pi_{n-1}(Y), Z \rangle_F} \, d\mu_{\diag(a)}(Z)\right] g_\Lambda(a) \, da.
\end{split}
\]
That is, if $\mathcal{I}_\Lambda(Y) := \int_{\mathcal{O}_\Lambda} e^{-\langle Y, X \rangle_F} \, d\mu_\Lambda(X)$ is the integral we want to compute, we have the inductive formula
\[
    \mathcal{I}_\Lambda(Y) = (n-1)!\frac{ e^{-y_n \Tr(\Lambda)}}{V_n(\eig(\Lambda))} \int_{P_\Lambda} \mathcal{I}_{\diag(a)}(\pi_{n-1}(Y)) \cdot e^{y_n(a_1+\cdots + a_{n-1})} V_{n-1}(a) \, da.
\]
Our candidate formula for $\mathcal{I}_\Lambda(Y)$ is then of course given by the desired HCIZ formula, where $\lambda$ is the vector of eigenvalues of $\Lambda$:
\[
    \mathcal{I}_\Lambda(Y) = \left(\prod_{p=1}^{n-1} p!\right) \frac{\det([e^{-y_i\lambda_j}]_{1 \leq i,j \leq n})}{\prod_{i<j} -(y_i-y_j)(\lambda_i-\lambda_j)} = \left(\prod_{p=1}^{n-1} p!\right) \frac{\det([e^{-y_i\lambda_j}]_{1 \leq i,j \leq n})}{V_n(-y) V_n(\eig(\Lambda))}.
\]
With this, one then can prove relatively straightforwardly that this expression holds inductively. 
The key idea towards this is to note that since $P_\Lambda$ is a convex scaled hypercube, the integral $\int_{P_\Lambda} f(a) \, da$ can be written as a product of integrals over each of the $n-1$ coordinates of $a \in P_\Lambda \subset \R^{n-1}$. We do this as follows, where $\lambda$ is the vector of eigenvalues of $\Lambda$ and $y' := (y_1,\ldots,y_{n-1})$ is the vector of eigenvalues of $\pi_{n-1}(Y)$ since $Y$ is diagonal:
\[
\begin{split}
    \int_{P_\Lambda} &\mathcal{I}_{\diag(a)}(\pi_{n-1}(Y)) \cdot e^{y_n(a_1+\cdots + a_{n-1})} V_{n-1}(a) \, da \\
        &= \int_{\lambda_2}^{\lambda_1} \int_{\lambda_3}^{\lambda_2} \cdots \int_{\lambda_n}^{\lambda_{n-1}} \mathcal{I}_{\diag(a)}(\pi_{n-1}(Y)) \cdot e^{y_n(a_1+\cdots + a_{n-1})} V_{n-1}(a) \, d_{a_{n-1}} \cdots da_2 \, da_1 \\
        &= \int_{\lambda_2}^{\lambda_1} \int_{\lambda_3}^{\lambda_2} \cdots \int_{\lambda_n}^{\lambda_{n-1}} \left(\prod_{p=1}^{n-2} p!\right) \frac{\det([e^{-y_i a_j}]_{1 \leq i,j \leq n-1})}{V_{n-1}(-y') V_{n-1}(a)} \cdot e^{y_n(a_1+\cdots + a_{n-1})} V_{n-1}(a) \, d_{a_{n-1}} \cdots da_2 \, da_1 \\
        &= \frac{\prod_{p=1}^{n-2} p!}{V_{n-1}(-y')} \int_{\lambda_2}^{\lambda_1} \int_{\lambda_3}^{\lambda_2} \cdots \int_{\lambda_n}^{\lambda_{n-1}} \det([e^{(y_n-y_i) a_j}]_{1 \leq i,j \leq n-1})\, d_{a_{n-1}} \cdots da_2 \, da_1 \\
\end{split}
\]
Note now that each integration is a linear operation, and that the determinant in the proposed expression for $\mathcal{I}_{\diag(a)}(\pi_{n-1}(Y))$ is multilinear in the coordinates of $a$. 
These integrals can be passed into the determinant to be applied separately to each of the rows of the input matrix. 
Combining this with the above inductive expression for $\mathcal{I}_\Lambda(Y)$ then yields the desired HCIZ formula.

\section{Classification Theorems for Lie Algebras} \label{sec:classification}

In Section \ref{sec:Lie}, we showed how a Lie algebra could be defined as the tangent space of a Lie group $G$ as the identity element $e$, equipped with a certain product-like operation called the Lie bracket. We now define the notion of a Lie algebra more abstractly, and show how this notion coincides with that of Section \ref{sec:lie_alg_via_group}. 
The power of this abstract definition is that it makes no reference to Lie groups. 
Powerful classification theorems of Lie algebras, a priori having nothing to do with groups, then extend immediately to certain important classes of Lie groups.

Before moving on, we make one caveat. 
An abstract Lie algebra $\mathfrak{g}$ is a vector space equipped with a Lie bracket which we discuss below. 
The vector space structure of $\mathfrak{g}$ requires a choice of field, and for us this field is  either $\C$ or $\R$. 
Many of the nice classification results for Lie algebras use $\C$ because they rely on the fact that $\C$ is algebraically closed. 
However, compact Lie groups, like the unitary group, have associated Lie algebra which is necessarily real. 
(Even though $\mathfrak{u}_n$ consists of skew-Hermitian matrices with complex entries, there is no way to consider of the space of skew-Hermitian matrices as a complex vector space. For example, multiplication by $i$ does not preserve $\mathfrak{u}_n$.) 
This point is not very important to our exposition, but it is still worth making to avoid certain confusions.

\subsection{Abstract Lie Algebras}

An abstract Lie algebra $\mathfrak{g}$ is a vector space equipped with a product-like operation $[\cdot, \cdot]: \mathfrak{g} \times \mathfrak{g} \to \mathfrak{g}$ called the \textbf{Lie bracket}. 
We also  often write $\ad_X(Y) = [X,Y]$, where $\ad_X$ is called the (algebra) adjoint action of the Lie algebra on itself, as discussed above. 
The Lie bracket has three defining properties, given as follows:
\begin{enumerate}
    \item \textbf{Bilinearity:} $[aX+bY,Z] = a[X,Z] + b[Y,Z]$ and $[X,bY+cZ] = b[X,Y] = c[X,Z]$.
    \item \textbf{Anti-symmetry:} $[X,Y] = -[Y,X]$.
    \item \textbf{Jacobi identity:} $[X,[Y,Z]] + [Y,[Z,X]] + [Z,[X,Y]] = 0$.
\end{enumerate}
In the case of Lie algebras associated to matrix groups, the Lie bracket is given by the commutator $[X,Y] = XY-YX$. 
It is straightforward to see that this satisfies the defining properties listed above.

\subsection{Simple, Semisimple, and Reductive Lie Algebras}

We now discuss a number of important notions regarding a Lie algebra $\mathfrak{g}$, some of which we have already seen in the above discussion. 
This  helps us in our attempt to classify the Lie algebras we hope to consider. 
Here are some of them:
\begin{itemize}
    \item \textbf{Abelian:} A Lie algebra $\mathfrak{g}$ for which $[X,Y] = 0$ for all $X,Y \in \mathfrak{g}$.
    \item \textbf{Subalgebra:} A subspace $\mathfrak{h} \subset \mathfrak{g}$ for which $[X,Y] \in \mathfrak{h}$ for all $X,Y \in \mathfrak{h}$.
    \item \textbf{Ideal:} A subalgebra $\mathfrak{h} \subset \mathfrak{g}$ for which $[X,Y] \in \mathfrak{h}$ for all $X \in \mathfrak{g}$ and $Y \in \mathfrak{h}$.
    \item \textbf{Direct sum:} A direct sum of vector spaces $\mathfrak{g} = \bigoplus_{i=1}^m \mathfrak{g}_i$ with Lie bracket given by
    \[
        [(X_1,X_2,\ldots,X_m),(Y_1,Y_2,\ldots,Y_m)] := ([X_1,Y_1],[X_2,Y_2],\ldots,[X_m,Y_m]).
    \]
    \item \textbf{Simple:} A non-abelian Lie algebra which has no nontrivial ideals.
    \item \textbf{Semisimple:} A direct sum of simple Lie algebras.
    \item \textbf{Reductive:} A direct sum of a semisimple Lie algebra and an abelian Lie algebra.
\end{itemize}
As a note, abelian Lie algebras are immediately classified by the dimension of the underlying vector space. 
The classification of semisimple and reductive Lie algebras is more complicated, and we describe it in the next section.

\subsection{Classification of Complex Reductive Lie Algebras}

One of the most important features of Lie theory is that there is a bijection between a certain class of nice Lie groups and the important class of semisimple Lie algebras. 
That is, not only can one construct a Lie algebra from such a Lie group, but there is also an inverse map from abstractly-defined Lie algebras to nice Lie groups. 
Therefore such Lie groups can be classified by classifying Lie algebras, without any reference to the fact that these algebras come from groups. 
This is a powerful fact since Lie algebras are linear-algebraic objects, and linear algebra is typically much easier than the geometry that would come with studying the groups directly.

Broadly speaking, the classification relies upon the fact that a given Lie algebra can be associated to certain \emph{discrete} combinatorial objects called Dynkin diagrams. 
By classifying the valid combinatorial objects, one then obtains a classification of the associated Lie algebras. 
We do not go through such a proof of the classification, and the interested reader can find various proofs in standard references. Instead, we simply state the classification theorem.

\begin{theorem}[Classification of complex simple Lie algebras, see \S VI.10 of \cite{knapp2013}]
    Every complex simple Lie algebra fits into one of the following categories:
    \begin{itemize}
        \item $A_n \colon \mathfrak{sl}_{n+1}$, the Lie algebra associated to $\SL_n(\C)$.
        \item $B_n \colon \mathfrak{so}_{2n+1}$, the odd-dimensional Lie algebra associated to $\SO_{2n+1}(\C)$.
        \item $C_n \colon \mathfrak{sp}_{2n}$, the Lie algebra associated to $\Sp_{2n}(\C)$.
        \item $D_n \colon \mathfrak{so}_{2n}$, the even-dimensional Lie algebra associated to $\SO_{2n}(\C)$.
        \item $E_6, E_7, E_8, F_4, G_2 :$ the so-called \textbf{exceptional Lie algebras}.
    \end{itemize}
    Note that this also classifies complex semisimple and reductive Lie algebras by taking direct sums.
\end{theorem}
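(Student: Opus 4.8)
The plan is to follow the classical route through root systems and Dynkin diagrams, which turns the classification of (finite-dimensional) complex simple Lie algebras into a finite combinatorial problem together with two reconstruction theorems. Let $\mathfrak{g}$ be a complex simple Lie algebra. First I would show that the Killing form $B(X,Y)=\Tr(\ad_X\circ\ad_Y)$ is non-degenerate: its radical is an ideal of $\mathfrak{g}$, hence trivial by simplicity (one half of Cartan's criterion). Next, choose a Cartan subalgebra $\mathfrak{h}\subset\mathfrak{g}$, i.e.\ a maximal toral subalgebra; it can be obtained as the centralizer of a regular semisimple element, and the operators $\{\ad_H : H\in\mathfrak{h}\}$ then commute and are simultaneously diagonalizable. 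This yields the root space decomposition
\[
    \mathfrak{g} = \mathfrak{h}\ \oplus\ \bigoplus_{\alpha\in\Phi}\mathfrak{g}_\alpha,\qquad \mathfrak{g}_\alpha=\{X\in\mathfrak{g} : [H,X]=\alpha(H)X \text{ for all } H\in\mathfrak{h}\},
\]
where $\Phi\subset\mathfrak{h}^*\setminus\{0\}$ is the finite set of roots.

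The second block of work is to extract the combinatorial structure of $\Phi$. Using invariance of $B$ and $\mathfrak{sl}_2$-theory — for each $\alpha$ the spaces $\mathfrak{g}_\alpha,\mathfrak{g}_{-\alpha},[\mathfrak{g}_\alpha,\mathfrak{g}_{-\alpha}]$ span a copy of $\mathfrak{sl}_2$ acting on $\mathfrak{g}$ — I would verify the standard axioms: $\dim\mathfrak{g}_\alpha=1$ for every $\alpha\in\Phi$; $\Phi$ spans $\mathfrak{h}^*$; for $\alpha,\beta\in\Phi$ one has $\langle\beta,\alpha^\vee\rangle\in\mathbb{Z}$ and $\beta-\langle\beta,\alpha^\vee\rangle\alpha\in\Phi$; and the only multiples of $\alpha$ lying in $\Phi$ are $\pm\alpha$. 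These say precisely that $\Phi$ is a reduced crystallographic root system in the real span of the roots, with the inner product induced by $B$. Choosing a set of positive roots and then the simple roots $\Delta=\{\alpha_1,\dots,\alpha_\ell\}$ produces the Cartan matrix $[\langle\alpha_i,\alpha_j^\vee\rangle]_{i,j}$ and its pictorial encoding, the Dynkin diagram; simplicity of $\mathfrak{g}$ is equivalent to connectedness of the diagram.

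The third block is the purely combinatorial classification, plus closing the loop back to Lie algebras. By elementary positive-definiteness arguments on the Cartan matrix — ruling out cycles, bounding branching, and constraining edge multiplicities and chain lengths in admissible subdiagrams — one shows that the only connected Dynkin diagrams are $A_\ell,B_\ell,C_\ell,D_\ell,E_6,E_7,E_8,F_4,G_2$. To finish one needs: the isomorphism theorem, that a simple Lie algebra is determined up to isomorphism by its root system (matching Chevalley generators across the two algebras); and the existence theorem, that each such root system actually arises from a finite-dimensional simple Lie algebra — the classical families realized concretely as $\mathfrak{sl}_{\ell+1}$, $\mathfrak{so}_{2\ell+1}$, $\mathfrak{sp}_{2\ell}$, $\mathfrak{so}_{2\ell}$ (whose root data one computes directly to read off the diagrams $A,B,C,D$), and the five exceptional algebras built via Serre's presentation. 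Taking direct sums then gives the semisimple and reductive cases.

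The main obstacle, I expect, is twofold. The first genuinely hard input is the structural step that each root space is one-dimensional together with the integrality and crystallographic constraints; this is where the real representation-theoretic content — the finite-dimensional representation theory of $\mathfrak{sl}_2$ — is actually used. The second is Serre's existence theorem: verifying that the Lie algebra presented by the Chevalley--Serre generators and relations attached to a Cartan matrix is finite-dimensional (and simple when the diagram is connected), which is the step that certifies the combinatorial list is neither too long nor too short. By comparison, the diagram classification itself, though intricate, is elementary once the root-system axioms are established.
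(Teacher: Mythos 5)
The paper does not actually prove this theorem: it explicitly states ``We do not go through such a proof of the classification, and the interested reader can find various proofs in standard references. Instead, we simply state the classification theorem,'' citing \S VI.10 of \cite{knapp2013}. It does, however, give a one-sentence description of the strategy (associate to each Lie algebra a Dynkin diagram and classify the diagrams), and your outline is a correct and standard fleshing-out of exactly that route: non-degeneracy of the Killing form from simplicity via Cartan's criterion, the root-space decomposition over a Cartan subalgebra, extraction of a reduced crystallographic root system via $\mathfrak{sl}_2$-theory (one-dimensionality of root spaces, integrality, the string property), the combinatorial classification of connected Dynkin diagrams, and the isomorphism and existence (Serre) theorems that close the loop. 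You also correctly flag the two genuinely hard inputs, the $\mathfrak{sl}_2$-based structural results and Serre's existence theorem. So your proposal is a sound sketch that matches the approach the paper alludes to but does not carry out.
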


\noindent
We do not go into any more detail on the exceptional Lie algebras mentioned in the classification theorem. 
And as a final note, the notation used for the here ($A_n$, $B_n$, etc.) comes from the notation used to represent the \textbf{Dynkin diagrams} associated to the respective Lie algebras. 
A Dynkin diagram gives a way to encode the information of a Lie algebra in a purely combinatorial way, and classifying Dynkin diagrams is at the heart of most proofs of the above classification. 
We do not discuss Dynkin diagrams any further here.

\subsection{Classification of Compact Lie Algebras}

Although we have classified the complex semisimple Lie algebras above, this is not quite enough for our purposes. 
We actually want to classify Lie algebras associated to compact Lie groups. 
And as mentioned above, compact Lie groups must be considered as real manifolds; they cannot in general be given the structure of a complex manifold. 
Because of this, their associated Lie algebras are real instead of complex, and so they do not fit into the above classification.

That said, recall that we defined a compact Lie algebra to be a Lie algebra which is derived from a compact Lie group. 
Although this is an extrinsic definition, compactness of a Lie algebra $\mathfrak{g}$ can be defined intrinsically via properties of the Killing form of $\mathfrak{g}$, without any reference to an associated Lie group. 
That is, a compact Lie algebra is a Lie algebra from which the Killing form is negative semidefinite.

We now extend the above classification to cover compact Lie algebras. 
This classification requires the notion of a \textbf{real form} of a complex Lie algebra, which is a real Lie algebra whose complexification is the given complex Lie algebra. 
In fact, every semisimple complex Lie algebra has a unique compact real form, and the converse is also true, as stated in the following result.

\begin{theorem}[Classification of compact simple Lie algebras, see \S VI.10 of \cite{knapp2013}]
    Every compact simple Lie algebra fits into one of the following categories:
    \begin{itemize}
        \item $A_n \colon \mathfrak{su}_{n+1}$, the Lie algebra associated to $\SU(n+1)$.
        \item $B_n \colon \mathfrak{so}_{2n+1}$, the odd-dimensional Lie algebra associated to $\SO_{2n+1}(\R)$.
        \item $C_n \colon \mathfrak{usp}_n$, the Lie algebra associated to $\USp(n)$.
        \item $D_n \colon \mathfrak{so}_{2n}$, the even-dimensional Lie algebra associated to $\SO_{2n}(\R)$.
        \item $E_6, E_7, E_8, F_4, G_2 :$ the compact forms of the exceptional Lie algebras.
    \end{itemize}
    Note that this also classifies compact semisimple and reductive Lie algebras by taking direct sums.
\end{theorem}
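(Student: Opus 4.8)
The plan is to reduce this classification to the classification of complex simple Lie algebras stated just above, via complexification together with the existence and uniqueness of compact real forms. First I would take a compact simple Lie algebra $\mathfrak{g}$ and pass to its complexification $\mathfrak{g}_\C := \mathfrak{g} \otimes_\R \C$, with Lie bracket extended $\C$-bilinearly. Since $\mathfrak{g}$ is simple it is in particular semisimple, so its Killing form is negative definite; extending this form $\C$-bilinearly shows it remains nondegenerate on $\mathfrak{g}_\C$, hence $\mathfrak{g}_\C$ is complex semisimple.

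The crucial step is to upgrade ``semisimple'' to ``simple.'' If $\mathfrak{g}_\C$ had a proper nonzero complex ideal, then (by the standard dichotomy for real simple Lie algebras) either $\mathfrak{g}$ would itself split as a nontrivial direct sum of ideals --- contradicting simplicity of $\mathfrak{g}$ --- or $\mathfrak{g}$ would carry an $\ad$-invariant complex structure, i.e., $\mathfrak{g}$ would be a complex simple Lie algebra regarded as a real one. The second alternative is incompatible with compactness: a complex simple Lie algebra $\mathfrak{s}$, viewed as a real Lie algebra, has Killing form equal to twice the real part of its (nondegenerate) complex Killing form $B_\C$, and for any $X$ with $B_\C(X,X)\neq 0$ one has $B_\R(X,X)$ and $B_\R(iX,iX)$ of opposite sign, so $B_\R$ is indefinite --- whereas the Killing form of a compact Lie algebra is negative semidefinite. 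Hence $\mathfrak{g}_\C$ is complex simple.

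Next I would apply the classification theorem for complex simple Lie algebras to $\mathfrak{g}_\C$: it is isomorphic to exactly one of $\mathfrak{sl}_{n+1}(\C)$, $\mathfrak{so}_{2n+1}(\C)$, $\mathfrak{sp}_{2n}(\C)$, $\mathfrak{so}_{2n}(\C)$, or one of the five exceptional complex Lie algebras. By the uniqueness of the compact real form of a complex semisimple Lie algebra (quoted just before the theorem), $\mathfrak{g}$ is isomorphic to the compact real form of $\mathfrak{g}_\C$, so it only remains to identify these compact real forms concretely. For each classical family I would exhibit the candidate and check two things: that its Killing form is negative definite (so it is a compact, and in fact compact simple, Lie algebra) and that its complexification is the correct complex simple Lie algebra; uniqueness of the compact real form then pins down the identification. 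Concretely, $\mathfrak{su}_{n+1}$, the traceless skew-Hermitian matrices, complexifies to $\mathfrak{sl}_{n+1}(\C)$ because every traceless complex matrix is uniquely a traceless skew-Hermitian matrix plus $i$ times one; $\mathfrak{so}_m(\R)$, the real skew-symmetric matrices, complexifies to $\mathfrak{so}_m(\C)$ for $m = 2n+1$ and $m = 2n$; and $\mathfrak{usp}_n := \mathfrak{sp}_{2n}(\C) \cap \mathfrak{u}_{2n}$, the compact symplectic Lie algebra, complexifies to $\mathfrak{sp}_{2n}(\C)$. For the exceptional types one simply invokes the existence of the unique compact real forms of $E_6,E_7,E_8,F_4,G_2$. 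Finally, the statement for compact semisimple (and reductive) Lie algebras follows because both complexification and passage to the compact real form respect direct sums, so the simple-case bijection extends termwise; along the way one records the usual low-rank coincidences such as $B_1\cong C_1\cong A_1$, $B_2\cong C_2$, $D_2\cong A_1\oplus A_1$, and $D_3\cong A_3$, which explain the overlaps among the families.

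The main obstacle is the step showing that the complexification of a compact \emph{simple} Lie algebra is again \emph{simple} (and not merely semisimple) --- equivalently, ruling out an invariant complex structure on $\mathfrak{g}$ --- which is precisely where negative-definiteness of the Killing form on a compact Lie algebra must be used in an essential way. Everything else is either a direct appeal to the previously stated results (the complex classification and the uniqueness of compact real forms) or routine linear algebra in identifying the classical compact real forms.
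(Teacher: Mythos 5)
The paper does not prove this theorem (it cites Knapp \S VI.10), but the surrounding text explicitly frames the result via complexification and the existence and uniqueness of compact real forms, which is precisely the strategy you use. Your sketch is correct and matches that intended route, including the key (and correctly identified) subtle step of showing $\mathfrak{g}_\C$ is \emph{simple} rather than merely semisimple by ruling out an invariant complex structure via negative-definiteness of the Killing form.
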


Note that some of the notation in the complex and compact classifications results is overloaded. 
These are distinguished by the fact that the underlying vector spaces are complex and real vector spaces respectively, and this distinction is often clear from context.

The final point to make to classify compact Lie algebras completely is to handle the case of compact Lie algebras which are not reductive.
While there are many complex Lie algebras which are not reductive, the opposite is true in the compact case.

\begin{proposition}[Cor. 4.25 of \cite{knapp2013}]
    Every compact Lie algebra is reductive.
\end{proposition}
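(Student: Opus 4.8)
The plan is to work from the definition of a compact Lie algebra as the Lie algebra $\mathfrak{g} = T_e G$ of a compact Lie group $G$, and to manufacture an $\Ad$-invariant inner product on $\mathfrak{g}$ by averaging against the Haar measure. Concretely, pick any inner product $\langle\cdot,\cdot\rangle_0$ on the finite-dimensional real vector space $\mathfrak{g}$ and define
\[
    \langle X,Y\rangle := \int_G \langle \Ad_g(X), \Ad_g(Y)\rangle_0 \, d\mu(g),
\]
where $\mu$ is the Haar measure on $G$. Since $G$ is compact this integral is finite, and since the integrand is positive definite for every $g$ the result is again a genuine inner product; the translation-invariance of $\mu$ makes it $\Ad$-invariant, i.e.\ $\langle \Ad_h(X), \Ad_h(Y)\rangle = \langle X,Y\rangle$ for all $h\in G$.

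Next I would differentiate this invariance along a path $g(t)$ with $g(0)=e$ and $\tfrac{d}{dt}\big|_{t=0} g(t) = Z$, exactly as in the Killing-form computations surrounding Section~\ref{sec:unitary_group}, to obtain
\[
    \langle \ad_Z(X), Y\rangle + \langle X, \ad_Z(Y)\rangle = 0 \qquad \text{for all } X,Y,Z \in \mathfrak{g}.
\]
Thus every operator $\ad_Z$ is skew-adjoint with respect to $\langle\cdot,\cdot\rangle$. This is the only place the ambient group enters; everything afterward is linear algebra internal to $\mathfrak{g}$.

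Finally I would run the standard complete-reducibility argument. If $\mathfrak{h}\subseteq\mathfrak{g}$ is an ideal, then its orthogonal complement $\mathfrak{h}^\perp$ is also an ideal: for $X\in\mathfrak{h}^\perp$, $Y\in\mathfrak{h}$, $Z\in\mathfrak{g}$ we get $\langle[Z,X],Y\rangle = -\langle X,[Z,Y]\rangle = 0$ since $[Z,Y]\in\mathfrak{h}$, and $\mathfrak{g}=\mathfrak{h}\oplus\mathfrak{h}^\perp$ as a vector space. Iterating this splitting gives an orthogonal direct-sum decomposition $\mathfrak{g}=\mathfrak{g}_1\oplus\cdots\oplus\mathfrak{g}_k$ into minimal nonzero ideals; distinct summands commute because $[\mathfrak{g}_i,\mathfrak{g}_j]\subseteq\mathfrak{g}_i\cap\mathfrak{g}_j=0$ for $i\neq j$, so this is a direct sum of Lie algebras. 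Now classify the summands: any ideal of a fixed $\mathfrak{g}_i$ is automatically an ideal of $\mathfrak{g}$ (the other summands commute with $\mathfrak{g}_i$), so minimality of $\mathfrak{g}_i$ forces it to have no proper nonzero ideal; the same remark rules out $\mathfrak{g}_i$ being abelian of dimension $\geq 2$, since a line inside it would be an ideal of $\mathfrak{g}$. Hence each $\mathfrak{g}_i$ is either one-dimensional (abelian) or simple. Gathering the one-dimensional summands into an abelian ideal $\mathfrak{a}$ and the simple summands into $\mathfrak{s}$ yields $\mathfrak{g}=\mathfrak{s}\oplus\mathfrak{a}$ with $\mathfrak{s}$ semisimple and $\mathfrak{a}$ abelian, which is precisely the statement that $\mathfrak{g}$ is reductive.

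I expect the main obstacle to be bookkeeping rather than conceptual depth: one must carefully check that orthogonal complements of ideals are ideals, that distinct orthogonal summands bracket to zero, and carry out the small case analysis identifying each minimal ideal as simple or one-dimensional abelian. The one substantive ingredient — existence of the Haar measure on the compact group $G$ — has already been established. (If one instead insists on starting from an intrinsic Killing-form characterization, the difficulty migrates to constructing the invariant inner product without reference to a group, which is exactly what the averaging trick above sidesteps.)
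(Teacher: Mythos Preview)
Your argument is correct and is essentially the standard proof one finds in Knapp: average an arbitrary inner product over the Haar measure to get an $\Ad$-invariant one, differentiate to obtain $\ad$-skewness, and then run complete reducibility on ideals to peel off simple and one-dimensional abelian summands. Each step you outline is sound, including the small observation that an abelian minimal ideal must be one-dimensional.

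Note, however, that the paper itself does not supply a proof of this proposition at all; it merely states the result and cites \cite{knapp2013}. So there is nothing in the paper to compare your argument against beyond the citation. Your write-up is in fact more detailed than what the paper provides, and it aligns with the approach in the cited reference (indeed the paper's Remark~\ref{rem:Killing_u_n} points to exactly this averaging construction from Knapp, Prop.~4.24).
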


\noindent
This last result then completes the classification of compact Lie algebras.

\section{Summary and Conclusion}
This article provides both the motivation and the basics to leverage symmetries in the design of efficient algorithms.
In the first part, we discussed two basic and important problems. 
The first was an optimization characterization of the minimum eigenvalue problem (Theorem \ref{thm:eigenvalue}). 
For an $n \times n$ Hermitian matrix, we showed that
\[
    \min_{X \in \mathcal{P}_1} \langle A, X \rangle_F = \min_{U \in \U(n)} \langle A, Ue_1e_1^\top U^* \rangle_F = \lambda_1.
\]
This is a standard result, but we demonstrated how it follows from the Schur-Horn theorem. 
The Schur-Horn theorem says that the projection onto the diagonal of all matrices in the conjugation orbit of a Hermitian matrix yields a convex polytope. 
In the case of the orbit of $e_1e_1^\top$ considered above, this polytope is precisely the standard simplex corresponding to the convex hull of $\{e_{\sigma(1)} : \sigma \in S_n\}$. 
We further saw that this polytope is more generally given by a similar expression; that is, given a Hermitian matrix $A = \diag(v)$, the polytope is given by $\{\sigma \cdot v : \sigma \in S_n\}$ where $\sigma \cdot v$ acts by permuting the entries of $v$.

The second problem we considered was that of sampling from a certain exponential density on the complex unit sphere. 
We showed that this problem is connected to the first one: computing integrals and sampling from the unit sphere is essentially equivalent to doing the same things on $\mathcal{P}_1$, the conjugation orbit of $e_1e_1^\top$ by the unitary group. 
We then discussed the partition function associated to such a density, which is given by the integral of an exponential function over the orbit $\mathcal{P}_1$. 
Computing such partitions functions is often the first step in constructing a sampling algorithm, and usually the complexity of this computation determines the complexity of the sampling algorithm.

In this case of the partition function being an integral over a nonconvex manifold, it is not immediately clear how to compute such a partition function, and even less so if one wants to do it efficiently. 
In spite of this, we were able to show that not only could this partition function be efficiently computed in this case, but it in fact has a closed-form formula (Theorem \ref{thm:partition}):
\[
    \frac{1}{(n-1)!} \int_{\mathcal{P}_1} e^{-\langle Y, X \rangle_F} d\mu_1(X) = \sum_{i=1}^n \frac{e^{-\lambda_i}}{\prod_{j \neq i} (\lambda_j-\lambda_i)},
\]
where $\lambda_i$ are the eigenvalues of $Y$. 
We gave a proof of this fact, which used the key fact that computing certain integrals over the nonconvex manifold $\mathcal{P}_1$ is equivalent to computing similar integrals over the standard simplex.

In the second part, we focused on understanding very important groups which arise when there are continuous symmetries. 
These important groups were the Lie groups, which are special manifolds that have a compatible group structure. 
A crucial fact regarding Lie groups is that the tangent space at the identity recovers a large amount of information about the Lie group. 
This tangent space is a vector space equipped with a product-like operation coming from the group product, and it is called the Lie algebra.
The Lie algebra of a Lie group is linear-algebraic in nature, while the Lie group is more geometric. 
Thus the Lie algebra gives access to more linear-algebraic methodologies and algorithms.

The symmetries of a Lie group extend to the Lie algebra, and one way this is formalized is in the notion of the adjoint action of a Lie group on its Lie algebra. 
In the case of matrix groups, this adjoint action is precisely given by the natural action of matrix conjugation. 
This action allows Lie algebras to be viewed as a representation of the corresponding Lie group - called the ``adjoint representation.'' 
While we did not discuss it here, one can study more general representations of the Lie group via connections to the Lie algebra. 

The orbits of this adjoint action on the Lie algebra, called the adjoint orbits, are important Lie theoretic objects that we then studied further. 
These orbits are nonconvex manifolds, but there are interesting convex bodies associated to these objects. 
Convex hulls of adjoint orbits are called orbitopes \cite{sanyal2011}, and these are sometimes spectrahedral \cite{Kobertphdthesis}. But more interestingly, we saw that the image of a given adjoint orbit under the projection map to a Cartan subalgebra is a convex polytope, by the Kostant convexity theorem. 
In the case of the unitary group $\U(n)$, this yields the Schur-Horn theorem, which says that the set of all diagonal vectors of Hermitian matrices with given eigenvalues form a convex polytope. 
We then explored how certain optimization problems on nonconvex adjoint orbits could be reduced to optimization problems on the corresponding convex polytope. 
And, we also briefly mentioned that moment maps arise from symplectic structure via the KKS form on the adjoint orbits, which allows for Hamiltonian dynamics on these orbits.

Finally, we discussed the HC and HCIZ formulas for certain exponential integrals over compact Lie groups. 
These important results give formulas for partition functions of certain measures over compact Lie groups and their adjoint orbits. 
The key surprising fact regarding these formulas is that they serve to reduce integrals over a group to a finite sum, or even a few determinants. 
That is, they reduce continuous symmetries to discrete symmetries.

To conclude,  computational problems that arise in nonconvex optimization and sampling can benefit when viewed through the lens of symmetries.
This point of view has seen a recent surge and found several  applications  in  areas such as  complexity theory, fast graph algorithms, statistics,  machine learning, quantum inference, and differential privacy (see, e.g., \cite{GargGOW16,BurgisserFGOWW19,LeakeV20,LMV2021} and the references therein).
However, the full potential of  symmetries in optimization and sampling remains to be explored.

\section*{Acknowledgments}
The authors would like to thanks Anay Mehrotra and Yikai Wu for useful comments. NV would like to acknowledge the support of  NSF CCF-1908347.

\bibliographystyle{alpha}
\bibliography{lie}

\end{document}